\numberwithin{equation}{section}
\numberwithin{equation}{section}
\definecolor{darkgreen}{cmyk}{0.6,0,0.8,0}
\newcommand{\Green}[1]{{\color{darkgreen}  {#1}}}
\def\cmark{\Green{\checkmark}}
\DeclareMathOperator{\E}{\mathbb{E}}
\DeclareMathOperator{\R}{\mathbb{R}}
\DeclareMathOperator{\Pb}{\mathbb{P}}
\DeclareMathOperator{\Var}{Var}
\DeclareMathOperator{\card}{card}
\DeclareMathOperator{\polylog}{polylog}
\DeclareMathOperator{\poly}{poly}
\DeclareMathOperator{\argmin}{argmin}
\DeclareMathOperator{\nnz}{nnz}
\DeclareMathOperator{\augsym}{augsym}
\DeclareMathOperator{\aug}{aug}
\DeclareMathOperator{\Id}{Id}
\DeclareMathOperator{\cov}{Cov}
\DeclareMathOperator{\spec}{spec}
\newcommand{\cN}{\mathcal{N}}
\DeclarePairedDelimiter{\norm}{\lVert}{\rVert}
\DeclarePairedDelimiter{\ip}{\langle}{\rangle}
\newtheorem{theorem}{Theorem}[section]
\newtheorem{lemma}[theorem]{Lemma}
\theoremstyle{remark}
\newtheorem{remark}[theorem]{Remark}
\theoremstyle{definition}
\newtheorem{definition}[theorem]{Definition}
\newtheorem*{def:OSE}{Definition \ref{def:OSE}}
\begin{document}

\title[Optimal Embedding Dimension for Sparse Subspace Embeddings]{
Optimal Embedding Dimension for Sparse\\Subspace Embeddings
}
\author{Shabarish Chenakkod}
\author{Micha{\l} Derezi\'nski}
\author{Xiaoyu Dong}
\author{Mark Rudelson}
 \thanks{Partially supported by DMS 2054408.}
\address{University of Michigan, Ann Arbor, MI}
 \email{shabari@umich.edu, derezin@umich.edu, xydong@umich.edu, rudelson@umich.edu}

 \begin{abstract}
  A random $m\times n$ matrix $S$ is an oblivious subspace embedding (OSE) with parameters $\epsilon>0$, $\delta\in(0,1/3)$ and $d\leq m\leq n$, if
  for any $d$-dimensional subspace $W\subseteq \R^n$,
  $$\Pb\big(\,\forall_{x\in W}\ (1+\epsilon)^{-1}\|x\|\leq
  \|Sx\|\leq (1+\epsilon)\|x\|\,\big)\geq 1-\delta.$$
 It is known that the embedding dimension of an OSE must satisfy $m\geq d$, and for
 any
 $\theta > 0$, a Gaussian embedding matrix with $m\geq (1+\theta) d$ is an OSE with $\epsilon = O_\theta(1)$. However, such optimal embedding dimension is not known for other embeddings. Of particular interest are sparse OSEs, having $s\ll m$ non-zeros per column (Clarkson and Woodruff, STOC 2013), with applications to problems such as least squares regression and low-rank approximation.

 We show that, given any $\theta > 0$, an $m\times n$ random matrix $S$ with $m\geq (1+\theta)d$ consisting of randomly sparsified $\pm1/\sqrt s$ entries and having $s= O(\log^4(d))$ non-zeros per column, is an oblivious subspace embedding with $\epsilon = O_{\theta}(1)$.
 Our result addresses the main open question posed by Nelson and Nguyen (FOCS 2013), who conjectured that sparse OSEs can achieve $m=O(d)$ embedding dimension, and it improves on $m=O(d\log(d))$ shown by Cohen (SODA 2016).
 We use this to construct the first oblivious subspace embedding with $O(d)$ embedding dimension that can be applied faster than current matrix multiplication time, and to obtain an optimal single-pass algorithm for least squares regression.
 We further extend our results to Leverage Score Sparsification (LESS), which is a recently introduced non-oblivious embedding technique. We use LESS to construct the first subspace embedding with low distortion $\epsilon=o(1)$ and optimal embedding dimension $m=O(d/\epsilon^2)$ that can be applied in current matrix multiplication time, addressing a question posed by Cherapanamjeri, Silwal, Woodruff and Zhou (SODA 2023).

Our analysis builds on recent advances in universality theory for random matrices, to overcome the limitations of existing approaches such as those based on matrix concentration inequalities. In the process, we establish new bounds on the extreme singular values for a class of nearly-square random matrices that arise from applying an $m\times n$ sparse OSE matrix to an $n\times d$ isometric embedding matrix, which may be of independent interest. To maximally leverage these results, we introduce new non-uniformly sparsified embedding constructions which, in particular, reduce the random bit complexity of our OSE to $\polylog(n)$.
\end{abstract}
 
\maketitle

\newpage

\section{Introduction}
\label{s:intro}
Since their first introduction by Sarl\'os \cite{sarlos2006improved}, subspace embeddings have been used as a key dimensionality reduction technique in designing fast approximate randomized algorithms for numerical linear algebra problems, including least squares regression, $l_p$ regression, low-rank approximation, and approximating leverage scores, among others \cite{rokhlin2008fast,clarkson2009numerical,meng2013low,clarkson2013low,clarkson2016fast,drineas2012fast,wang2022tight}; we refer to the surveys \cite{halko2011finding,woodruff2014sketching,drineas2016randnla,martinsson2020randomized,derezinski2021determinantal,randlapack_book} for an overview. The subspace embedding property can be viewed as an extension of the classical Johnson-Lindenstrauss (JL) lemma \cite{johnson1984extensions}, which provides a transformation that reduces the dimensionality of a finite set of $n$-dimensional vectors while preserving their pairwise distances (i.e., an embedding). Here, instead of a finite set, we consider a $d$-dimensional subspace of $\R^n$, where $d\ll n$. One way to define such a transformation is via an $m\times n$ random matrix $S$, where $m\ll n$ is the \emph{embedding dimension}. Remarkably, when the embedding dimension $m$ and the distribution of $S$ are chosen correctly, then matrix $S$ can provide an embedding for any $d$-dimensional subspace $W$ with high probability. Such a distribution of $S$ is \emph{oblivious} to the choice of the subspace. Since any $d$-dimensional subspace of $\R^n$ can be represented as the range of an $n\times d$ matrix $U$ with orthonormal columns, we arrive at the following~definition.

\begin{definition}\label{def:OSE}
A random $m\times n$ matrix $S$ is an $(\epsilon,\delta)$-\textit{subspace embedding} (SE) for an $n\times d$ matrix $U$ with orthonormal columns, where $\epsilon>0$, $\delta\in(0,1/3)$ and $d\leq m\leq n$, if
  \begin{align*}
    \Pb\big(\,\forall_{x\in \R^d}\ 
    (1+\epsilon)^{-1}\|x\|\leq\|SUx\|\leq(1+\epsilon)\|x\|\,\big)\geq 1-\delta.
  \end{align*}
  If $S$ is an $(\epsilon,\delta)$-SE for all such $U$, then it is an $(\epsilon,\delta,d)$-\textit{oblivious subspace embedding} (OSE). 
\end{definition}

The embedding dimension of any OSE must satisfy $m\geq d$, so ideally we would like an embedding with $m$ as close to $d$ as possible, while making sure that the distortion $\epsilon$ is not too large. For many applications, a constant distortion, i.e., $\epsilon = O(1)$, is sufficient, while for some, one aims for low-distortion embeddings with $\epsilon\ll 1$. 

One of the most classical families of OSE distributions are Gaussian matrices $S$ with i.i.d.~entries $s_{i,j}\sim \cN(0,1/m)$. Thanks to classical results on the spectral distribution of Gaussian matrices \cite{rudelson2010non} combined with their rotational invariance, there is a sharp characterization of the distortion factor $\epsilon$ for Gaussian subspace embeddings, which implies that the embedding dimension can be arbitrarily close to $d$, i.e., $m=(1+\theta)d$ for any constant $\theta > 0$, while ensuring the OSE property with $\epsilon= O(1)$ and $\delta =  \exp(-\Omega(d))$, where the big-O notation hides the dependence on $\theta$. These guarantees have been partly extended, although only with a sub-optimal constant $\theta = O(1)$, to a broader class of random matrices that satisfy the Johnson-Lindenstrauss property, including dense subgaussian embeddings such as matrices with i.i.d.~random sign entries scaled by $1/\sqrt m$.

Dense Gaussian and subgaussian embeddings are too expensive for many applications, due to the high cost of dense matrix multiplication.  
One of the ways of addressing this, as proposed by Clarkson and Woodruff \cite{clarkson2013low}, is to use very sparse random sign matrices, where the sparsity is distributed uniformly so that there are $s\ll m$ non-zero entries per column of $S$ (we refer to $s$ as the column-sparsity of $S$). 
Remarkably, choosing column-sparsity $s=1$ (which is the minimum necessary sparsity for any OSE \cite{meng2013low}) is already sufficient to obtain a constant distortion OSE, but only if we increase the embedding dimension to $m=O(d^2)$. On the other hand, Nelson and Nguyen \cite{nelson2013osnap}, along with follow up works \cite{bourgain2015toward}, showed that if we allow $s=\polylog(d)$, then we can get an OSE with $m= d\polylog(d)$. This was later improved by Cohen \cite{cohen2016nearly} to $m=O(d\log(d))$ with column-sparsity $s= O(\log(d))$. Nevertheless, the embedding dimension of sparse OSEs remains sub-optimal, not just by a constant, but by an $O(\log(d))$ factor, due to a fundamental limitation of the matrix Chernoff analysis employed by \cite{cohen2016nearly}. Thus, we arrive at the central question of this work:

\bigskip

  \emph{What is the optimal embedding dimension for sparse oblivious subspace embeddings?}

\bigskip
  
This question is essentially the main open question posed by Nelson and Nguyen \cite[Conjecture 14]{nelson2013osnap}: They conjectured that a sparse random sign matrix with $s=O(\log(d))$ non-zeros per column achieves a constant distortion OSE with embedding dimension $m=O(d)$, i.e., within a constant factor of the optimum. We go one step further and ask whether a sparse OSE can recover the \emph{optimal} embedding dimension, i.e., $m=(1+\theta)d$ for any $\theta>0$, achieved by Gaussian embeddings. Note that this version of the question is open for \emph{any} sparsity $s$ and any distortion $\epsilon$, even including dense random sign matrices (i.e., $s=m$).

\subsection{Main results}

In our main result, we show that embedding matrices with column-sparsity $s$ polylogarithmic in $d$ can recover the optimal Gaussian embedding dimension $m=(1+\theta)d$, while achieving constant distortion $\epsilon$. The below result applies to several standard sparse embedding constructions, including a construction considered by Nelson and Nguyen (among others), where, we split each column of $S$ into $s$ sub-columns and sample a single non-zero entry in each sub-column, assigning a random $\pm1/\sqrt s$ to each of those entries.

\begin{theorem}[Sparse OSEs; informal Theorem \ref{osngeneral}]\label{t:direct}
Given any constant $\theta>0$, an $m\times n$ sparse embedding matrix $S$ with $n\geq m\geq (1+\theta)d$ and $s= O(\log^4(d))$ non-zeros per column is an oblivious subspace embedding with distortion $\epsilon = O(1)$. Moreover, given any $\epsilon,\delta$, it suffices to use $s=O(\log^4(d/\delta)/\epsilon^6)$ to get an $(\epsilon,\delta)$-OSE with  $m = O((d+\log1/\delta)/\epsilon^2)$.
\end{theorem}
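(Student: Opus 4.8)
The plan is to establish the subspace embedding property by showing that the extreme singular values of $SU$ are close to $1$ with high probability, uniformly over all orthonormal $U$. The equivalence is standard: $S$ is an $(\epsilon,\delta)$-SE for $U$ if and only if $\|(SU)^\top(SU)-\Id\|_{\mathrm{op}}\le 1-(1+\epsilon)^{-2}$, which follows (up to adjusting constants) from control on $\sigma_{\min}(SU)$ and $\sigma_{\max}(SU)$. So the task reduces to proving $\sigma_{\max}(SU)\le 1+O(\epsilon)$ and $\sigma_{\min}(SU)\ge 1-O(\epsilon)$ with probability $\ge 1-\delta$, where $SU$ is an $m\times d$ random matrix with $m=(1+\theta)d$ — i.e.\ a nearly-square random matrix. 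This is exactly the object the abstract promises new bounds for, so the core of the argument is the singular value estimate for that class of matrices, which I would invoke as the main technical input.

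I would carry this out in the following steps. \emph{Step 1: Reduction to a fixed worst-case $U$ and to singular values.} Since the claimed OSE is oblivious, it suffices (after the reduction above) to bound $\|(SU)^\top(SU)-\Id\|$ for an arbitrary fixed $U$ with orthonormal columns; randomness is only in $S$. \emph{Step 2: Universality.} Here I would use the universality machinery alluded to in the abstract to compare the spectrum of $(SU)^\top(SU)$ to that of the corresponding Gaussian model $(GU)^\top(GU)$, where $G$ has i.i.d.\ $\cN(0,1/m)$ entries; for Gaussian $G$, rotational invariance makes $GU$ itself an $m\times d$ Gaussian matrix, whose extreme singular values concentrate around $1\pm\sqrt{d/m}=1\pm\sqrt{1/(1+\theta)}$ with failure probability $\exp(-\Omega(d))$ by the classical Bai–Yin / Davidson–Szarek-type bounds referenced in the text. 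The universality step is what lets the polylogarithmic column-sparsity replace full Gaussianity; it requires that the fourth moments of the entries of $SU$ be controlled, which is where $s=O(\log^4 d)$ (resp.\ $s=O(\log^4(d/\delta)/\epsilon^6)$) enters — the sparsity must be large enough that no single coordinate of $U$'s rows carries too much mass after sparsification, i.e.\ a delocalization / incoherence hypothesis on $SU$'s rows must be verified, typically split into a "compressible" and "incompressible" analysis of the unit sphere in $\R^d$. \emph{Step 3: Distortion bookkeeping.} Translating $1\pm\sqrt{1/(1+\theta)}$ back through the SE definition gives distortion $\epsilon=O_\theta(1)$; for the quantitative second statement, one tracks how the number of non-zeros $s$ and the overdetermination ratio $m/d$ trade off against $\epsilon$ and $\delta$, yielding $m=O((d+\log(1/\delta))/\epsilon^2)$ and $s=O(\log^4(d/\delta)/\epsilon^6)$.

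The main obstacle — and where essentially all the work lies — is Step 2: proving a universality statement sharp enough to hold at embedding dimension $m=(1+\theta)d$ rather than $m=O(d\log d)$. Matrix Chernoff / matrix concentration arguments (as in Cohen's analysis) inherently lose a $\log d$ factor because they bound the operator norm via a union over the spectrum's "mass" and cannot see the cancellations that make the near-square Gaussian matrix well-conditioned; so one genuinely needs the random-matrix-theoretic route, controlling $\sigma_{\min}(SU)$ from below on the incompressible vectors via small-ball / anti-concentration estimates and on the compressible vectors via a net argument, while simultaneously keeping the failure probability at $\delta$ (not merely $1/3$) and the sparsity at $\polylog$. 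The delicate point is the lower tail $\sigma_{\min}(SU)\ge 1-O(\epsilon)$: for a sparse matrix, rows are themselves sparse, so the usual dense-subgaussian small-ball bounds must be replaced by arguments that exploit the $s$-fold averaging structure of each column, and this is precisely the "new bounds on the extreme singular values for nearly-square random matrices" the abstract flags as of independent interest. I expect the compressible-vector / net part to be routine once the right incoherence parameters are fixed, and the incompressible lower bound, together with the moment-matching needed to invoke universality, to be the crux.
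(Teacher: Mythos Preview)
Your high-level framework is right --- reduce to controlling the extreme singular values of $SU$, invoke universality to compare with the Gaussian model, then use the sharp Gaussian singular value bounds --- but Step~2 as you describe it conflates two different random-matrix techniques and would not give the stated result.

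The paper does not use a compressible/incompressible decomposition, small-ball probability, anti-concentration, or a net argument anywhere. Instead, it applies the Brailovskaya--Van Handel universality theorem (Lemma~\ref{thm:brailovskayavanhandel}) as a black box: for a sum $X=Z_0+\sum_i Z_i$ of independent symmetric random matrices, this gives a high-probability bound on the Hausdorff distance $d_H(\spec(X),\spec(G))$ in terms of three parameters $\sigma(X)$, $\sigma_*(X)$, $R(X)$ --- two second-moment quantities and an $L^\infty$ bound on the summands, not fourth moments. The paper computes these for a symmetrized and augmented version $\augsym(SU,\lambda)$ of $SU$, obtaining $\sigma\le\sqrt{pm}$, $\sigma_*\le 2\sqrt{p}$, $R\le 1$ (Lemma~\ref{mapa}). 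The $\log^4$ sparsity then falls out mechanically: the dominant error term in Lemma~\ref{thm:brailovskayavanhandel} is $R^{1/3}\sigma^{2/3}t^{2/3}$ with $t=\log(d/\delta)$, and requiring this to be at most $c\,\epsilon\sqrt{pm}$ forces $pm\ge C(\log(d/\delta))^4/\epsilon^6$, i.e.\ $s=pm$ non-zeros per column of exactly that order.

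There is also a technical wrinkle your outline misses. Naive symmetrization of $SU$ (placing $SU$ and $(SU)^\top$ on the off-diagonal blocks) forces $0$ into the spectrum, so Hausdorff closeness of spectra alone cannot separate $s_{\min}(SU)$ from zero. The paper handles this with the $\augsym$ construction, which inserts a $2\lambda\cdot\Id$ block to shift the bottom eigenvalue away from zero; one then chooses $\lambda$ comparable to the universality error and Lemma~\ref{connectsing} does the bookkeeping to extract singular-value bounds for $SU$. Your proposed small-ball route would have to re-derive the lower bound on $s_{\min}(SU)$ from scratch, and for a matrix of the form $SU$ with $U$ an arbitrary isometry (so the rows of $SU$ have no direct structural control beyond $\|u_j\|\le 1$) it is not at all clear that Rudelson--Vershynin-style arguments would reach the tight threshold $m\ge(1+\theta)d$ rather than losing a constant factor or a $\log d$.
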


The embedding dimension of $m=O(d/\epsilon^2)$ exactly matches a known lower bound of $m=\Omega(d/\epsilon^2)$, given by Nelson and Nguyen \cite{nelson2014lower}. To our knowledge, this result is the first to achieve the optimal embedding dimension for a sparse OSE with any sparsity $s=o(m)$, or indeed, for any OSE that can be applied faster than dense $d\times d$ matrix multiplication, including recent efforts \cite{chepurko2022near,cherapanamjeri2023optimal} (see Theorem~\ref{t:fast-ose} for our fast OSE algorithm).

A known lower bound on the level of sparsity achievable by any oblivious subspace embedding is a single non-zero entry per column ($s=1$) of the embedding matrix $S$ \cite{meng2013low}. However, this limit can be circumvented by \emph{non-oblivious} sparse embeddings, i.e., when we have additional information about the orthonormal matrix $U\in\R^{n\times d}$ that represents the subspace. Of particular significance is the distribution of the squared row norms of $U$ (also known as leverage scores), which encode the relative importance of the rows of $U$ in constructing a good embedding. Knowing accurate approximations of the leverage scores lies at the core of many subspace embedding techniques, including approximate leverage score sampling \cite{drineas2006sampling,drineas2012fast} and the Subsampled Randomized Hadamard Transform  \cite{ailon2009fast,tropp2011improved}. These approaches rely on the fact that simply sub-sampling $m=O(d\log d)$ rows of $U$ proportionally to their (approximate) leverage scores is a constant distortion subspace embedding. This corresponds to the $m\times n$ embedding matrix $S$ having one non-zero entry per \emph{row} (a.k.a.~a sub-sampling matrix), which is much sparser than any OSE since $n\gg m$. However, the sub-sampling embeddings are bound to the sub-optimal $O(d\log d)$ embedding dimension $m$, and it is not known when we can achieve the optimal $m=(1+\theta)d$ or even $m=O(d)$. We address this in the following result, showing that a non-oblivious sparse embedding knowing leverage score estimates requires only polylogarithmic in $d$ row-sparsity (non-zeros per row). To do this, we construct embedding matrices with a non-uniform sparsity pattern that favors high-leverage rows, inspired by recently proposed Leverage Score Sparsified embeddings \cite{less-embeddings,newton-less,gaussianization}.

\begin{theorem}[Sparser Non-oblivious SE; informal Theorem \ref{nonose}]\label{t:less}
    Consider $\alpha\geq 1$ and any matrix $U\in\R^{n\times d}$ such that $U^\top U=I$. Given $\alpha$-approximations of all squared row norms of $U$ (see Definition \ref{apprls}), we can construct a $(1+\theta)d\times n$ subspace embedding for $U$  having $O(\alpha\log^4 d)$ non-zeros per row and $\epsilon=O(1)$. Moreover, for any $\epsilon,\delta$, we can construct an $(\epsilon,\delta)$-SE for $U$ with $O(\alpha\log^4(d)/\epsilon^4)$ non-zeros per row and dimension $m=O((d+\log1/\delta)/\epsilon^2)$.    
\end{theorem}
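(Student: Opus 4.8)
The plan is to reduce Theorem~\ref{t:less} to the oblivious embedding guarantee of Theorem~\ref{t:direct}. The key observation is that a non-uniformly sparsified embedding whose sparsity pattern in column $j$ is governed by the estimate $\tilde\ell_j$ of the squared row norm $\|U_{j,:}\|^2$ behaves, after an appropriate diagonal rescaling, like a uniformly sparse OSE applied to a modified isometry. Concretely, I would write $S = S' D$, where $D = \diag(1/\sqrt{p_j})$ with $p_j \asymp \min(1, \alpha m \tilde\ell_j / d)$ the per-column ``keep'' probabilities, and $S'$ a uniformly sparse sign matrix of the type analyzed in Theorem~\ref{t:direct} acting on the $n'$-dimensional space obtained by splitting each heavy row of $U$ into $\lceil 1/p_j\rceil$ copies, each scaled by $\sqrt{p_j}$. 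Since $\tilde\ell_j$ is an $\alpha$-approximation of the true leverage score and the leverage scores of $U$ sum to $d$, the number of copies is $n' = \sum_j \lceil 1/p_j\rceil = O(\alpha m) = O(\alpha d/\epsilon^2)$, and the row-sparsity of $S$ translates into a column-sparsity of $S'$ that is only an $O(\alpha)$ factor larger — hence $O(\alpha \log^4(d))$ — while the stretched matrix $U' \in \R^{n'\times d}$ still satisfies $U'^\top U' = I$.

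**Applying the oblivious guarantee to the stretched isometry.** Having performed the splitting, $S' U'$ is exactly of the form covered by Theorem~\ref{t:direct} (equivalently the formal Theorem~\ref{osngeneral}): $S'$ is an $m \times n'$ sparse sign embedding with $m \geq (1+\theta)d$ and $O(\log^4(n'/\delta)/\epsilon^6)$ non-zeros per column, and $U'$ is an $n' \times d$ orthonormal matrix. The oblivious guarantee then gives that with probability $\geq 1-\delta$, $(1+\epsilon)^{-1}\|x\| \leq \|S' U' x\| \leq (1+\epsilon)\|x\|$ for all $x \in \R^d$. Since $S U = S' D U$ and by construction $DU$ equals $U'$ up to a permutation and grouping of rows that is absorbed into $S'$ — more precisely $S U x = S' U' x$ for every $x$ because the row-splitting is norm-preserving column-wise — the same two-sided bound holds for $SU$. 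Finally, I would check that replacing $\log^4(n')$ by $\log^4(d)$ in the sparsity bound is legitimate: because $n' = O(\alpha d /\epsilon^2)$ and $\alpha \leq \poly(d)$ in all regimes of interest (and otherwise one can truncate tiny leverage scores with negligible loss, a standard maneuver), $\log(n') = O(\log(d/\epsilon)+\log\alpha)$, which is absorbed into the stated $O(\alpha\log^4(d)/\epsilon^4)$ after rebalancing the exponents of $\log$ against the $\alpha$ factor.

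**Main obstacle.** The delicate point is not the reduction itself but ensuring the row-splitting does not inflate the column-sparsity of $S'$ beyond $O(\alpha\,\polylog)$: a heavy row of $U$ with $1/p_j \approx \alpha$ copies could, if the non-zeros of the original $S$ in that row were spread adversarially, force $\Omega(\alpha)$ non-zeros into a single column of $S'$, and we must verify this is exactly the intended blow-up and not worse. I would handle this by using the explicit construction from Theorem~\ref{t:less} in which the $s$ non-zeros assigned to row $j$ of $S$ are themselves partitioned among the $\lceil 1/p_j\rceil$ copies in the sub-column/hashing scheme, so each copy receives $O(s \cdot p_j \cdot (m/\lceil 1/p_j \rceil)) = O(s\,p_j)$-many in expectation and $O(s\,\polylog)$ with high probability after a standard balls-in-bins concentration argument; summing over the at most $\alpha$-heavy structure gives the claimed $O(\alpha \log^4 d)$. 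A secondary technical issue is that the estimates $\tilde\ell_j$ being only $\alpha$-approximate means $p_j$ may undershoot the true leverage score by a factor $\alpha$, which is precisely why the final row-sparsity, embedding dimension aside, carries the linear-in-$\alpha$ dependence; tracking this constant carefully through the splitting is the one place where the argument is not purely formal.
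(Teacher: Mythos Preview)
The paper takes a completely different route from your reduction: it does not reduce to Theorem~\ref{t:direct} at all, but re-runs the universality argument of Lemma~\ref{thm:brailovskayavanhandel} directly on the LESS matrix $SU$. The crucial observation is that the leverage-score scaling makes all three matrix parameters identical to the oblivious case. Each summand in the decomposition of $SU$ has the form $s_{ij}\,e_i u_j^\top$ with $|s_{ij}|\le 1/\sqrt{\beta_1 l_j}$ and $\|u_j\|\le\sqrt{\beta_1 l_j}$, so $R\le 1$; and since the entries of $S$ still have uniform variance $p$ and are pairwise uncorrelated (Lemma~\ref{lem:entriesnonobliv}), Lemma~\ref{mapa} gives $\sigma_*\le 2\sqrt p$ and $\sigma\le\sqrt{pm}$ exactly as before. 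From there the singular-value bound is a literal repetition of the proof of Theorem~\ref{osngeneral}. The row-sparsity claim is then a one-line expectation calculation plus Bernstein: the expected number of non-zeros in a row is $\sum_j \beta_1 l_j p\le\beta_1\beta_2 dp=\alpha dp$, and the constraint $pm\ge c\log^4(d/\delta)/\epsilon^6$ with $m=O(d/\epsilon^2)$ forces $pd=O(\log^4(d/\delta)/\epsilon^4)$.

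Your splitting argument has a concrete arithmetic error: with $p_j\asymp\min(1,\alpha m\tilde\ell_j/d)$, rows with small $\tilde\ell_j$ receive $\lceil 1/p_j\rceil=\lceil d/(\alpha m\tilde\ell_j)\rceil$ copies, and $\sum_j 1/p_j$ is controlled by $\sum_j 1/\tilde\ell_j$, which has no bound in terms of $d,m,\alpha$---for uniform $\tilde\ell_j=d/n$ one gets $n'\asymp n^2/(\alpha m)$, not $O(\alpha m)$. Your formula splits \emph{light} rows into many copies, which is the wrong direction for flattening; but even splitting heavy rows in proportion to $\tilde\ell_j$ only yields $n'\ge n$, never $O(\alpha d)$ independent of $n$. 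Separately, the identity $SUx=S'U'x$ fails distributionally: after splitting row $j$ into $k_j$ copies, the effective $j$-th column of $S$ becomes a scaled sum of $k_j$ independent sparse sign columns of $S'$, which is not the same law as a single scaled Bernoulli--Rademacher entry, so no coupling makes $SU$ and $S'U'$ equal. To compare them you would have to go back into a moment or universality argument---which is precisely what the paper does directly, without the detour.
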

Even though the above result focuses on non-oblivious embeddings, one of its key implications is a new guarantee for a classical family of oblivious embeddings known as Fast Johnson-Lindenstrauss Transforms (FJLT), introduced by Ailon and Chazelle \cite{ailon2009fast}. An FJLT is defined as $S=\Phi H D$, where $\Phi$ is an $m\times n$ uniformly sparsified embedding matrix, $H$ is an $n\times n$ orthogonal matrix with fast matrix-vector products (e.g., a Fourier or Walsh-Hadamard matrix), and $D$ is a diagonal matrix with random $\pm1$ entries. This embedding is effectively a two-step procedure: first, we use $HD$ to randomly rotate the subspace defined by $U$, obtaining $\tilde U=HDU$ which has nearly-uniform leverage scores \cite{tropp2011improved}; then we apply a uniformly sparsified embedding $\Phi$ to $\tilde U$, knowing that the uniform distribution is a good approximation for the leverage scores of $\tilde U$. Theorem~\ref{t:less} implies that an FJLT with $O(\log^4(d)/\epsilon^4)$ non-zeros per row is an $(\epsilon,\delta,d)$-OSE with the optimal embedding dimension $m=O((d+\log1/\delta)/\epsilon^2)$ (see Theorem \ref{usrht} for details). To our knowledge, this is the first optimal dimension OSE result for FJLT matrices.

Yet, the application to FJLTs does not leverage the full potential of Theorem~\ref{t:less}, which is particularly useful for efficiently constructing optimal subspace embeddings when only coarse leverage score estimates are available, which has arisen in recent works on fast subspace embeddings \cite{chepurko2022near,cherapanamjeri2023optimal}. In the following section, we use it to construct a new fast low-distortion SE (i.e., $\epsilon\ll 1$) with optimal embedding dimension, addressing a question posed by Cherapanamjeri, Silwal, Woodruff and Zhou \cite{cherapanamjeri2023optimal} (see Theorem \ref{t:fast-epsilon} for details).

\subsection{Fast subspace embeddings}
\label{s:fast}
Next, we illustrate how our main results can be used to construct fast subspace embeddings with optimal embedding dimension. In most applications, OSEs are used to perform dimensionality reduction on an $n\times d$ matrix $A$ by constructing the smaller $m\times d$ matrix $SA$. The subspace embedding condition ensures that $\|SAx\|\approx \|Ax\|$ for all $x\in\R^d$ up to a multiplicative factor $1+\epsilon$, which has numerous applications, including fast linear regression (see Section \ref{s:applications}). The key computational bottleneck here is the cost of computing $SA$. We aim for input sparsity time, i.e., $O(\nnz(A))$, where $\nnz(A)$ is the number of non-zeros in $A$, possibly with an additional small polynomial dependence on $d$. Our results for computing a fast subspace embedding $SA$ with optimal embedding dimension are summarized in Table~\ref{tab:fast}, alongside recent prior works.

\begin{table}
  \begin{tabular}{r||c|lc|l}
Ref. &Oblivious& \multicolumn{2}{l|}{Dimension $m$} & Runtime\\
  \hline\hline
\cite{chepurko2022near}&\cmark&$O(d\cdot\mathrm{pll}(d))$&--&
$O(\nnz(A)+d^{2+\gamma})$ 
\\
\cite{chepurko2022near}&--&$O(d\log(d)/\epsilon^2)$&--&
$O(\nnz(A)+d^\omega\,\mathrm{pll}(d))$ 
\\
\cite{cherapanamjeri2023optimal}& -- & $O(d)$&\cmark& $O(\nnz(A)+d^{2+\gamma})$ 
                                        \\
\cite{cherapanamjeri2023optimal}& -- & $O(d\log(d)/\epsilon^2)$&--& $O(\nnz(A)+d^{\omega})$\\                                 
   \hline
Thm. \ref{t:fast-ose}
          &\cmark&$O(d)$&\cmark&$O(\nnz(A)+d^{2+\gamma})$
                                        \\
  Thm. \ref{t:fast-epsilon}
          &--&$O(d/\epsilon^2)$&\cmark& $O(\nnz(A) + d^\omega)$
\end{tabular}
\vspace{3mm}
\caption{Comparison of our results to recent prior works
  towards obtaining fast subspace embeddings with optimal
  embedding dimension. For clarity of presentation, we assume that
  the distortion satisfies $\epsilon=\Omega(d^{-c})$ for a small
  constant $c>0$,
  and we use $\mathrm{pll}(d)$ to denote $\poly(\log\log d)$. We use a checkmark {\cmark} to indicate which embeddings are oblivious, and which of them achieve optimal dependence of dimension $m$ relative to the distortion $\epsilon$.}\label{tab:fast}
\end{table}

In the following result, we build on Theorem \ref{t:direct} to provide an input sparsity time algorithm for constructing a fast oblivious subspace embedding with constant distortion $\epsilon=O(1)$, and optimal embedding dimension $m=O(d)$. This is the first optimal OSE construction that is faster than current matrix multiplication time $O(d^\omega)$. 
\begin{theorem}[Fast oblivious subspace embedding]\label{t:fast-ose}
    Given $d\leq n$ and any $\gamma>0$, there is a distribution over
    $m\times n$ matrices $S$ where $m=O(d)$, such that for any
    $A\in\R^{n\times d}$, with probability at least $0.9$: 
    \begin{align*}
        \frac12\|Ax\|\leq\|SAx\|\leq 2\|Ax\|\qquad\forall x\in\R^d,
    \end{align*}
    and $SA$ can be computed in $O(\gamma^{-1}\nnz(A) +
    d^{2+\gamma}\polylog(d))$ time. Moreover, for $\gamma=\Omega(1)$, we can generate such a random matrix $S$ using only $\polylog(nd)$ many uniform random bits.
  \end{theorem}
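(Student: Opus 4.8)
The plan is to build $S$ as a two-stage composition $S=\Pi_2\Pi_1$ and to compute $SA$ as $\Pi_2(\Pi_1A)$, never forming $S$ explicitly. The first stage $\Pi_1$ will be a cheap-to-apply but dimension-wasteful sparse OSE: I would take it to be an $m_1\times n$ OSNAP-type embedding (cf.\ \cite{nelson2013osnap,cohen2016nearly}) with column-sparsity $s_1=O(1/\gamma)$ and embedding dimension $m_1=d^{1+\gamma}\polylog(d)$, tuned to be an $(\epsilon_1,\tfrac1{20},d)$-OSE for a small absolute constant $\epsilon_1$. The second stage $\Pi_2$ will be the near-optimal sparse OSE of Theorem~\ref{t:direct}, instantiated on ambient dimension $m_1$ with subspace dimension $d$: it has $m=O(d)$ rows, column-sparsity $s_2=O(\log^4 d)=\polylog(d)$, constant distortion $\epsilon_2$, and failure probability $\tfrac1{20}$. (If $n<(1+\theta)d$ one simply takes $S=\Id$, and if $(1+\theta)d\le n<d^{1+\gamma}\polylog(d)$ there is nothing to gain from a first stage and one applies $\Pi_2$ directly to $A$ as an $O(d)\times n$ OSE; the bounds below then hold trivially.)

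Correctness will be a routine chaining of subspace embeddings. Writing $A=UR$ with $U\in\R^{n\times r}$, $r\le d$, having orthonormal columns spanning $\mathrm{range}(A)$: with probability $\ge 1-\tfrac1{20}$, $\Pi_1$ is an $\epsilon_1$-SE for $U$, so $(1+\epsilon_1)^{-1}\|Ax\|\le\|\Pi_1Ax\|\le(1+\epsilon_1)\|Ax\|$ for every $x$; in particular the column space $W'\subseteq\R^{m_1}$ of $\Pi_1A$ has dimension $r\le d$. Since $\Pi_2$ is an OSE for dimension $d$, with probability $\ge1-\tfrac1{20}$ it is an $\epsilon_2$-SE for an orthonormal basis of $W'$, giving $(1+\epsilon_2)^{-1}\|\Pi_1Ax\|\le\|\Pi_2\Pi_1Ax\|\le(1+\epsilon_2)\|\Pi_1Ax\|$. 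Multiplying the two chains and picking $\epsilon_1,\epsilon_2$ so that $(1+\epsilon_1)(1+\epsilon_2)\le2$ yields $\tfrac12\|Ax\|\le\|SAx\|\le2\|Ax\|$ for all $x\in\R^d$, and a union bound caps the failure probability at $\tfrac1{10}$.

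The runtime and bit accounting are then direct. Forming $\Pi_1A$ touches each nonzero of $A$ exactly $s_1=O(1/\gamma)$ times, costing $O(\gamma^{-1}\nnz(A))$, and produces an $m_1\times d$ matrix, so $\nnz(\Pi_1A)\le m_1d=d^{2+\gamma}\polylog(d)$; forming $\Pi_2(\Pi_1A)$ then costs $O(s_2\cdot\nnz(\Pi_1A))=d^{2+\gamma}\polylog(d)$, for the stated total. For the $\polylog(nd)$-bit claim with $\gamma=\Omega(1)$ (so $s_1=O(1)$): I would generate $\Pi_2$ via the bit-efficient non-uniformly sparsified construction underlying Theorem~\ref{t:direct}, which needs $\polylog(m_1)=\polylog(nd)$ uniform bits, and generate $\Pi_1$ from a $\polylog(d)$-wise independent bucket-hash on $[n]\times[s_1]$ plus a $\polylog(d)$-wise independent family of $\pm1$ signs, each drawable from $\polylog(d)\cdot\log(nd)=\polylog(nd)$ bits; the moment-method analysis of OSNAP-type matrices only uses bounded-degree independence of these variables, so $\Pi_1$ remains an OSE under this derandomization. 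The total is $\polylog(nd)$.

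The main obstacle, beyond simply invoking Theorem~\ref{t:direct}, is calibrating the first stage so that three requirements hold at once: (i) column-sparsity $O(1/\gamma)$, so the first multiplication runs in $O(\gamma^{-1}\nnz(A))$ time; (ii) embedding dimension only $d^{1+\gamma}\polylog(d)$, so the second multiplication stays within $d^{2+\gamma}\polylog(d)$; and (iii) a description using only $\polylog(nd)$ random bits, which forces a limited-independence variant. Each of these is known for sparse embeddings in isolation, so the real work is verifying they are jointly attainable --- in particular, that the OSNAP moment bounds in the $(s_1,m_1)=(O(1/\gamma),\,d^{1+\gamma}\polylog d)$ trade-off regime go through with $\polylog(d)$-wise independent hashes and signs rather than fully independent ones.
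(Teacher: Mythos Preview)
Your time-complexity argument is correct and is in fact a cleaner two-stage variant of the paper's four-stage construction: the paper composes two OSNAPs, a randomized Hadamard transform, and a LESS embedding (chiefly to shave polylog factors), whereas you apply Theorem~\ref{t:direct} directly after a single Cohen-style OSNAP. Both give $O(\gamma^{-1}\nnz(A)+d^{2+\gamma}\polylog(d))$, so for the theorem as stated your route is perfectly valid and more economical.

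The random-bit argument, however, has a real gap. You assert that the bit-efficient construction underlying Theorem~\ref{t:direct} ``needs $\polylog(m_1)$ uniform bits,'' but this is not what that construction delivers. The independent-diagonals matrix OSE-IND-DIAG is a sum of $p\cdot(\text{ambient dimension})$ independent diagonals, each costing $O(\log(\text{ambient dimension}))$ bits; its total bit complexity is $O\!\big(p\,m_1\log m_1\big)$, not $\polylog(m_1)$. With $\Pi_2$ of size $O(d)\times m_1$, column sparsity $s_2=pm=O(\log^4 d)$ forces $p=\Theta(\log^4 d/d)$, so
\[
p\,m_1\log m_1 \;=\; \Theta\!\Big(\frac{\log^4 d}{d}\cdot d^{1+\gamma}\polylog(d)\cdot\log d\Big)\;=\;d^{\gamma}\polylog(d),
\]
which for $\gamma=\Omega(1)$ is \emph{polynomial} in $d$, not polylogarithmic. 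This is exactly why the paper's bit-optimized construction inserts a second limited-independence OSNAP stage that first brings the ambient dimension down to $m_2=O(d\,\polylog(d))$; only then is $m_2/m$ polylogarithmic and the diagonal construction cheap. Your two-stage pipeline therefore proves the time bound but not the $\polylog(nd)$-bit claim; to recover the latter you need the intermediate reduction to $d\,\polylog(d)$ that you omitted.
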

  \begin{remark}\label{r:random-bits}
    The problem of constructing an OSE using $\polylog(nd)$ many random bits was also brought up by Nelson and Nguyen, who obtained this with $m=d\,\polylog(d)$. To achieve it with $m=O(d)$, we introduce a new sparse construction, likely of independent interest, where the non-zeros are distributed along the diagonals instead of the columns of $S$.
  \end{remark}
  The runtime of our method matches the best known non-oblivious SE with $m=O(d)$, recently obtained by \cite{cherapanamjeri2023optimal}, while at the same time being much simpler to implement: their construction requires solving a semidefinite program to achieve the optimal dimension, while we simply combine several sparse matrix multiplication steps. Moreover, thanks to its obliviousness, our embedding can be easily adapted to streaming settings. For example, consider numerical linear algebra in the turnstile model \cite{clarkson2009numerical}, where we wish to maintain a sketch of $A$ while receiving a sequence of updates $A_{i,j}\leftarrow A_{i,j}+\delta$. Using the construction from Theorem~\ref{t:fast-ose}, we can maintain a constant-distortion subspace embedding of $A$ in the turnstile model with optimal space of $O(d^2\log(nd))$ bits, while reducing the update time exponentially, from $O(d)$ (for a dense OSE matrix) to $\polylog(d)$ time.

  In the next result, we build on Theorem \ref{t:less} to provide the first subspace embedding with low distortion $\epsilon=o(1)$ and optimal embedding dimension $m=O(d/\epsilon^2)$ that can be applied in current matrix multiplication time. This addresses the question posed by Cherapanamjeri, Silwal, Woodruff and Zhou \cite{cherapanamjeri2023optimal}, who gave a current matrix multiplication time algorithm for a low-distortion subspace embedding, but with a sub-optimal embedding dimension $m=O(d\log(d)/\epsilon^2)$. We are able to improve upon this embedding dimension by replacing leverage score sampling (used by \cite{cherapanamjeri2023optimal}) with our leverage score sparsified embedding construction, developed as part of the proof of Theorem \ref{t:less}.
\begin{theorem}[Fast Low-distortion Subspace Embedding]\label{t:fast-epsilon}
    Given $A\in\R^{n\times d}$ and $\epsilon>0$, we can compute an $m\times d$ matrix
    $SA$ such that $m=O(d/\epsilon^2)$, and with probability at least $0.9$:
  \begin{align*}
    (1+\epsilon)^{-1}\|Ax\|\leq \|SAx\| \leq
    (1+\epsilon)\|Ax\|\qquad\forall x\in\R^d,
  \end{align*}
  in time $O(\gamma^{-1}\nnz(A) + d^\omega + \poly(1/\epsilon)d^{2+\gamma}\polylog(d))$ for
  any $0<\gamma<1$.    
\end{theorem}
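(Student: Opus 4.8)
The plan is to reuse the fast--preconditioning pipeline behind the prior low-distortion constructions of \cite{chepurko2022near, cherapanamjeri2023optimal}, but to replace their final leverage-score \emph{sampling} step --- which is precisely what forces the sub-optimal dimension $m=\Theta(d\log(d)/\epsilon^2)$ via a coupon-collector / matrix-Chernoff obstruction --- with the Leverage Score Sparsified embedding of Theorem~\ref{t:less}, which attains the optimal $m=O(d/\epsilon^2)$ from the very same constant-factor leverage information. Assume without loss of generality that $A$ has full column rank $d$, let $U$ be an orthonormal basis for $\mathrm{col}(A)$, so that $\ell_j:=\|e_j^\top U\|^2$ are the leverage scores and $A=UC$ for some invertible $C\in\R^{d\times d}$.

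First I would build a cheap constant-factor preconditioner. Apply the fast constant-distortion oblivious embedding of Theorem~\ref{t:fast-ose} (with its failure probability tuned to a small constant, which still leaves $O(d)$ rows) to get $B_0=S_0A$ in time $O(\gamma^{-1}\nnz(A)+d^{2+\gamma}\polylog(d))$, then compute a thin QR factorization $B_0=QR$ in $O(d^\omega)$ time. Since $S_0$ is a constant-distortion subspace embedding for $\mathrm{col}(A)$, all singular values of $AR^{-1}$ lie in $[c_1,c_2]$ for absolute constants $c_1,c_2>0$ while $\mathrm{col}(AR^{-1})=\mathrm{col}(A)$, so $\|e_j^\top AR^{-1}\|^2$ approximates $\ell_j$ within a constant factor. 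I would turn these into explicit estimates $\widetilde\ell_j$ --- hence $\alpha=O(1)$ in the notation of Theorem~\ref{t:less} --- by a Johnson--Lindenstrauss sketch: form $R^{-1}G$ for a thin Gaussian $G$ via triangular solves, then $A(R^{-1}G)$, and read off row norms, exactly as in the leverage-estimation routine of \cite{cherapanamjeri2023optimal}, at cost $O(\nnz(A)+d^\omega)$ (up to logarithmic factors, removable as in that work).

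With the $\widetilde\ell_j$ in hand, I would instantiate the construction of Theorem~\ref{t:less} with these estimates and distortion parameter $\epsilon$, producing an $m\times n$ matrix $S$ with $m=O(d/\epsilon^2)$ and $O(\log^4(d)/\epsilon^4)$ nonzeros per row that, with probability $\ge 1-\delta$ for a small constant $\delta$, is an $(\epsilon,\delta)$-subspace embedding for $U$; writing $Ax=U(Cx)$ then gives $(1+\epsilon)^{-1}\|Ax\|\le\|SAx\|\le(1+\epsilon)\|Ax\|$ for all $x$, so $SA$ is the output. For the runtime, the \emph{total} number of nonzeros of $S$ is $m\cdot O(\log^4(d)/\epsilon^4)=O(d\,\polylog(d)/\epsilon^6)$, independent of $n$, and each nonzero of $S$ interacts with a single $(\le d)$-entry row of $A$, so $SA$ is computed in $\poly(1/\epsilon)\,d^{2}\polylog(d)$ time, plus $\poly(1/\epsilon)\,d^{\gamma}$ overhead for generating and indexing $S$; combined with the two preceding steps this gives the claimed $O(\gamma^{-1}\nnz(A)+d^\omega+\poly(1/\epsilon)\,d^{2+\gamma}\polylog(d))$, and a union bound over the three small-constant failure events keeps the overall success probability at $\ge 0.9$.

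The substantive work is all contained in Theorem~\ref{t:less}; what remains here is careful assembly. The first delicate point is verifying that constant-factor --- rather than $(1\pm\epsilon)$-accurate --- leverage estimates genuinely suffice, which is exactly why Theorem~\ref{t:less} is stated for arbitrary $\alpha\ge 1$ and is what allows Steps 1--2 to be cheap; one must check that the $\alpha=O(1)$ loss is absorbed into the $O(\log^4(d)/\epsilon^4)$ per-row sparsity without inflating $m$. The second point, and the part I expect to require the most care, is the runtime accounting for $SA$: the crucial observation that the LESS matrix has only $\poly(1/\epsilon)\,d\,\polylog(d)$ nonzeros in total regardless of $n$, so that applying it to $A$ never reintroduces any $n$-dependence beyond a single pass over $A$, together with confirming that the leverage-estimation step stays within $O(\nnz(A)+d^\omega)$ up to logarithmic factors, handled as in \cite{cherapanamjeri2023optimal}.
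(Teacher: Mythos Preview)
Your high-level pipeline --- constant-distortion OSE from Theorem~\ref{t:fast-ose}, QR preconditioner, JL-based leverage estimates, then the LESS embedding of Theorem~\ref{t:less} with $m=O(d/\epsilon^2)$ --- is exactly the paper's. The gap is in the leverage-estimation step. To obtain $\alpha=O(1)$ approximations of all $n$ leverage scores via the row norms of $A(R^{-1}G)$, the Gaussian $G$ must have $k=\Theta(\log n)$ columns (to union-bound over $n$ rows), so that step costs $\Theta(\nnz(A)\log n)$, not $O(\nnz(A))$. The parenthetical ``removable as in \cite{cherapanamjeri2023optimal}'' does not discharge this: a $\log n$ multiplier on $\nnz(A)$ is not dominated by $\gamma^{-1}\nnz(A)$ for constant $\gamma$, so as written your accounting only yields the statement for $\gamma\lesssim 1/\log n$, not for every $0<\gamma<1$.

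The paper deliberately does \emph{not} aim for $\alpha=O(1)$. It uses the coarse estimator of \cite{chepurko2022near} (Lemma~\ref{l:lev}) with only $k=O(\gamma^{-1})$ Gaussian columns, producing $\alpha=O(n^\gamma)$-approximate leverage scores in genuine $O(\gamma^{-1}(\nnz(A)+d^2))$ time. That $n^\gamma$ then enters the LESS row-sparsity, so applying $S$ costs $O(n^\gamma d^2\polylog(d)/\epsilon^6)$ rather than your $O(d^2\polylog(d)/\epsilon^6)$. To convert $n^\gamma$ into $d^{O(\gamma)}$ the paper first reduces, without loss of generality, to the regime $d\ge n^{0.1}$: when $d<n^{0.1}$, a CountSketch with $O(d^2/\epsilon^2)$ rows followed by a thin SVD already fits the stated time budget (verified via a short case split on $\epsilon$ versus $n^{-1/4}$). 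Both of these pieces --- the $n^\gamma$-versus-$\log n$ tradeoff in the estimator, and the $d\ge n^c$ reduction --- are missing from your outline, and without them the $\nnz(A)$ term in your runtime does not close for arbitrary $\gamma$.
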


\subsection{Applications to linear regression}
\label{s:applications}
 
Our fast subspace embeddings can be used to accelerate numerous approximation algorithms in randomized numerical linear algebra, including for linear regression, low-rank approximation, rank computation and more. Here, we illustrate this with the application to linear regression tasks. In the following result, we use Theorem \ref{t:fast-ose} to provide the first single pass algorithm for a relative error least squares approximation with optimal both time and space complexity. 
\begin{theorem}[Fast Least Squares]\label{t:least-squares}
  Given an $n\times d$ matrix $A$ and an $n\times 1$ vector $b$, specified with $O(\log nd)$-bit numbers, consider the task of finding $\tilde
  x$ such that:
  \begin{align*}
    \|A\tilde x-b\|_2 \leq (1+\epsilon) \min_x\|Ax-b\|_2.
  \end{align*}
  The following statements are true for this task:
  \begin{enumerate}
  \item For $\epsilon=\Theta(1)$, we can find $\tilde x$ with a
    single pass over $A$ and $b$ in $O(\nnz(A)+d^\omega)$ time, using
    $O(d^2\log(nd))$ bits of space.
  \item For arbitrary $\epsilon>0$, we can compute $\tilde x$
    in $O(\gamma^{-1}\nnz(A) + d^\omega +
d^{2+\gamma}/\epsilon)$ time, using
$O(d^2\log(nd))$ bits of space, for any $0<\gamma<1$.
\end{enumerate}
\end{theorem}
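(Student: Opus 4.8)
The plan is to realize both parts through the sketch-and-solve and sketch-and-precondition paradigms, using the fast oblivious embedding of Theorem~\ref{t:fast-ose} as the only randomized ingredient, and to be careful about bit-complexity and the number of passes. For part~(1), I would apply Theorem~\ref{t:fast-ose} with the parameter $d+1$ in place of $d$, obtaining a distribution over $m\times n$ matrices $S$ with $m=O(d)$, generated from $\polylog(nd)$ uniform random bits, such that with probability at least $0.9$ the map $S$ has distortion in $[1/2,2]$ on the (at most $(d+1)$-dimensional) column span of the augmented matrix $[A\mid b]$. Then $\|S(Ax-b)\|\in[1/2,2]\cdot\|Ax-b\|$ holds for all $x$ simultaneously, so the sketched minimizer $\tilde x=\argmin_x\|SAx-Sb\|$ satisfies $\|A\tilde x-b\|\le 2\|S(A\tilde x-b)\|\le 2\|S(Ax^\star-b)\|\le 4\min_x\|Ax-b\|$, where $x^\star$ is the true minimizer; a constant rescaling of the target distortion (still at $m=O(d)$) turns this into any desired constant approximation factor. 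To make this a single streaming pass, I would store only the running sketch $Y=S[A\mid b]\in\R^{m\times(d+1)}$, updating it as the $O(\log(nd))$-bit entries of $A$ and $b$ arrive and regenerating the needed entries of $S$ on the fly from the short seed; the routine check is that each entry of $Y$ is a sum of at most $n$ products of an $O(\log(nd))$-bit number with an entry of $S$, which has magnitude $O(1)$ up to the fixed $1/\sqrt{s}$ scaling, hence stays $O(\log(nd))$-bit after clearing a common denominator, so the persistent state occupies $O(d^2\log(nd))$ bits. After the pass, $Y$ is $O(d)\times d$ and the least-squares solve costs $O(d^\omega)$; fixing $\gamma$ to a small constant in Theorem~\ref{t:fast-ose} absorbs the $d^{2+\gamma}\polylog(d)$ sketching overhead into $O(d^\omega)$, for total time $O(\nnz(A)+d^\omega)$ and space $O(d^2\log(nd))$.

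For part~(2) a single constant-distortion sketch no longer suffices, so I would combine a preconditioner with a sketch tailored to the least-squares objective. In one pass over $A$, form $S_1A$ with $m_1=O(d)$ rows, which is a constant-distortion OSE, and also $S_2A,\,S_2b$ with $m_2=O(d/\epsilon)$ rows, chosen so that $S_2$ is in addition an approximate-matrix-product sketch with parameter $\epsilon$ — a property that, for the sparse sign construction, requires only $m_2=O(d/\epsilon)$ rows — so that by the standard reduction of least squares to a subspace embedding plus approximate matrix multiplication, every minimizer of $\min_y\|S_2Ay-S_2b\|$ yields a $(1+O(\epsilon))$-approximate original solution. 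Applying the embeddings costs $O(\gamma^{-1}\nnz(A)+d^{2+\gamma}\polylog(d)/\epsilon)$. Then compute a thin QR factorization $S_1A=QR$ in $O(d^\omega)$ time; since $S_1AR^{-1}=Q$ has orthonormal columns and $S_1$ is a subspace embedding for the column span of $A$, the preconditioned matrix $AR^{-1}$, and hence $S_2AR^{-1}$, has condition number $O(1)$. Therefore the preconditioned sketched problem $\min_y\|S_2AR^{-1}y-S_2b\|$ can be solved to relative accuracy $O(\epsilon)$ on its own objective in $O(\log(1/\epsilon))$ iterations of a Krylov method, each costing $O(m_2d)=O(d^2/\epsilon)$ (a triangular solve with $R$ followed by a product with the stored $S_2A$), and $\tilde x=R^{-1}\hat y$ is returned. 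Under the mild assumption $\epsilon=\Omega(d^{-c})$ adopted in the paper, the $\log(1/\epsilon)$ and $\polylog(d)$ factors are $O(d^\gamma)$, so summing the three phases gives $O(\gamma^{-1}\nnz(A)+d^\omega+d^{2+\gamma}/\epsilon)$ time; the stored objects are $S_1A$, $S_2A$, $S_2b$, $R$ and $O(1)$ vectors of length $d$, for $O(d^2\log(nd))$ bits of space.

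Most of this is bookkeeping rather than new analysis; the crux is the accounting in part~(1), where the streaming sketch must be simultaneously (i)~updatable one nonzero at a time with $O(\nnz(A))$ total arithmetic, (ii)~storable in $O(d^2\log(nd))$ bits, and (iii)~drawn from only $\polylog(nd)$ random bits, and it is precisely to secure (iii) together with (i)--(ii) that one invokes the diagonal-sparsity variant of Theorem~\ref{t:fast-ose} rather than a generic column-sparse OSE. For part~(2) the analogous point to check is that the particular sparse construction underlying Theorem~\ref{t:fast-ose}, taken at embedding dimension $O(d/\epsilon)$, still satisfies the approximate-matrix-product bound that certifies the $(1+\epsilon)$ least-squares guarantee; if one prefers to avoid this, a textbook alternative is to discard $S_2$ and run preconditioned LSQR on $\|Ax-b\|$ itself from the warm start $x_0=R^{-1}Q^\top S_1b$, at the cost of $O(\log(1/\epsilon)+\log(nd))$ additional passes over $A$.
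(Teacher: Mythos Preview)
Your Part~(1) matches the paper's proof: apply the fast oblivious embedding of Theorem~\ref{t:fast-ose} to $[A\mid b]$ in a single streaming pass, then solve the $O(d)\times d$ sketched least-squares problem in $O(d^\omega)$ time.

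For Part~(2) you take a genuinely different route from the paper, and it falls short of the stated bounds. The paper does \emph{not} form a second sketch $S_2$ of size $O(d/\epsilon)$ and run a Krylov method on the sketched problem; instead it runs preconditioned \emph{mini-batch stochastic gradient descent} on the full objective $f(x)=\|Ax-b\|_2^2$. After computing the preconditioner $R$ from $S_1A=QR^{-1}$ and $O(n^\gamma)$-approximate leverage scores via Lemma~\ref{l:lev}, it initializes at the sketch-and-solve point $x_0$ and iterates $x_{t+1}=x_t-\eta_t RR^\top g_t$, where $g_t$ is a stochastic gradient from a mini-batch of $k=O(n^\gamma d)$ rows sampled proportionally to the leverage-score estimates. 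A direct variance calculation combined with a decaying step-size schedule (Lemma~\ref{l:sgd}) gives $\E[f(x_t)-f(x^*)]\le O(f(x_0)/t)$, so $t=O(1/\epsilon)$ iterations at cost $O(kd)=O(n^\gamma d^2)$ each yield the clean $O(d^{2+\gamma}/\epsilon)$ term with no $\log(1/\epsilon)$ factor.

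Your construction is essentially the preconditioned-iterate-on-a-sketch approach of \cite{cherapanamjeri2023optimal}, which the paper explicitly sets out to improve on. Two concrete gaps: (i)~the Krylov phase costs $O(\log(1/\epsilon)\cdot d^2/\epsilon)$, and absorbing $\log(1/\epsilon)$ into $d^\gamma$ requires $\epsilon=\Omega(d^{-c})$, whereas Theorem~\ref{t:least-squares}(2) is stated for \emph{arbitrary} $\epsilon>0$ and the paper emphasizes obtaining ``the correct $O(1/\epsilon)$ dependence'' precisely to separate itself from the $O(\epsilon^{-1}\log(1/\epsilon)d^2\polylog(d))$ cost of such methods; (ii)~storing $S_2A\in\R^{O(d/\epsilon)\times d}$ takes $O((d^2/\epsilon)\log(nd))$ bits, not the claimed $O(d^2\log(nd))$, whereas the paper's SGD keeps only the $O(d)\times d$ sketch $S_1A$, the $d\times d$ matrix $R$, and $O(d)$-dimensional iterates. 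Your fallback of running preconditioned LSQR directly on $A$ fixes~(ii) but trades $\gamma^{-1}\nnz(A)$ for $\log(1/\epsilon)\cdot\nnz(A)$, again missing the stated bound.
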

For part (1) of the claim, we note that the obtained space complexity matches the lower bound $\Omega(d^2\log (nd))$  of Clarkson and Woodruff \cite{clarkson2009numerical}. Moreover, it is clear that solving a least squares problem with any worst-case relative error guarantee requires at least reading the entire matrix $A$ and solving a $d\times d$ linear system, which implies that the $O(\nnz(A) + d^\omega)$ time is also optimal. For part (2) of the claim, we note that a similar time complexity for a $1+\epsilon$ (non-single-pass) least squares approximation was shown by \cite{cherapanamjeri2023optimal}, except they had an additional $O(\epsilon^{-1}d^2\polylog(d)\log(1/\epsilon))$. We avoid that extra term, thereby obtaining the correct $O(1/\epsilon)$ dependence on the relative error, by employing a carefully tuned preconditioned mini-batch stochastic gradient descent with approximate leverage score sampling. This approach is of independent interest, as it is very different from that of \cite{cherapanamjeri2023optimal}, who computed a sketch-and-solve estimate by running preconditioned gradient descent on the sketch.

Finally, we point out that our fast low-distortion subspace embeddings (Theorem \ref{t:fast-epsilon}) can be used to construct reductions for a wider class of constrained/regularized least squares problems, which includes Lasso regression among others \cite{bourgain2015toward}. The following result provides the first $O(d/\epsilon^2)\times d$ such reduction for $\epsilon=o(1)$ in current matrix multiplication time.
\begin{theorem}[Fast reduction for constrained/regularized least squares]\label{t:reduction}
   Given $A\in\R^{n\times d}$, $b\in\R^n$ and $\epsilon>0$, consider an $n\times d$ linear regression task $T(A,b,\epsilon)$ of finding $\tilde x$ such that:
    \begin{align*}
     f(\tilde x)\leq (1+\epsilon)\min_{x\in\mathcal C}
      f(x),\quad\text{where}\quad
      f(x) = \|Ax-b\|_2^2+ g(x),
    \end{align*}
    for some $g:\R^d\rightarrow\R_{\geq 0}$ and a set
    $\mathcal C\subseteq\R^d$. We can reduce this task to solving an
    $O(d/\epsilon^2)\times d$ instance $T(\tilde A,\tilde
    b,0.1\epsilon)$ in $O(\gamma^{-1}\nnz(A) + d^\omega +
    \poly(1/\epsilon)d^{2+\gamma}\polylog(d))$  time.
  \end{theorem}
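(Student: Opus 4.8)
The plan is to sketch the \emph{augmented} matrix $[A\ b]\in\R^{n\times(d+1)}$, whose column span contains every residual vector $Ax-b=[A\ b]\binom{x}{-1}$. Suppose $S\in\R^{m\times n}$ is an $\epsilon'$-subspace embedding for $[A\ b]$, i.e.\ $(1+\epsilon')^{-1}\|[A\ b]y\|\le\|S[A\ b]y\|\le(1+\epsilon')\|[A\ b]y\|$ for all $y\in\R^{d+1}$. Taking $y=\binom{x}{-1}$ gives $(1+\epsilon')^{-1}\|Ax-b\|\le\|SAx-Sb\|\le(1+\epsilon')\|Ax-b\|$ for every $x\in\R^d$. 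I would then set $\tilde A=SA$ and $\tilde b=Sb$, read off from the first $d$ columns and the last column of $S[A\ b]=[SA\ Sb]$, and define the sketched objective $\tilde f(x)=\|\tilde Ax-\tilde b\|^2+g(x)$.

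Next I would establish a two-sided comparison of $f$ and $\tilde f$ over all of $\mathcal C$: squaring the embedding bound, adding $g(x)\ge 0$ to all three sides, and using $(1+\epsilon')^{-2}\le 1\le(1+\epsilon')^2$ yields $(1+\epsilon')^{-2}f(x)\le\tilde f(x)\le(1+\epsilon')^2 f(x)$ for every $x\in\mathcal C$ --- note that the set $\mathcal C$ and the penalty $g$ require no structure here beyond nonnegativity of $g$, since the embedding controls $\|Ax-b\|$ uniformly over all $x$. Then, if $\tilde x$ solves $T(\tilde A,\tilde b,0.1\epsilon)$ and $x^\star\in\argmin_{x\in\mathcal C}f(x)$, a sandwich argument gives
\[
 f(\tilde x)\le(1+\epsilon')^2\,\tilde f(\tilde x)\le(1+\epsilon')^2(1+0.1\epsilon)\,\tilde f(x^\star)\le(1+\epsilon')^4(1+0.1\epsilon)\,f(x^\star),
\]
so choosing $\epsilon'=c\,\epsilon$ with a small enough absolute constant $c$ (so that $(1+c\epsilon)^4(1+0.1\epsilon)\le 1+\epsilon$; w.l.o.g.\ $\epsilon\le 1$) makes $\tilde x$ a valid solution of $T(A,b,\epsilon)$, and $\tilde x\in\mathcal C$ is preserved since the constraint set is unchanged.

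It remains to produce such an $S$ within the stated time budget, and for this I would simply invoke Theorem~\ref{t:fast-epsilon} on the input matrix $[A\ b]$ with distortion $\epsilon'=c\epsilon$: it returns $S[A\ b]$ with $m=O((d+1)/(\epsilon')^2)=O(d/\epsilon^2)$ rows, with the required $\epsilon'$-embedding property holding with probability at least $0.9$, in time $O(\gamma^{-1}\nnz([A\ b])+(d+1)^\omega+\poly(1/\epsilon')(d+1)^{2+\gamma}\polylog(d+1))$. Assuming w.l.o.g.\ that $A$ has no all-zero rows (an all-zero row of $A$ contributes only the constant $b_i^2$ to $f$, which can be absorbed into $g$ while keeping it nonnegative), we have $\nnz([A\ b])\le\nnz(A)+n\le 2\,\nnz(A)$, and absorbing $\poly(1/c)$ into $\poly(1/\epsilon)$ and $(d+1)$-factors into $d$-factors, the runtime matches the claimed $O(\gamma^{-1}\nnz(A)+d^\omega+\poly(1/\epsilon)d^{2+\gamma}\polylog(d))$. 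I expect no serious obstacle here: the only points needing care are the calibration of the constant $c$ in the distortion composition and the observation that the augmented-matrix sketch is exactly what preserves $f$ up to $1+\epsilon$; the heavy lifting --- a fast low-distortion subspace embedding with optimal dimension --- is entirely delegated to Theorem~\ref{t:fast-epsilon}, which in turn rests on Theorem~\ref{t:less}.
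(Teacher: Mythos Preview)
Your proposal is correct and follows essentially the same approach as the paper: sketch the augmented matrix $[A\mid b]$ with the low-distortion embedding of Theorem~\ref{t:fast-epsilon}, use the resulting two-sided bound $(1+\epsilon')^{-2}f(x)\le\tilde f(x)\le(1+\epsilon')^2 f(x)$ to sandwich the sketched solution, and adjust the distortion constant at the end. If anything, your write-up is slightly more careful than the paper's, since you explicitly separate the embedding distortion $\epsilon'$ from the sketched-problem accuracy $0.1\epsilon$ and address the $\nnz([A\mid b])\le 2\,\nnz(A)$ bookkeeping.
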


\subsection{Overview of techniques}

One of the key ingredients in our analysis involves establishing the universality of a class of random matrices, building on the techniques of Brailovskaya and Van Handel \cite{brailovskaya2022universality}, by characterizing when the spectrum of a sum of independent random matrices is close to that of a Gaussian random matrix whose entries have the same mean and covariance. We adapt these techniques to a class of nearly-square random matrices that arise from applying an $m\times n$ sparse random matrix $S$ to an $n\times d$ isometric embedding matrix $U$, showing high probability bounds for the Hausdorff distance between the spectrum of $SU$ and the spectrum of a corresponding Gaussian random matrix. 

A key limitation of the results of \cite{brailovskaya2022universality} is that they require full independence between the random matrices in a sum (which correspond to sub-matrices of the matrix $S$), unlike, for instance, the analysis of Nelson and Nguyen \cite{nelson2013osnap} which uses a moment method that only requires $O(\log(d))$-wise independence. We address this with the \emph{independent diagonals} construction: we propose a distribution over $m\times n$ sparse random matrices $S$ where the non-zeros are densely packed into a small number of diagonals (see Figure \ref{fig:yosedia}) so that, while the diagonals are fully independent, the entries within a single diagonal only need to be 2-wise independent. As a consequence, the resulting construction requires only $n/m\cdot\polylog(n)$ uniform random bits to generate, and we further improve that to $\polylog(n)$ by combining it with the Nelson-Nguyen~embedding.
                      
Standard sparse embedding matrices are not very effective at producing low-distortion subspace embeddings, i.e., with $\epsilon=o(1)$, because their density (non-zeros per column) has to grow with $1/\epsilon$, so that their complexity is no longer input sparsity time. Prior work has dealt with this problem by using a constant distortion subspace embedding as a preconditioner for computing the leverage score estimates $l_1,...,l_n$ of the input matrix $A$ \cite{chepurko2022near}, and then constructing a subspace embedding in a non-oblivious way out of a sub-sample of $m=O(d\log d/\epsilon^2)$ rows of $A$. This leverage score sampling scheme is effectively equivalent to using an extremely sparse embedding matrix $S$ which has a single non-zero entry $S_{i,I_i}\sim \pm 1/\sqrt{l_i}$ in each row, with its index $I_i$ sampled according to the leverage score distribution $(l_1/Z,...,l_n/Z)$, where $Z=\sum_il_i$. Unfortunately due to the well-known coupon collector problem, such a sparse embedding matrix cannot achieve the optimal embedding dimension $m=O(d/\epsilon^2)$. We circumvent this issue by making the embedding matrix $S$ slightly denser, with $\alpha\poly(1/\epsilon)\polylog(d)$ non-zeros per row, where $\alpha$ is the approximation factor in the leverage score distribution (i.e., \emph{leverage score sparsification}, see Figure \ref{fig:nseindent}). Unlike the oblivious sparse embedding, here it is the row-density (instead of column-density) that grows with $1/\epsilon$, which means that the overall algorithm can still run in input sparsity time. We note that our algorithms use $\alpha=O(d^\gamma)$ approximation factor for the leverage scores, where $0<\gamma<1$ is a parameter that can be chosen arbitrarily. This parameter reflects a trade-off in the runtime complexity, between the $O(\gamma^{-1}(\nnz(A)+d^2))$ cost of estimating the leverage scores, and the density of the leverage score sparsified embedding.

To construct a least squares approximation with $\epsilon=o(1)$ (Theorem \ref{t:least-squares} part 2), we use our constant distortion subspace embedding to compute a preconditioner for matrix $A$. That preconditioner is then used first to approximate the leverage scores, as well as to compute a constant factor least squares approximation, and then to improve the convergence rate of a gradient descent-type algorithm. However, unlike prior works \cite{chepurko2022near,cherapanamjeri2023optimal,yang2017weighted}, which either use a full gradient or a stochastic gradient based on a single row-sample, we observe that the computationally optimal strategy is to use a stochastic gradient based on a mini-batch of $O(\alpha d)$ rows, where $\alpha$ is the leverage score approximation factor. With the right sequence of decaying step sizes, this strategy leads to the optimal balance between the cost of computing the gradient estimate and the cost of preconditioning it, while retaining fast per-iteration convergence rate, leading to the $O(\alpha d^2/\epsilon)$ overall complexity of stochastic gradient descent. 

\subsection{Related work}
Our results follow a long line of work on matrix sketching techniques, which have emerged as part of the broader area of randomized linear algebra; see  \cite{halko2011finding,woodruff2014sketching,drineas2016randnla,martinsson2020randomized,derezinski2021determinantal,randlapack_book} for comprehensive overviews of the topic. These methods have proven pivotal in speeding up fundamental linear algebra tasks such as least squares regression \cite{sarlos2006improved,rokhlin2008fast,clarkson2013low}, $l_p$ regression \cite{meng2013low,cohen2015lp,chen2021query,wang2022tight}, low-rank approximation \cite{cohen2017input,li2020input}, linear programming \cite{cohen2021solving}, and more \cite{song2019relative,jiang2020faster}. Many of these results have also been studied in the streaming and turnstile models \cite{clarkson2009numerical}.

Subspace embeddings are one of the key algorithmic tools in many of the above randomized linear algebra algorithms. Using sparse random matrices for this purpose was first proposed by Clarkson and Woodruff \cite{clarkson2013low}, via the CountSketch which has a single non-zero entry per column, and then further developed by several other works \cite{nelson2013osnap,meng2013low,cohen2016nearly} to allow multiple non-zeros per column as well as refining the embedding guarantees. Non-uniformly sparsified embedding constructions have been studied recently, including Leverage Score Sparsified embeddings \cite{less-embeddings,newton-less,gaussianization,derezinski2022sharp}, although these works use much denser matrices than we propose in this work, as well as relying on somewhat different constructions. There have also been recent efforts on achieving the optimal embedding dimension for subspace embeddings, including \cite{cartis2021hashing}, who also rely on sparse embeddings, but require additional assumptions on the dimensions of the input matrix as well as its leverage score distribution; and \cite{chepurko2022near,cherapanamjeri2023optimal}, who do not rely on sparse embedding matrices, and therefore do not address the conjecture of Nelson and Nguyen (see Table \ref{tab:fast} for a comparison).

\section{Preliminaries} \label{sec:prelim}

\paragraph{\textit{Notation}} The following notation and terminology will be used in the paper. The notation $[n]$ is used for the set $\{1,2,...,n\}$. All matrices considered in this work are real valued and the space of $m \times n$ matrices with real valued entries is denoted by $M_{m \times n}(\mathbb{R})$. The operator norm of a matrix $X$ as $\norm{X}$ and its condition number by $\kappa(X)$. For clarity, the operator norm is also denoted by $\norm{X}_{op}$ in some places where other norms appear. We shall denote the spectrum of a matrix $X$, which is the set of all eigenvalues of $X$, by $\spec(X)$. The standard probability measure is denoted by $\mathbb{P}$, and the symbol $\mathbb{E}$ means taking the expectation with respect to the probability measure. The standard $L_p$ norm of a random variable $\xi$ is denoted by $\norm{\xi}_p$, for $1 \le p \le \infty$. Throughout the paper, the symbols $c_1, c_2, ...$, and $Const, Const', ...$ denote absolute constants. 

\bigskip

\paragraph{\textit{Oblivious Subspace Embeddings}} 
We define an oblivious subspace embedding, i.e., an $(\epsilon,\delta,d)$-OSE, following Definition \ref{def:OSE}, to be a random $m\times n$ matrix $S$ such that for any $n\times d$ matrix $U$ with orthonormal columns (i.e., $U^\top U=I_d$), 
\begin{align}\label{eq:ose}
\Pb\big(\,\forall_{x\in \R^d}\ 
    (1+\epsilon)^{-1}\|x\|\leq\|SUx\|\leq(1+\epsilon)\|x\|\,\big)\geq 1-\delta.        
\end{align}

For computational efficiency, we usually consider sparse OSEs. A standard construction for a sparse OSE involves i.i.d. rademacher entries in each position, sparsified by multiplication with independent Bernoulli random variables. More precisely, $S$ has i.i.d. entries $s_{i,j}=\delta_{i,j} \xi_{i,j}$ where $\delta_{i,j}$ are independent Bernoulli random variables taking value 1 with probability $p_{m,n,d} \in (0,1]$ and $\xi_{i,j}$ are i.i.d. random variables with $\Pb(\xi_{i,j}=1)=\Pb(\xi_{i,j}=-1)=1/2$. Note that this results in $S$ having $s = pm$ many non zero entries per column and $pn$ many non zero entries per row on average. We shall call this the oblivious subspace embedding with independent entries distribution.  

\begin{definition}[OSE-IID-ENT]\label{oseie}
A $m \times n$ random matrix $S$ is called an oblivious subspace embedding with independent entries (OSE-IID-ENT) if $S$ has i.i.d. entries $s_{i,j}=\delta_{i,j} \xi_{i,j}$ where $\delta_{i,j}$ are independent Bernoulli random variables taking value 1 with probability $p_{m,n,d} \in (0,1]$ and $\xi_{i,j}$ are i.i.d. random variables with $\Pb(\xi_{i,j}=1)=\Pb(\xi_{i,j}=-1)=1/2$.
\end{definition}

Another example comes from a class of sparse sketching matrices proposed by Nelson and Nguyen \cite{nelson2013osnap}, called OSNAPs. They define a sketching matrix $S$ as an \textit{oblivious sparse norm-approximating projection} (OSNAP) if it satisfies the following properties - 
\begin{enumerate}
    \item $s_{ij} = \delta_{ij}\sigma_{ij}/\sqrt{s}$ where $\sigma$ are i.i.d. $\pm 1$ random variables, and $\delta_{ij}$ is an indicator random variable for the event $S_{ij} \neq 0$.
    \item $\forall j \in [n], \sum_{i=1}^{m} \delta_{i,j} = s$ with probability $1$, i.e. every column has exactly $s$ non-zero entries.
    \item For any $T \subset [m]\times [n], \E \Pi_{(i,j) \in T} \delta_{ij} \leq (s/m)^{|T|}$.
    \item The columns of $S$ are i.i.d.
\end{enumerate}

One example of an OSNAP can be constructed as follows when $s$ divides $m$. In this case, we divide each column of $S$ into $s$ many blocks, with each block having $\frac{m}{s}$ many rows. For each block, we randomly and uniformly select one nonzero entry and set its value to be $\pm 1$ with probability $1/2$. Note that the blocks in each column are i.i.d., and the columns of $S$ are i.i.d. We then see that $S/\sqrt{s}$ satisfies the properties of an OSNAP. For convenience, in this work we will refer to this as the OSNAP distribution, and we will define it using the parameter $p=s/m$ instead of $s$. To define such a distribution formally, we first define the one hot distribution.

\begin{definition}[One Hot Distribution] \label{def:ohd}
    Let $M$ be an $a \times b$ random matrix. Let $\gamma$ be a random variable taking values in $[a] \times [b]$ with $\Pb(\gamma = (i,j)) = (1/ab)$. Let $\xi$ be a Rademacher random variable ($\Pb(\xi=-1)=\Pb(\xi=1)=\frac{1}{2}$). $M$ is said to have the one hot distribution if $M=\xi(\sum \limits_{(i,j) \in [a] \times [b]} \mathbbm{1}_{\{(i,j)\}}(\gamma)E_{i,j})$ where $E_{i,j}$ is an $a \times b$ matrix with $1$ in $(i,j)^{th}$ entry and $0$ everywhere else. 
\end{definition}

\begin{definition}[OSNAP-IND-COL]\label{osnap} 
An $m \times n$ random matrix $S$ is called an oblivious sparse norm-approximating projection with independent subcolumns distribution (OSNAP-IND-COL)
with parameter $p$ such that $s=pm$ divides $m$, if each submatrix $S_{[(m/s)(i-1)+1:(m/s)i]\times\{ j\}}$ of $S$ for $i\in [s], j \in [n]$ has the one hot distribution, and all these submatrices are jointly independent. 
\end{definition}
 Below we collect the existing subspace embedding results for OSNAP matrices, which are relevant to this work.
  \begin{lemma}[Existing Sparse Embedding Guarantees]\label{l:existing}
    The following are some of the known guarantees for OSNAP
    embedding matrices:    
    \begin{itemize}
    \item \cite{clarkson2013low} showed that there is an OSNAP matrix $S$ with $m=O(\epsilon^{-2}d^2)$ rows and $1$ non-zero per column (i.e., CountSketch) which is an OSE with distortion $\epsilon$. 
  \item \cite{cohen2016nearly} showed that there is an OSNAP matrix $S$ with
    $m=O(\epsilon^{-2}d^{1+\gamma}\log d)$ rows and $s=O(1/\gamma\epsilon)$
    non-zero entries per column which is an OSE with distortion $\epsilon$.
    Note that setting $\gamma = 1/\log(d)$, we get
      $m=O(\epsilon^{-2}d\log d)$ and $O(\log(d)/\epsilon)$ non-zeros
      per column.
  \item \cite{nelson2013osnap} showed that there is an OSNAP matrix
    using $O(\log(d)\log(nd))$ uniform random bits with
    $m=O(\epsilon^{-2}d^{1+\gamma}\log^8(d))$ and $O(1/\gamma^3\epsilon)$ non-zero entries per column.
  \end{itemize}
  \end{lemma}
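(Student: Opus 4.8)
This lemma collects results from prior work, so the ``proof'' amounts to checking that the sparse sketching distributions analyzed in \cite{clarkson2013low}, \cite{cohen2016nearly} and \cite{nelson2013osnap} are captured by the OSNAP framework of Definitions~\ref{def:ohd}--\ref{osnap} (or a mild limited-independence relaxation of it), and then invoking the cited theorems. The plan is to dispatch the three bullets one at a time. For the first bullet, I would observe that OSNAP-IND-COL with parameter $p=1/m$, equivalently $s=pm=1$, is precisely the CountSketch distribution: each column consists of a single block of $m$ rows in which one uniformly random entry is assigned a random $\pm1$, and the columns are independent; hence the $m=O(\epsilon^{-2}d^2)$-row subspace-embedding guarantee of \cite{clarkson2013low} transfers verbatim.

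For the second bullet, the trace-inequality analysis of \cite{cohen2016nearly} is stated for matrices obeying the four OSNAP properties (1)--(4) recalled above, so the step to carry out is verifying that OSNAP-IND-COL obeys them. Properties (1), (2) and (4) are immediate from Definition~\ref{osnap}. For property (3), fix $T\subseteq[m]\times[n]$: if two coordinates of $T$ lie in a common length-$(m/s)$ subcolumn then $\prod_{(i,j)\in T}\delta_{ij}\equiv 0$; otherwise the indicators $\{\delta_{ij}\}_{(i,j)\in T}$ lie in distinct, jointly independent subcolumns, so that $\E\prod_{(i,j)\in T}\delta_{ij}=\prod_{(i,j)\in T}\E\delta_{ij}=(s/m)^{|T|}$. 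In either case $\E\prod_{(i,j)\in T}\delta_{ij}\le(s/m)^{|T|}$, so OSNAP-IND-COL is an OSNAP and Cohen's theorem yields $m=O(\epsilon^{-2}d^{1+\gamma}\log d)$ with $s=O(1/(\gamma\epsilon))$ non-zeros per column; the displayed special case follows by taking $\gamma=1/\log d$.

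For the third bullet, I would use the fact that the moment-method analysis of \cite{nelson2013osnap} does not require fully independent subcolumns: it suffices that the hash functions locating the non-zeros be $O(\log d)$-wise independent and that the sign pattern be drawn from an $O(\log d)$-wise independent family, which still makes property (3) hold with the stated bound for the (bounded) sets $T$ that arise in the analysis (the computation above used only disjointness together with independence of a bounded number of coordinates). Instantiating these with standard $k$-wise independent constructions costs $O(\log d\,\log(nd))$ uniform random bits, and feeding the resulting parameters into the Nelson--Nguyen subspace-embedding theorem gives $m=O(\epsilon^{-2}d^{1+\gamma}\log^8 d)$ with $O(1/(\gamma^3\epsilon))$ non-zeros per column.

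The work is entirely bookkeeping rather than genuine estimation; the one point that needs care is reconciling the slightly different OSNAP formalizations across the three papers --- specifically, confirming that property~(3), and none of the other hypotheses invoked in the cited analyses, silently demands more independence than OSNAP-IND-COL (or its limited-independence variant in the third bullet) actually provides.
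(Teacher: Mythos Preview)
The paper provides no proof for this lemma; it is stated purely as a summary of cited results, with the three bullets serving as direct pointers to \cite{clarkson2013low}, \cite{cohen2016nearly}, and \cite{nelson2013osnap}. Your proposal is therefore strictly more detailed than what the paper does: you carefully verify that the OSNAP-IND-COL construction of Definition~\ref{osnap} satisfies the four abstract OSNAP properties (so that Cohen's and Nelson--Nguyen's analyses apply), and you spell out the limited-independence considerations behind the random-bit count in the third bullet. This definitional bookkeeping is correct and arguably useful for a self-contained treatment, but the paper simply defers all of it to the cited references.
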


\bigskip

\paragraph{\textit{Non-oblivious subspace embeddings}}
Following Definition \ref{def:OSE}, we say that an $m\times n$ random matrix $S$ is a (non-oblivious) subspace embedding for a given $n\times d$ matrix $U$ with orthonormal columns if it satisfies \eqref{eq:ose} for that matrix $U$.
In this case, to obtain subspace embedding guarantees with even sparser random matrices, we can use the information about the subspace in the form of its leverage scores. For $i=1,...,n$, the $i$th leverage score of a $d$-dimensional subspace of $\R^n$ is the squared norm of the $i$th row of its orthonormal basis matrix $U$, i.e., $\|e_i^\top U\|_2^2$ (this definition is in fact independent of the choice of basis).

We note that in most applications (e.g., Theorem \ref{t:fast-epsilon}), subspace embedding matrices are typically used to transform an arbitrary $n\times d$ matrix $A$ (not necessarily with orthonormal columns), constructing a smaller $m\times d$ matrix $SA$. In this case, we seek an embedding for the subspace of vectors $\{z: z=Ax\text{ for } x\in\R^d\}$. Here, the corresponding $U$ matrix has columns that form an orthonormal basis for the column-span of $A$. Thus, in practice we do not have access to matrix $U$ or its leverage scores. Instead, we may compute leverage score approximations \cite{drineas2012fast}.

\begin{definition}[Approximate Leverage Scores]\label{apprls}
    For $\beta_1 \ge 1, \beta_2 \ge 1$, a tuple $(l_1, \ldots, l_n)$ of numbers are $(\beta_1,\beta_2)$-approximate leverage scores for $U$ if, for $1\leq i\leq n$,
    \begin{align*}
        \frac{\norm{e_i^TU}^2}{\beta_1} \leq l_i \qquad\text{and}\qquad
    \sum_{i=1}^n l_i \leq \beta_2 (\sum_{i=1}^n \norm{e_i^TU}^2) = \beta_2 d.
    \end{align*}
    And in this case, we also say that they are $\alpha$-approximations of squared row norms of $U$ with $\alpha=\beta_1\beta_2$.
\end{definition}

\bigskip

\paragraph{\textit{Uniformizing leverage scores by preconditioning}} 
Another way of utilizing information about the leverage scores to get embedding guarantees with sparser matrices is to precondition the matrix $U$ using the randomized Hadamard transform to uniformize the row norms, resulting in $(d/n, d/n, \ldots d/n)$ becoming approximate leverage scores for the preconditioned matrix. To this end, we first define the Walsh-Hadamard matrix.

\begin{definition} \label{hada} The Walsh-Hadamard matrix $H_{2^k}$ of dimension $2^k \times 2^k$ for $k \in \mathbb{N} \cup \{ 0 \}$ is the matrix obtained using the recurrence relation
\begin{align*}
    H_0 &= [1],\qquad
    H_{2n} = \begin{bmatrix}
        H_n & H_n \\
        H_n & -H_n
    \end{bmatrix}.
\end{align*}
\end{definition}
In what follows, we drop the subscript of $H_{2^k}$ when the dimension is clear.

\begin{definition} \label{def:RHT}
The \textit{randomized Hadamard transform} (RHT) of an $n \times d$ matrix $U$ is the product $\frac{1}{\sqrt{n}}HDU$, where $D$ is a random $n \times n$ diagonal matrix whose entries are independent random signs, i.e., random variables uniformly distributed on $\{\pm 1\}$. Here, by padding $U$ with zero rows if necessary, we may assume that $n$ is a power of 2. 
\end{definition}

The key property of the randomized Hadamard transform that we use is that it uniformizes the row norms of $U$ with high probability. More precisely, we have, 

\begin{lemma}[Lemma 3.3, \cite{tropp2011improved}]\label{rhtrownorms}
    Let $U$ be an $n \times d$ matrix with orthonormal columns. Then, $\frac{1}{\sqrt{n}}HDU$ is
an $n \times d$ matrix with orthonormal columns, and, for $\delta > 0$
\begin{align*}
\Pb \left( \max \limits_{j=1,...,n} \norm{e_j^T(\frac{1}{\sqrt{n}}HDU)} \geq \sqrt{\frac{d}{n}}+\sqrt{\frac{8 \log( n/\delta)}{n}} \right) \leq \delta
\end{align*}
\end{lemma}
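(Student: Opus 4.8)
The statement has two parts, and the first is essentially free: $\tfrac{1}{\sqrt n}H$ is orthogonal because the recurrence in Definition~\ref{hada} gives $H^\top H = H^2 = nI$, and $D$ is orthogonal because its diagonal entries lie in $\{\pm1\}$; hence $Q:=\tfrac1{\sqrt n}HD$ is orthogonal, and $(QU)^\top(QU)=U^\top Q^\top QU=U^\top U=I_d$, so $\tfrac1{\sqrt n}HDU$ has orthonormal columns. Everything then reduces to controlling the row norms.

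Fix a row index $j$, write $u_k^\top:=e_k^\top U$ for the $k$-th row of $U$, let $h_{j,k}=(e_j^\top H)_k\in\{\pm1\}$ be the deterministic signs in the $j$-th row of $H$, and let $\varepsilon_k:=D_{kk}$ be the independent random signs. Then $e_j^\top\big(\tfrac1{\sqrt n}HDU\big)=\tfrac1{\sqrt n}\sum_{k=1}^n h_{j,k}\varepsilon_k u_k^\top=\tfrac1{\sqrt n}\sum_{k=1}^n\tilde\varepsilon_k u_k^\top$, where $\tilde\varepsilon_k:=h_{j,k}\varepsilon_k$ are again i.i.d.\ Rademacher variables (multiplying a Rademacher variable by a fixed sign leaves its law unchanged). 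Thus $\sqrt n\,\big\|e_j^\top(\tfrac1{\sqrt n}HDU)\big\| = \|U^\top\tilde\varepsilon\|$, a function of a uniform random sign vector $\tilde\varepsilon\in\{\pm1\}^n$.

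I would now invoke concentration of a convex Lipschitz function on the Boolean cube. The map $\varepsilon\mapsto\|U^\top\varepsilon\|$ is convex, and it is $1$-Lipschitz with respect to the Euclidean metric since $\big|\|U^\top\varepsilon\|-\|U^\top\varepsilon'\|\big|\le\|U^\top(\varepsilon-\varepsilon')\|\le\|U\|_{op}\,\|\varepsilon-\varepsilon'\|=\|\varepsilon-\varepsilon'\|$, using that $U$ has orthonormal columns so $\|U\|_{op}=1$. By Talagrand's convex-distance inequality (equivalently, Talagrand's concentration inequality for convex Lipschitz functions on product spaces), such a function concentrates subgaussianly around its mean; combined with the Jensen bound $\E\|U^\top\varepsilon\|\le(\E\|U^\top\varepsilon\|^2)^{1/2}=(\tr U^\top U)^{1/2}=\sqrt d$, this yields, for each fixed $j$, a tail estimate of the form $\Pb\big(\sqrt n\,\|e_j^\top(\tfrac1{\sqrt n}HDU)\|\ge\sqrt d+t\big)\le\exp(-t^2/8)$.

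Finally, take a union bound over the $n$ rows: $\Pb\big(\max_{j\le n}\|e_j^\top(\tfrac1{\sqrt n}HDU)\|\ge\tfrac1{\sqrt n}(\sqrt d+t)\big)\le n\exp(-t^2/8)$, and setting $t=\sqrt{8\log(n/\delta)}$ makes the right-hand side equal to $\delta$; rewriting $\tfrac1{\sqrt n}(\sqrt d+t)=\sqrt{d/n}+\sqrt{8\log(n/\delta)/n}$ gives the claim. The only genuine obstacle is bookkeeping the constants: we are on the discrete cube, not in Gaussian space, so Gaussian concentration does not apply verbatim and one must use the Talagrand convex-Lipschitz inequality (or a tailored bounded-differences / Rademacher-chaos argument), tracking the numerical constant so that it matches the $8$ appearing in the bound. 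Everything else is routine.
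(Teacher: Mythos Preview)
The paper does not prove this lemma; it is quoted verbatim from \cite{tropp2011improved} (Lemma~3.3 there) without argument. Your proposal is correct and is essentially the proof given in that reference: reduce each row norm to $\|U^\top\varepsilon\|$ for a Rademacher vector $\varepsilon$, note this is a convex $1$-Lipschitz function on the cube, apply a Talagrand/Ledoux concentration inequality together with the Jensen bound $\E\|U^\top\varepsilon\|\le\sqrt d$, then union bound over the $n$ rows --- and you have already flagged the one bookkeeping issue (tracking the constant $8$).
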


\bigskip

\paragraph{\textit{Universality}} In this paragraph, we describe the random matrix universality result of \cite{brailovskaya2022universality}, which is central to our analysis of sparse subspace embedding matrices. The object of study here is a random matrix model given by 
\begin{align}
     X:= Z_0 + \sum_{i=1}^{n} Z_i \label{eq:model}
\end{align}
where $Z_0$ is a symmetric deterministic $d \times d$ matrix and $Z_1, \ldots, Z_n$ are symmetric independent random matrices with $\E[Z_i]=0$.
We shall compare the spectrum of $X$ to the spectrum of a gaussian model $G$ that has the same mean and covariance structure as $X$. More precisely, denoting by $\cov(X)$ the $d^2 \times d^2$ covariance matrix of the entries of $X$, 
\[ \cov(X)_{i,j,k,l} := \E[(X - \E X)_{ij}(X-\E X)_{kl}]\]
$G$ is the $d \times d$ symmetric random matrix such that:
\begin{enumerate}
    \item $\{ G_{ij}: i,j \in [d] \}$ are jointly Gaussian
    \item $\E[G] = \E[X]$ and $\cov(G) = \cov(X)$.
\end{enumerate}
The above two properties uniquely define the distribution of $G$. We next define the notion of Hausdorff distance, which will be used in the universality result below.
\begin{definition}[Hausdorff distance] Let $A, B \subset \R^n$. Then the Hausdorff distance between $A$ and $B$ is given by,
\begin{align*}
    d_H(A,B) = \inf \{ \varepsilon \geq 0 ; A \subseteq B_{\varepsilon} \text{ and } B \subseteq A_{\varepsilon} \}
\end{align*}
 where $A_{\varepsilon}$ (resp. $B_{\varepsilon}$) denotes the $\varepsilon$-neighbourhood of $A$.
\end{definition}

\begin{lemma}[Theorem 2.4 \cite{brailovskaya2022universality}] \label{thm:brailovskayavanhandel}
Given the random matrix model \eqref{eq:model}, define the following:
\begin{align*}
	\sigma(X) & = \norm{\E[(X-\E X)^2]}^{\frac{1}{2}}_{op}\\
	\sigma_*(X) &= 
	\sup \limits _{\|v\|=\|w\|=1} \E[
	|\langle v,(X-\E X)w\rangle|^2]^{\frac{1}{2}}\\
\text{and}\quad	R(X) &= \norm{\max_{1\le i\le n}\norm{Z_i}_{op}}_\infty.
\end{align*}
There is a universal constant $C>0$ such that for any $t \geq 0$,
  \begin{align*}
\Pb\Big(d_H\big(\spec(X),\spec(G)\big)>C\epsilon(t)\Big)\leq de^{-t},\\
\text{where}\quad \epsilon(t) =
	\sigma_*(X)  t^{\frac{1}{2}} +
	R(X)^{\frac{1}{3}}\sigma(X)^{\frac{2}{3}} t^{\frac{2}{3}} +
	R(X) t.
\end{align*}
\end{lemma}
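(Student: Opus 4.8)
Since this lemma is quoted verbatim from \cite{brailovskaya2022universality}, I will sketch the strategy I would use to prove a universality statement of this type. The plan is to pass through the resolvent: set $R_X(z) := (z\Id - X)^{-1}$ and $R_G(z) := (z\Id - G)^{-1}$ for $z$ in the upper half-plane, prove a quantitative comparison of $\E\tr R_X(z)$ (and of the matrix resolvent itself) with the Gaussian counterpart, with error bounded by $\sigma_*(X)$, $\sigma(X)$, $R(X)$ and a negative power of $\im z$, and then convert this into a bound on $d_H(\spec(X),\spec(G))$ via the Helffer--Sjöstrand formula together with a concentration estimate for linear spectral statistics.

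The first main step is an interpolation (Lindeberg replacement). Let $G_1,\dots,G_n$ be independent centered Gaussian matrices with $\cov(G_i)=\cov(Z_i)$; since the $Z_i$ are independent, $Z_0+\sum_i G_i$ has exactly the law of the Gaussian model $G$. For a smooth test function $f$ I would set $X^{(k)} := Z_0 + \sum_{i<k}G_i + \sum_{i\ge k}Z_i$, so that $X^{(1)}=X$ and $X^{(n+1)}=G$, and telescope
\[
\E\tr f(X)-\E\tr f(G)=\sum_{k=1}^n\Big(\E\tr f(X^{(k)})-\E\tr f(X^{(k+1)})\Big).
\]
In the $k$-th term $X^{(k)}$ and $X^{(k+1)}$ share a common part $Y_k$ and differ only in the summand $Z_k$ versus $G_k$; expanding $A\mapsto\tr f(Y_k+A)$ to second order with a third-order integral remainder and taking expectations, the zeroth-, first- and second-order contributions depend on the swapped summand only through its mean ($=0$) and covariance (which agree), so they cancel, and the surviving third-order remainder --- after trading one factor of $\|Z_k\|$ for $R(X)$ and summing using $\sum_k\E Z_k^2\preceq\E(X-\E X)^2$ and the weak variance $\sigma_*(X)$ --- is controlled by $\epsilon(t)$ times suitable norms of the derivatives of $f$. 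This is the step where the three structural parameters enter, and where the sharp exponents must be extracted.

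Two auxiliary ingredients are needed. First, concentration: for fixed smooth $f$, the linear statistic $\tr f(X)$ concentrates around its mean, with the fluctuation contributed by each $Z_i$ governed by $\sigma_*(X)$ (sub-Gaussian, giving the $t^{1/2}$ term) and the contribution of the events $\{\,\|Z_i\|\text{ large}\,\}$ governed by $R(X)$ (giving the heavier $R(X)^{1/3}\sigma(X)^{2/3}t^{2/3}$ and $R(X)t$ terms); I would obtain this from a bounded-differences / martingale argument over the independent $Z_i$. Second, crude a priori operator-norm bounds $\|X\|,\|G\|=O(\sigma(X)+R(X)\log d)$ with probability $1-de^{-t}$, so that both spectra lie in a bounded window and the later union bound over a net of $O(1/\epsilon(t))$ test intervals covering that window costs only the stated factor $d$. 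To pass from resolvents to spectra I would apply the comparison to functions $f$ approximating indicators of intervals of length comparable to $\epsilon(t)$, writing $\tr f(X)$ via Helffer--Sjöstrand as an integral of $\tr R_X(z)$ against the $\bar\partial$ of an almost-analytic extension $\tilde f$, and controlling the integrand down to $\im z\sim\epsilon(t)$. If an interval $I$ at distance much larger than $\epsilon(t)$ from $\spec(G)$ contained an eigenvalue of $X$, then $\tr\mathbf 1_I(X)\ge1$ while $\E\tr\mathbf 1_{I'}(G)=0$ for a slightly enlarged $I'$, contradicting the comparison once both the comparison error and the concentration error are $<1$ --- which forces the separation to be at most $C\epsilon(t)$; the reverse inclusion follows symmetrically, and a union bound over the net completes the argument.

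The hard part is not the architecture but the sharp exponents: getting exactly $\sigma_*(X)t^{1/2}+R(X)^{1/3}\sigma(X)^{2/3}t^{2/3}+R(X)t$ rather than a lossy bound. This requires running the Lindeberg replacement against the self-consistent deterministic equivalent (the solution of the matrix Dyson equation for $G$) rather than naively, optimizing the level at which $\|Z_i\|$ is truncated and replaced by $R(X)$, and establishing uniform-in-$z$ stability of that fixed-point equation all the way down to $\im z\sim\epsilon(t)$; this is the technical core of \cite{brailovskaya2022universality}, and I would expect it to be the main obstacle.
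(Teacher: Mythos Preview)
The paper does not give its own proof of this lemma: it is quoted as Theorem~2.4 of \cite{brailovskaya2022universality} and used as a black box, so there is no in-paper argument to compare your proposal against. Your sketch is a reasonable high-level outline of how such a universality statement is established, and you correctly identify the main ingredients (resolvent comparison, interpolation between $X$ and $G$, concentration of linear statistics, and the matrix Dyson equation / deterministic equivalent to push the resolvent bound down to the right scale in $\im z$).

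One point of difference worth noting: the actual proof in \cite{brailovskaya2022universality} does not use a discrete Lindeberg swap as you describe, but rather a continuous interpolation combined with a cumulant expansion for the resolvent, which is what allows them to extract the precise exponents $t^{1/2}$, $t^{2/3}$, $t$ with the right coefficients $\sigma_*(X)$, $R(X)^{1/3}\sigma(X)^{2/3}$, $R(X)$. Your Lindeberg-plus-third-order-Taylor approach is the classical route and would give a comparison bound, but recovering exactly these exponents from it is delicate --- you flag this yourself as ``the hard part,'' and indeed it is not obvious that the discrete swap by itself yields the mixed term $R(X)^{1/3}\sigma(X)^{2/3}$ rather than something cruder. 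Since the present paper only \emph{uses} the lemma, none of this affects the paper's results.
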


This result can be viewed as a sharper version of the Matrix Bernstein inequality \cite{tropp2011improved} for the concentration of sums of random matrices. To see this, note that for the random matrix model \eqref{eq:model}, Matrix Bernstein implies that:
\begin{align*}
\E \norm{X} \lesssim \sigma(X)\sqrt{\log d}+R(X)\log d,
\end{align*}
which can be recovered by Lemma \ref{thm:brailovskayavanhandel} (see Example 2.12 in \cite{brailovskaya2022universality}).
However, Lemma \ref{thm:brailovskayavanhandel} together with Theorem 1.2 in \cite{bandeira2023matrix} implies that:
\begin{align*}
\E(\norm{X}) \le C(\sigma(X)+v(X)^{1/2}\sigma(X)^{1/2}(\log d)^{3/4}+R(X)^{\frac{1}{3}}\sigma(X)^{\frac{2}{3}} (\log d)^{2/3}+R(X) \log d)
\end{align*}
where $v(X)=\norm{\cov(X)}$ is the norm of the covariance matrix of the $d^2$ scalar entries. This result can be sharper than the Matrix Bernstein inequality because when $v(X)$ and $R(X)$ are small enough, then we will have $\E \norm{X} \lesssim \sigma(X)$, which improves the Matrix Bernstein inequality by removing the $\sqrt{\log d}$ factor.

\bigskip

\paragraph{\textit{Spectrum of Gaussian Matrices}} To leverage the universality properties of the random matrix model, we shall rely on the following result about the singular values of Gaussian matrices, which in particular can be used to recover the optimal subspace embedding guarantee for Gaussian sketches.

\begin{lemma}[(2.3), \cite{rudelson2010non}]\label{lem:Gaussianspectrum}
    Let $G$ be an $m \times n$ matrix whose entries are independent standard normal variables. Then,
    \begin{align*}
        \Pb(\sqrt{m}-\sqrt{n}-t \leq s_{min}(G) \leq s_{\max}(G) \leq \sqrt{m}+\sqrt{n}+t) \geq 1 - 2e^{-t^2/2}
    \end{align*}
\end{lemma}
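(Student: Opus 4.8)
The plan is the classical two-step argument: control the \emph{expectations} of the extreme singular values with a Gaussian comparison inequality, and then upgrade to high-probability tail bounds via Gaussian concentration of measure. Throughout assume $m\ge n$ (otherwise replace $G$ by $G^\top$, which preserves all singular values), and view $G$ as a standard Gaussian vector in $\R^{mn}$ equipped with the Frobenius inner product.

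\textbf{Step 1 (Lipschitz concentration).} The map $G\mapsto s_{\max}(G)=\sup_{\|u\|=\|v\|=1}\langle Gu,v\rangle$ is a supremum of linear functionals, each of unit Frobenius norm, hence $1$-Lipschitz; likewise $G\mapsto s_{\min}(G)=\inf_{\|u\|=1}\|Gu\|$ is an infimum of the $1$-Lipschitz maps $G\mapsto\|Gu\|$ and is therefore $1$-Lipschitz. The Gaussian concentration inequality for $1$-Lipschitz functions then gives, for every $t\ge 0$,
\[
\Pb\big(s_{\max}(G)>\E\,s_{\max}(G)+t\big)\le e^{-t^2/2},\qquad \Pb\big(s_{\min}(G)<\E\,s_{\min}(G)-t\big)\le e^{-t^2/2}.
\]

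\textbf{Step 2 (expectation bounds).} It remains to show $\E\,s_{\max}(G)\le\sqrt m+\sqrt n$ and $\E\,s_{\min}(G)\ge\sqrt m-\sqrt n$. Introduce on $S^{n-1}\times S^{m-1}$ the Gaussian processes $X_{u,v}=\langle Gu,v\rangle$ and $Y_{u,v}=\langle g,u\rangle+\langle h,v\rangle$, where $g\sim\cN(0,I_n)$ and $h\sim\cN(0,I_m)$ are independent. A direct computation gives $\E|X_{u,v}-X_{u',v'}|^2=2-2\langle u,u'\rangle\langle v,v'\rangle$ and $\E|Y_{u,v}-Y_{u',v'}|^2=\|u-u'\|^2+\|v-v'\|^2$, whose difference equals $2(1-\langle u,u'\rangle)(1-\langle v,v'\rangle)\ge 0$, with equality whenever $u=u'$. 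This is exactly the hypothesis of the Gaussian comparison inequalities: Sudakov--Fernique gives $\E\,s_{\max}(G)=\E\sup_{u,v}X_{u,v}\le\E\sup_{u,v}Y_{u,v}=\E\|g\|+\E\|h\|\le\sqrt n+\sqrt m$, while Gordon's min--max comparison inequality (applied with a suitably normalized comparison process) gives $\E\,s_{\min}(G)=\E\inf_u\sup_v X_{u,v}\ge\E\inf_u\sup_v Y_{u,v}=\E\|h\|-\E\|g\|$, which is at least $\sqrt m-\sqrt n$ by the standard estimates on the mean Euclidean norm of a Gaussian vector. Substituting these two bounds into the concentration estimates of Step 1 and taking a union bound yields $\Pb\big(\sqrt m-\sqrt n-t\le s_{\min}(G)\le s_{\max}(G)\le\sqrt m+\sqrt n+t\big)\ge 1-2e^{-t^2/2}$, as claimed.

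I expect the delicate point to be the lower bound $\E\,s_{\min}(G)\ge\sqrt m-\sqrt n$. The upper bound on $s_{\max}(G)$ needs only the one-sided Sudakov--Fernique inequality, but the smallest singular value genuinely requires Gordon's min--max comparison theorem, and matching its covariance hypotheses to the comparison process takes some care: the naive unit-sphere normalization does not equalize the variances of $X$ and $Y$, so a minor adjustment (rescaling, or enlarging $S^{m-1}$ to the unit ball) is needed. One must then verify that $\E\|h\|-\E\|g\|$ indeed dominates $\sqrt m-\sqrt n$, which reduces to the elementary but slightly delicate fact that $k\mapsto\E\|\cN(0,I_k)\|_2-\sqrt k$ is nondecreasing, so that (with $g\in\R^n$, $h\in\R^m$, $m\ge n$) the negative correction on the $h$ side is no larger than that on the $g$ side. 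The concentration step, by contrast, is a routine invocation of Gaussian isoperimetry once the $1$-Lipschitz property is observed.
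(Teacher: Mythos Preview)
The paper does not prove this lemma; it is quoted as a preliminary from the cited Rudelson--Vershynin survey and used as a black box. Your proposal reconstructs exactly the standard argument from that source---Slepian/Gordon comparison for the expectations of the extreme singular values, followed by Gaussian Lipschitz concentration for the tails---so there is nothing to compare against within the present paper.

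Your write-up is essentially correct, and you have put your finger on the one genuinely nontrivial step: passing from Gordon's conclusion $\E\,s_{\min}(G)\ge \E\|h\|-\E\|g\|$ to the clean statement $\E\,s_{\min}(G)\ge \sqrt m-\sqrt n$ requires the monotonicity of $k\mapsto \E\|\cN(0,I_k)\|-\sqrt k$, which is true but needs an argument (e.g.\ via the explicit Gamma-ratio formula or the variance bound $\Var\|\cN(0,I_k)\|\le 1$). One minor correction: the increment-form of Gordon's inequality that you actually need does \emph{not} require equal variances of $X_{u,v}$ and $Y_{u,v}$---it works directly with the increment comparison you computed, so the ``rescaling or enlarging to the ball'' adjustment you worry about is unnecessary.
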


\section{Analysis of Oblivious Sparse Embeddings}\label{s:analysis-direct}

In this section, we state and prove our main OSE result, Theorem \ref{osngeneral} (given as Theorem~\ref{t:direct} in Section \ref{s:intro}).
Before we get to the proof, however, we propose a new model for sparse OSEs that is designed to exploit the strength of our proof in dealing with the number of independent random bits required. 

To illustrate the issue, consider that in the OSE-IID-ENT model (Definiton \ref{oseie}), we need $mn$ many random bits to determine the nonzero entries of the matrix $S$, even though the matrix will have far fewer non-zero entries. Naturally, there are many known strategies for improving this, including OSNAP-IND-COL (Definition \ref{osnap}) and even more elaborate hashing constructions based on polynomials over finite fields \cite{nelson2013osnap}, which allow reducing the random bit complexity to polylogarithmic in the dimensions. However, these constructions do not provide sufficient independence needed in the random matrix model \eqref{eq:model} to apply the universality result of Brailovskaya and Van Handel (Lemma~\ref{thm:brailovskayavanhandel}). We address this with the \emph{independent diagonals} distribution family, defined shortly.

In order to use universality for establishing a subspace embedding guarantee, we must analyze a symmetrized version of the matrix $SU$, for an $n\times d$ orthogonal matrix $U$, with $S$ being a sum of sparse independent random matrices, say $Y_i$'s. Naturally, to compare the spectra of $SU$ and an appropriate Gaussian model, the matrix $S$ cannot be too sparse. However, since the individual entries of each $Y_i$ need not be independent, we can reduce the number of independent summands in $S$ by making each individual summand $Y_i$ denser. At the same time, we need to control $\norm{Y_iU}$, just as we would when using the standard matrix Bernstein inequality.

Both these goals can be achieved by placing non-zero entries along a diagonal of $Y_i$. Placing $\pm 1$ entries along a diagonal of a matrix keeps its norm bounded by $1$ whereas other arrangements (say, along a row or column) do not. Moreover, these $\pm 1$ entries along the diagonal need not be independent, they can simply be uncorrelated. As a result, an instance of $Y_i$ can be generated with just $O(1)$ random bits.

\subsection{Independent Diagonals Construction}
With this motivation, we define the independent diagonals distribution formally (Figure \ref{fig:yosedia} illustrates this construction).

\begin{definition}[OSE-IND-DIAG]\label{oseinddiag}
An $m \times n$ random matrix $S$ is called an oblivious subspace embedding with independent diagonals (OSE-IND-DIAG) with parameter $p$ if it is constructed in the following way. Assume that $np$ is an integer. Let $W=(w_1, \ldots, w_m)$ be a random vector whose components are $\pm 1$ valued and uncorrelated, i.e. $\E[w_i] = 0, \E[w_i^2] = 1, \E[w_iw_j]=0$. We define $\gamma$ to be a random variable uniformly distributed in $ [n]$. Let $\gamma_1,...,\gamma_{np}$ be i.i.d. copies of $\gamma$. Let $W_1,...,W_{np}$ be i.i.d. copies of $W$. Let $F_{j}(x)$ be a function that transforms a $m$ dimensional vector $x$ to the $m\times n$ matrix putting $x$ on the $j\textsuperscript{th}$ diagonal, i.e. positions $(1,j)$ through $(m,j+m \mod n)$ with all other entries zero (See Fig \ref{fig:yosedia} for an illustration). Let $S=\sum \limits_{l \in [np]}F_{\gamma_l}(W_l)$. 
\end{definition}

\begin{figure}
    \centering
    \hspace{-3mm}\begin{minipage}[L]{0.5\linewidth}\includegraphics[width=\linewidth]{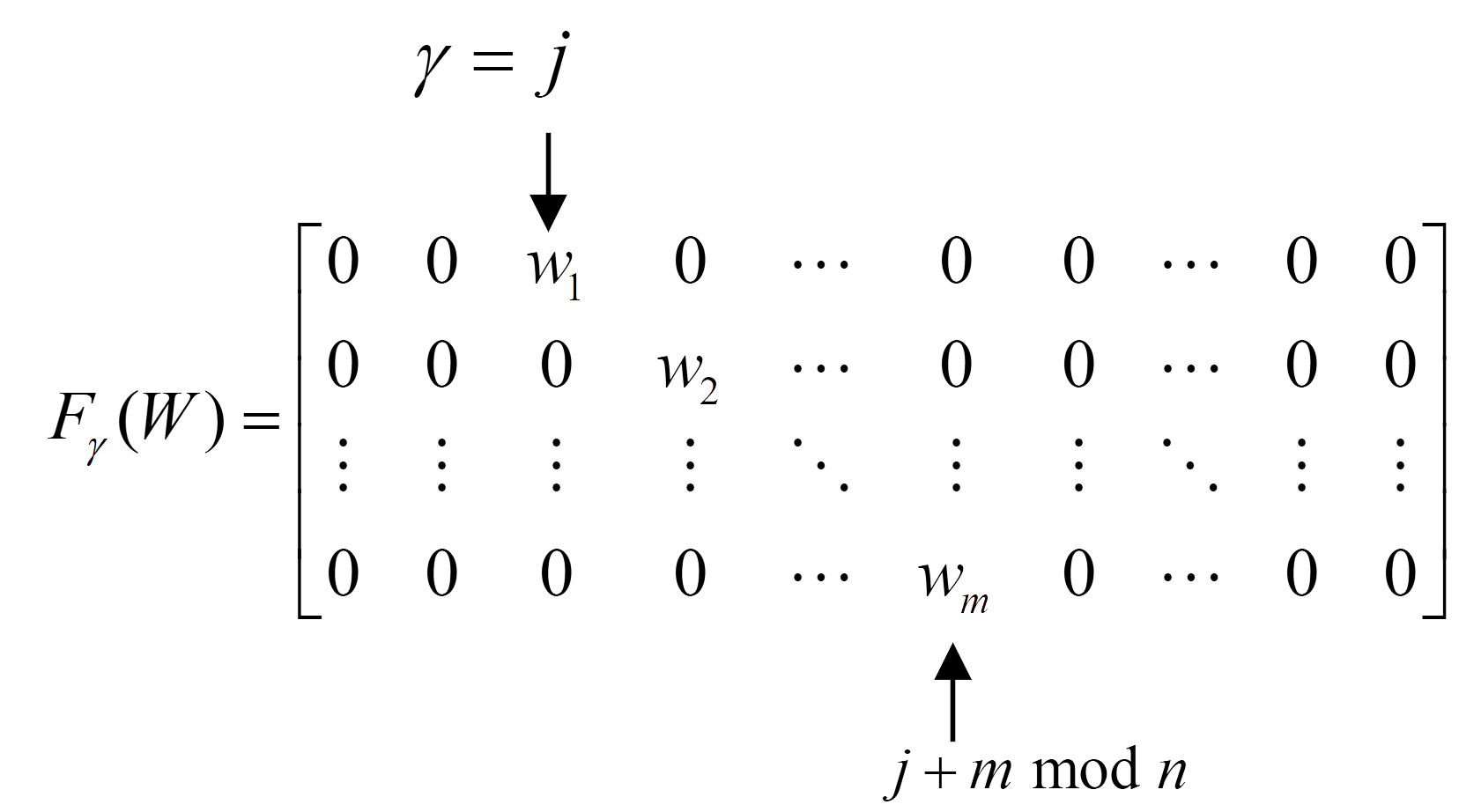}\end{minipage}\hspace{3mm}%
    \begin{minipage}[R]{0.5\linewidth}\includegraphics[width=\linewidth]{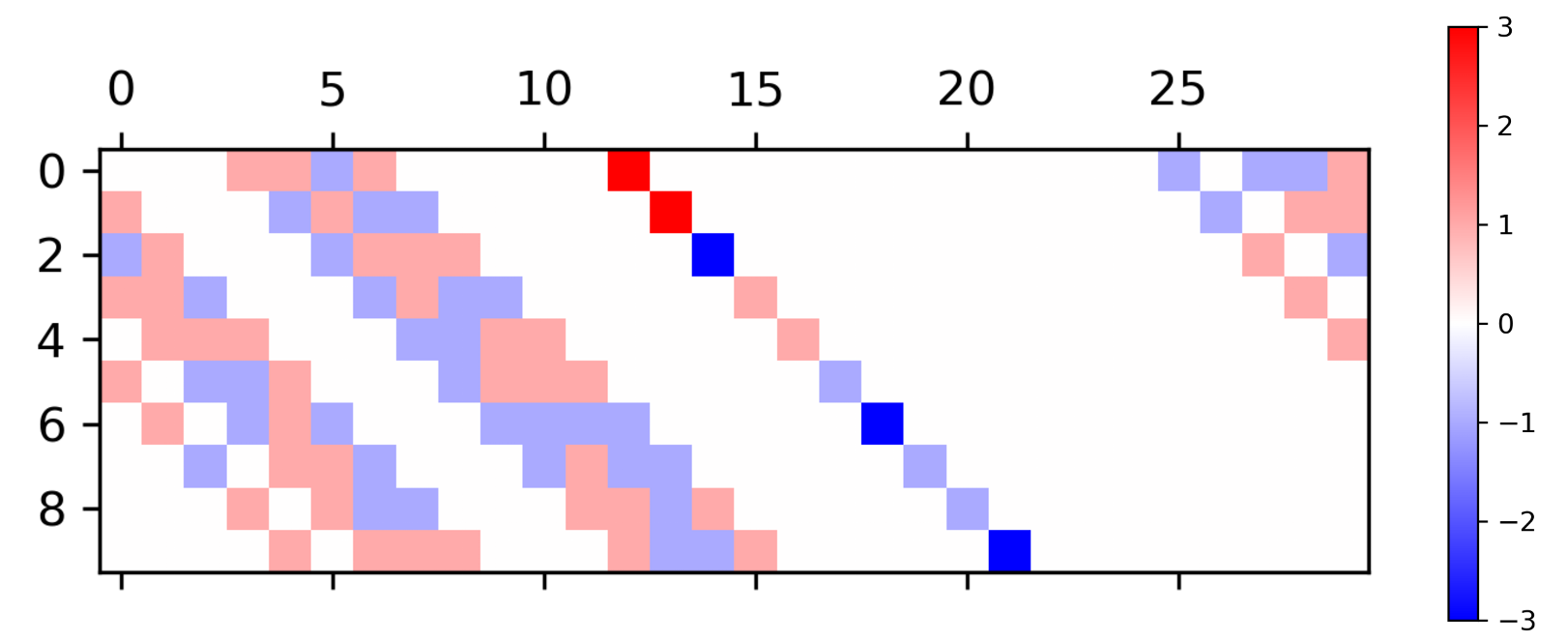}
    \end{minipage}

    \caption{Left: Structure of the random matrix $F_{\gamma_l}(W_l)$. Right: Illustration of an embedding matrix $S$ with independent diagonals with parameters $d=8$, $m=10$, and $n=30$, showing how the nonzero entries of $S$ occur along diagonals. The number of diagonals is controlled by the parameter $p$.}
    \label{fig:yosedia}
\end{figure}

Universality results show that the properties of a general random matrix are similar to the properties of a gaussian random matrix with the same covariance profile. Therefore, to analyze the OSE models, we need to first calculate the covariances between entries.

\begin{lemma}[Variance and Uncorrelatedness] \label{lem:entries} 
Let $p = p_{m,n} \in (0,1]$ and $S=\{s_{ij}\}_{i \in [m], j \in [n]}$ be a $m \times n$ random matrix distributed according to the OSE-IID-ENT, OSNAP-IND-COL, or OSE-IND-DIAG distributions. Then, $\E(s_{ij})=0$ and $\operatorname{Var}(s_{ij})=p$ for all $i \in [m], j \in [n]$, and $\cov(s_{i_1 j_1},s_{i_2 j_2})=0$ for any $\{i_1,i_2\} \subset [m], \{j_1,j_2\} \subset [n]$ and $ (i_1,j_1)\neq (i_2,j_2) $
\end{lemma}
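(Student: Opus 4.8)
The statement is a direct computation of first and second moments of the entries $s_{ij}$ under each of the three distributions, so the plan is simply to handle the three cases separately, checking in each case that (i) $\E[s_{ij}]=0$, (ii) $\Var(s_{ij})=p$, and (iii) distinct entries are uncorrelated.

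For the \textsc{OSE-IID-ENT} case this is immediate: $s_{ij}=\delta_{ij}\xi_{ij}$ with $\delta_{ij}\sim\mathrm{Ber}(p)$ independent of the Rademacher $\xi_{ij}$, so $\E[s_{ij}]=\E[\delta_{ij}]\E[\xi_{ij}]=0$, $\E[s_{ij}^2]=\E[\delta_{ij}^2]\E[\xi_{ij}^2]=p$, and uncorrelatedness of distinct entries follows because the entries are independent. For the \textsc{OSNAP-IND-COL} case, one works block by block: within a column $j$, each of the $s$ blocks (of $m/s$ rows) independently contains exactly one nonzero, placed uniformly at random and given a random sign. For a fixed entry $(i,j)$, the event that the nonzero of its block lands on row $i$ has probability $(s/m)=p$, and conditioned on that the sign is Rademacher; hence $\E[s_{ij}]=0$ and $\E[s_{ij}^2]=p$ (here I use that the normalization $1/\sqrt s$ in the raw OSNAP is absorbed into the parameter convention, or equivalently that the matrix referred to in the lemma is the unnormalized version $S$ with entries in $\{0,\pm1\}$ — this should be stated explicitly to match Definition \ref{osnap}). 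For the covariance: two entries in different columns are independent since the columns are i.i.d.; two entries in the same column but different blocks are independent since blocks are jointly independent; two entries in the same column and same block can never be simultaneously nonzero (the block has exactly one nonzero), so their product is $0$ and, since each has mean $0$, they are uncorrelated.

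The \textsc{OSE-IND-DIAG} case is the one requiring the most care, and I expect it to be the main (though still modest) obstacle. Here $S=\sum_{l\in[np]}F_{\gamma_l}(W_l)$, where the $\gamma_l$ are i.i.d.\ uniform on $[n]$ and the $W_l$ are i.i.d.\ copies of the uncorrelated $\pm1$ vector $W$, with $(\gamma_l)$ and $(W_l)$ independent of each other. A fixed position $(i,j)$ receives a contribution from summand $l$ precisely when the diagonal index $\gamma_l$ equals the unique value $j^{*}(i,j)$ for which the $j$-th diagonal passes through row $i$ (i.e. solving $j \equiv \gamma_l + (i-1) \bmod n$), and in that case the contribution is the $i$-th coordinate of $W_l$, namely $w_i^{(l)}$. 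Thus $s_{ij}=\sum_{l}\ind\{\gamma_l=j^{*}\}\,w_i^{(l)}$. Taking expectations and using independence of $\gamma_l$ from $W_l$ together with $\E[w_i^{(l)}]=0$ gives $\E[s_{ij}]=0$. For the variance, expand the square: the cross terms $l\neq l'$ vanish by independence of distinct summands and mean-zero of the $W$'s, while the diagonal terms give $\sum_l \Pb(\gamma_l=j^{*})\,\E[(w_i^{(l)})^2] = np\cdot (1/n)\cdot 1 = p$, using $\ind\{\gamma_l=j^*\}^2 = \ind\{\gamma_l=j^*\}$ and $\E[(w_i)^2]=1$. For uncorrelatedness of two distinct positions $(i_1,j_1)\neq(i_2,j_2)$, again only same-$l$ terms survive, contributing $\sum_l \E[\ind\{\gamma_l=j_1^*\}\ind\{\gamma_l=j_2^*\}\,w_{i_1}^{(l)}w_{i_2}^{(l)}]$. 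If $i_1\neq i_2$, then $w_{i_1}^{(l)}$ and $w_{i_2}^{(l)}$ are uncorrelated within $W_l$ and independent of $\gamma_l$, so each term is $\E[\ind\{\gamma_l=j_1^*\}\ind\{\gamma_l=j_2^*\}]\,\E[w_{i_1}w_{i_2}]=0$. If $i_1=i_2$ (forcing $j_1\neq j_2$, hence $j_1^*\neq j_2^*$), the two indicator events are disjoint — a single diagonal index cannot equal two distinct values — so the product of indicators is identically $0$. In both sub-cases the covariance vanishes, which is the key place where the precise structure of the diagonal construction (one diagonal per summand, coordinates merely uncorrelated rather than independent) is used. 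Assembling these three case analyses completes the proof.
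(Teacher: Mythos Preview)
Your proposal is correct and follows essentially the same approach as the paper: a direct case-by-case computation of moments, with the OSE-IND-DIAG case handled by analyzing a single summand $F_{\gamma_l}(W_l)$ and splitting the covariance calculation into ``same row'' vs.\ ``different rows'' (the paper equivalently splits by ``same diagonal'' vs.\ ``different diagonal''). Your parenthetical concern about the $1/\sqrt{s}$ normalization is moot, since Definition~\ref{osnap} places $\pm1$ entries (via the one-hot distribution) rather than $\pm1/\sqrt{s}$.
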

\begin{proof}
For i.i.d. entries case, the result is straightforward.

For the independent diagonals case, since $S$ is a sum of i.i.d. copies of $Y := F_{\gamma}(W)$, we analyze $Y$ first. By definition,
\begin{align*}
    Y = \sum_{i=1}^{m} w_i \left( \sum_{j=1}^{n} \mathbbm{1}_{\{j-i+1\}}(\gamma)E_{i,j} \right)
\end{align*}
where $W=(w_1, \ldots, w_m)$ be a random vector whose components are $\pm 1$ valued and uncorrelated, i.e. $\E[w_i] = 0, \E[w_i^2] = 1, \E[w_iw_j]=0$ and $j-i+1$ is considered modulo $n$ in $[n]$. Thus,
\begin{align*}
    y_{ij} = w_i \mathbbm{1}_{\{j-i+1\}}(\gamma)
\end{align*}
First, note that we have $\E(y_{ij})=0$, and,
\begin{align*}
\E(y_{ij}^2)=(\E(\mathbbm{1}_{\{j-i+1\}}(\gamma)^2)\E(w_i^2)=\frac{1}{n}
\end{align*}
Moreover,
\begin{align*}
    \cov(y_{i_1j_1}, y_{i_2 j_2}) &= \E(y_{i_1 j_1}y_{i_2 j_2}) \\
    &= \E(w_i w_k \mathbbm{1}_{\{j_1-i_1+1\}}(\gamma)\mathbbm{1}_{\{j_2-i_2+1\}}(\gamma))
    \\ &= \E(w_{i_1}w_{i_2})\E(\mathbbm{1}_{\{j_1-i_1+1\}}(\gamma)\mathbbm{1}_{\{j_2-i_2+1\}}(\gamma))
\end{align*}
by independence between $\gamma$ and $W$.

When $j_1-i_1 \neq j_2-i_2$, we know that $\mathbbm{1}_{\{j_1-i_1+1\}}(\gamma)\mathbbm{1}_{\{j_2-i_2+1\}}(\gamma)$ is always $0$ because $\gamma$ can only take one value.

When $j_1-i_1 = j_2-i_2$, since $(i_1,j_1)\neq (i_2,j_2) $, we must have $i_1 \ne i_2$. In this case, by uncorrelatedness of the components of $W$,
\begin{align*}
    \cov(y_{i_1 j_1}, y_{i_2 j_2}) = 0 \cdot \E(\mathbbm{1}_{\{j_1-i_1+1\}}(\gamma)\mathbbm{1}_{\{j_2-i_2+1\}}(\gamma)) = 0 
\end{align*}
Now, since \begin{align*}S=\sum\limits_{l=1}^{np}Y_l \text{,}\end{align*}we have, by using the independence of $Y_l$, $\Var(s_{i_1 j_1})=p$, and, when $(i_1, j_1)\neq (i_2, j_2)$,
\begin{align*}
    \cov(s_{i_1 j_1}, s_{i_2 j_2}) &= \E(s_{i_1 j_1}s_{i_2 j_2}) = \sum_{l=1}^{np} \E \left( (Y_l)_{i_1 j_1}(Y_l)_{i_2 j_2} \right) = 0
\end{align*}

We now consider the independent subcolumns distribution OSNAP-IND-COL. Recall that in the OSNAP-IND-COL distribution, each subcolumn $S_{[(m/s)(i-1)+1:(m/s)i]\times\{ j\}}$ of $S$ for $i\in [s], j \in [n]$ has the one hot distribution, and all these subcolumns are jointly independent. Therefore, the same argument as above, we directly calculate the variances of the entries
\begin{align*}
    \E (s_{ij}^2) = \Pb(s_{ij} \neq 0) = \frac{1}{\frac{m}{pm}} = p
\end{align*}

For the covariances, we first observe that $\cov(s_{i_1 j_1}, s_{i_2 j_2})=0$ if $(i_1,j_1)$ and $(i_2,j_2)$ belong to two different subcolumns by independence. If $(i_1,j_1)$ and $(i_2,j_2)$ belong to the same subcolumn, we have
\begin{align*}
    \cov(s_{i_1 j_1}, s_{i_2 j_2})=\E(s_{i_1 j_2}s_{i_2 j_1})=\E(0)=0
\end{align*}
because each subcolumn has one hot distribution which means at most one of $s_{i_1 j_1}$ and $s_{i_2 j_2}$ can be nonzero.

\end{proof}

\begin{lemma}[Random Bit Complexity]
    For a parameter $0<p<1$, the random bit complexity to generate the OSE-IND-DIAG distribution  is $O(pn \log(n))$.
\end{lemma}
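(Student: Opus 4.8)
The plan is to bound separately the number of unbiased random bits needed to sample the $np$ diagonal indices $\gamma_1,\dots,\gamma_{np}$ and the $np$ sign vectors $W_1,\dots,W_{np}$, and then add the two estimates. Since the matrix $S=\sum_{l\in[np]}F_{\gamma_l}(W_l)$ is a deterministic function of these $2np$ objects, and they are jointly independent, the total cost is the sum of the individual costs.

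First, the indices. Each $\gamma_l$ is uniform on $[n]$, and there are $np$ of them, drawn independently. A single uniform sample from $[n]$ is obtained from $\lceil\log_2 n\rceil+O(1)$ unbiased bits --- e.g.\ draw $k=\lceil\log_2 n\rceil+c$ bits, read them as an integer in $[0,2^k)$ and reduce modulo $n$, which introduces only $O(2^{-c})$ total-variation bias (or, if one insists on an exact draw, rejection sampling uses $O(\log n)$ bits in expectation). Hence $\gamma_1,\dots,\gamma_{np}$ cost $O(np\log n)$ bits in total.

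Next, the sign vectors. The only requirement on $W=(w_1,\dots,w_m)$ is that its coordinates be $\pm1$-valued with $\E[w_i]=0$, $\E[w_i^2]=1$, and $\E[w_iw_j]=0$ for $i\neq j$; in particular any pairwise-independent distribution on $\{\pm1\}^m$ is admissible, so $W$ need not consume $m$ fresh bits --- this is precisely the feature that the \emph{independent diagonals} design was built to exploit. Concretely, set $k=\lceil\log_2(m+1)\rceil$, assign distinct nonzero vectors $v_1,\dots,v_m\in\mathbb{F}_2^k$ to the coordinates, sample $a\in\mathbb{F}_2^k$ uniformly (using $k$ bits), and put $w_i=(-1)^{\langle a,v_i\rangle}$. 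Since two distinct nonzero vectors over $\mathbb{F}_2$ are linearly independent, $(\langle a,v_i\rangle,\langle a,v_j\rangle)$ is uniform on $\mathbb{F}_2^2$ for $i\neq j$, so the $w_i$ are uniform on $\{\pm1\}$ and pairwise independent, giving $\E[w_i]=0$, $\E[w_i^2]=1$, and $\E[w_iw_j]=\E[w_i]\E[w_j]=0$, as needed. Thus each $W_l$ costs $O(\log m)$ bits, so all $np$ of them cost $O(np\log m)$.

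Summing the two contributions and using $m\le n$ (hence $\log m\le\log n$), the total random bit complexity is $O(np\log n)+O(np\log m)=O(np\log n)$, which is the claim. There is no genuine obstacle here: the only points requiring a moment's care are the realization of the uncorrelated sign vector with $O(\log m)$ rather than $\Theta(m)$ bits via the pairwise-independent linear construction, and the routine fact that an exactly uniform draw from $[n]$ for non-power-of-two $n$ needs a short padding/rejection argument.
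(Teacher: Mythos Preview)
Your proof is correct and follows essentially the same approach as the paper: decompose the cost into sampling the $np$ uniform indices $\gamma_l$ (each $O(\log n)$ bits) and the $np$ sign vectors $W_l$ (each $O(\log m)$ bits via a pairwise-independence construction), then sum and use $m\le n$. In fact you are more explicit than the paper, spelling out the $\mathbb F_2$-linear pairwise-independent construction and the rejection/padding step for sampling from $[n]$, whereas the paper simply cites standard references for these facts.
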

\begin{proof}
    We can generate the vector $W$ using $\log(m)$ random bits, using a construction involving linear functions on finite fields, to get $m$ 2-wise independent random variables (Prop. 3.24 and Corollary 3.34. in \cite{vadhan2012pseudorandomness}) and then transforming them to be $\pm 1$ valued. To generate the random variable $\gamma$ which is uniformly distributed in $[n]$, the number of random bits needed is proportional to the information entropy of $\gamma$, which is $O(\log(n))$ by \cite[Theorem 5.11.2]{cover2006elements}. Taken together, the total number of random bits needed to generate each $F_{\gamma_l}(W_l)$ is $O(\log(m)+\log(n))$.
    
    Finally, since $S$ is the sum of $np$ many independent copies of $F_{\gamma_l}(W_l)$, it takes $O((\log(n)+\log(m)) \cdot np )=O(pn \log(n))$ many random bits to generate $S$.    
\end{proof}

\subsection{Proof of Theorem \ref{t:direct}}

We now state the main theorem of this section, which is the detailed version of Theorem \ref{t:direct}.
\begin{theorem}[Analysis of OSE by Universality]\label{osngeneral}
Let $S$ be an $m \times n$ matrix distributed according to the OSE-IID-ENT, OSNAP-IND-COL, or OSE-IND-DIAG distributions with parameter $p$. Let $U$ be an arbitrary $n \times d$ deterministic matrix such that $U^TU=I$. Then, there exist constants $c_{\ref{osngeneral}.1}$ and $c_{\ref{osngeneral}.2}$ such that for any $\varepsilon, \delta > 0$, we have 
\begin{align*}
\Pb \left( 1 - \varepsilon  \leq s_{\min}((1/\sqrt{pm})SU)   \leq s_{\max}((1/\sqrt{pm})SU) \leq 1 + \varepsilon \right) \geq 1-\delta
\end{align*}
when $m \geq c_{\ref{osngeneral}.1}  \max (d, \log(4/\delta))/\varepsilon^2$ and $pm>c_{\ref{osngeneral}.2}{(\log (d/\delta ))^4}/{\varepsilon ^6}$.

Alternatively, given a fixed $\theta<3$, there exist constants $c_{\ref{osngeneral}.3}$, $c_{\ref{osngeneral}.4}$ and $c_{\ref{osngeneral}.5}$ such that for $m \geq \max \{ (1+\theta)d, c_{\ref{osngeneral}.3}\log(4/\delta))/\theta^2 \}$ and $pm>c_{\ref{osngeneral}.4}{(\log (d/\delta ))^4}/{\theta^6}$, 

\begin{align*}
    \Pb \left( \kappa(\frac{1}{\sqrt{pm}}SU) \leq \frac{c_{\ref{osngeneral}.5}}{\theta} \right) \geq 1-\delta
\end{align*}

\end{theorem}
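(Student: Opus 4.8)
The plan is to reduce the subspace embedding statement to a spectral statement about a symmetrized random matrix, and then apply the Brailovskaya--Van Handel universality result (Lemma \ref{thm:brailovskayavanhandel}) to compare its spectrum to that of a Gaussian matrix whose singular values are controlled by Lemma \ref{lem:Gaussianspectrum}. First I would observe that the claimed inequality $1-\varepsilon\le s_{\min}((1/\sqrt{pm})SU)\le s_{\max}((1/\sqrt{pm})SU)\le 1+\varepsilon$ is equivalent to the operator-norm bound $\norm{(1/pm)(SU)^\top(SU) - I_d}\le 2\varepsilon+\varepsilon^2$, i.e. to a spectral bound on a $d\times d$ matrix. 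To put this in the additive form \eqref{eq:model} required by the universality result, I would pass to the Hermitian dilation: write $X = \frac{1}{\sqrt{pm}}\begin{bmatrix} 0 & (SU)^\top \\ SU & 0\end{bmatrix}$, whose nonzero eigenvalues are $\pm$ the singular values of $(1/\sqrt{pm})SU$, and decompose $X = \sum_{i} Z_i$ where each $Z_i$ is the dilation of $\frac{1}{\sqrt{pm}}Y_i U$ for the independent summands $Y_i$ of $S$ (columns for OSE-IID-ENT and OSNAP-IND-COL, diagonals for OSE-IND-DIAG). The mean is zero and, crucially, Lemma \ref{lem:entries} tells us the entries of $S$ are uncorrelated with variance $p$, so the covariance profile of $X$ matches exactly that of $\frac{1}{\sqrt{pm}}$ times the dilation of a $(0,p)$-i.i.d.\ matrix times $U$; hence the comparison Gaussian model $G$ has the dilation structure of an $m\times d$ Gaussian matrix with i.i.d.\ $\cN(0,1/m)$ entries, whose singular values concentrate around $1$ (deviation $O(\sqrt{d/m}) + O(t/\sqrt m)$) by Lemma \ref{lem:Gaussianspectrum} — this is where the $m\ge c\max(d,\log(1/\delta))/\varepsilon^2$ condition enters.

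The heart of the argument is estimating the three parameters $\sigma(X)$, $\sigma_*(X)$, $R(X)$ for this model and checking that $C\epsilon(t) \le \varepsilon/2$ (say) for $t = \log(d/\delta)$ under the stated hypotheses on $pm$. Here $\sigma(X)^2 = \norm{\E[(X-\E X)^2]}$: since $\E[(X-\E X)^2]$ is, up to the dilation, essentially $\frac{1}{pm}\E[(SU)^\top(SU)]$ direct-summed with $\frac{1}{pm}\E[(SU)(SU)^\top]$, and $\E[(SU)^\top(SU)] = p\, U^\top U \cdot m/... $ works out to $pm\cdot I_d$ by the variance/uncorrelatedness computation in Lemma \ref{lem:entries} (each entry of $SU$ has variance $p\sum_k U_{kj}^2 \cdot$ (appropriate count) so that $\frac{1}{pm}\E[(SU)^\top SU] = I_d$ and $\frac{1}{pm}\E[(SU)(SU)^\top]$ has operator norm $O(n/m)\cdot p = O(1)$ wait — need care here), we get $\sigma(X) = O(1)$. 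For $\sigma_*(X)$, which controls the leading $\sigma_*(X)t^{1/2}$ term, the key point is that this "weak variance" parameter is small — of order $1/\sqrt{pm}$ times $\poly\log$ — because each summand $Z_i$ has rank $O(1)$ (a single column or diagonal hit by $U$) and there are only $O(pm)$ of them contributing to any fixed quadratic form $\langle v, (X-\E X)w\rangle$; I would bound $\sigma_*(X)^2 = \sup_{v,w}\sum_i \E|\langle v, Z_i w\rangle|^2$ directly using $\norm{e_j^\top U}$ bounds, which after applying the randomized-Hadamard-type uniformization implicit in the general $U$ (or just crude $\ell_2/\ell_\infty$ estimates) gives $\sigma_*(X) = O(\sqrt{\log(d)/(pm)}\,)$ or similar. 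Finally $R(X) = \norm{\max_i \norm{Z_i}}_\infty$: this is where the diagonal construction pays off — $\norm{\frac{1}{\sqrt{pm}}Y_iU}\le \frac{1}{\sqrt{pm}}\norm{Y_i}\norm{U} \le \frac{1}{\sqrt{pm}}$ almost surely for the diagonal case (since a $\pm1$ diagonal matrix has norm $1$), and $O(\sqrt{\log(d)/(pm)})$ for the column cases after a union bound over the (few) nonzero entries; either way $R(X) = \tilde O(1/\sqrt{pm})$. Plugging these into $\epsilon(t) = \sigma_*(X)t^{1/2} + R(X)^{1/3}\sigma(X)^{2/3}t^{2/3} + R(X)t$ with $t = \log(d/\delta)$ gives $\epsilon(t) = O\big((\log(d/\delta))^{1/2}/\sqrt{pm} + (\log(d/\delta))^{2/3}/(pm)^{1/6} + (\log(d/\delta))/\sqrt{pm}\big)$, and the dominant middle term forces the requirement $pm \gtrsim (\log(d/\delta))^4/\varepsilon^6$ to make this $\le \varepsilon/2$.

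I would then assemble the pieces: by Lemma \ref{thm:brailovskayavanhandel} with $t=\log(d/\delta)$, with probability $\ge 1 - d e^{-t} = 1-\delta$ we have $d_H(\spec(X),\spec(G)) \le C\epsilon(t) \le \varepsilon/2$; by Lemma \ref{lem:Gaussianspectrum} applied to $G$ (the dilation of an $m\times d$ Gaussian with $\cN(0,1/m)$ entries), with probability $\ge 1-\delta$ all nonzero eigenvalues of $G$ lie in $[-(1+\varepsilon/2), -(1-\varepsilon/2)]\cup[1-\varepsilon/2,1+\varepsilon/2]$ provided $m \ge c\max(d,\log(1/\delta))/\varepsilon^2$; combining via Hausdorff distance and noting that $X$'s nonzero spectrum is symmetric, all nonzero eigenvalues of $X$ lie in $\pm[1-\varepsilon,1+\varepsilon]$, which is exactly $1-\varepsilon \le s_{\min} \le s_{\max} \le 1+\varepsilon$ for $(1/\sqrt{pm})SU$. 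A union bound over the two events costs a factor $2$ in $\delta$, absorbed into constants (hence the $\log(4/\delta)$). The second, $\theta$-phrased conclusion follows by specializing $\varepsilon$ to a constant multiple of $\theta/(1+\theta)$: then $m\ge(1+\theta)d$ suffices in place of the $d/\varepsilon^2$ condition, $\kappa = s_{\max}/s_{\min} \le (1+\varepsilon)/(1-\varepsilon) \le c_{\ref{osngeneral}.5}/\theta$, and the $pm$ condition becomes $pm \gtrsim (\log(d/\delta))^4/\theta^6$. The main obstacle I anticipate is the careful bookkeeping of $\sigma_*(X)$ for a \emph{general} orthonormal $U$ (as opposed to one with uniform leverage scores): one must verify that the worst-case quadratic form $\sum_i\E|\langle v,Z_iw\rangle|^2$ does not blow up when $U$ has a few rows of large norm — but since the entries of $SU$ are uncorrelated with the correct variance and each $Z_i$ is low-rank, the bound should go through using only $\sum_j \norm{e_j^\top U}^2 = d$ and $\norm{e_j^\top U}\le 1$, without needing a leverage-score assumption; making this rigorous for all three distributions simultaneously (in particular handling the 2-wise-but-not-fully-independent diagonal entries, which affects only higher moments and not the covariance, hence not $\sigma,\sigma_*$, and only mildly the almost-sure bound $R$) is the delicate part.
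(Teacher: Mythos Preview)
Your overall strategy is right, and your parameter estimates are in the correct ballpark, but there is a genuine gap in the assembly step. The Hermitian dilation of an $m\times d$ matrix with $m>d$ always has $0$ in its spectrum (with multiplicity at least $m-d$), for both $X$ and the Gaussian model $G$. The Hausdorff bound from Lemma~\ref{thm:brailovskayavanhandel} only tells you that every eigenvalue of $X$ lies within $C\epsilon(t)$ of \emph{some} eigenvalue of $G$; it does not pair up eigenvalues. So even if all nonzero eigenvalues of $G$ sit in $\pm[1-\varepsilon/2,1+\varepsilon/2]$, nothing prevents $s_{\min}((1/\sqrt{pm})SU)$ from being within $C\epsilon(t)$ of the eigenvalue $0\in\spec(G)$ rather than of $s_{\min}(\tilde G)$. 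Your sentence ``all nonzero eigenvalues of $X$ lie in $\pm[1-\varepsilon,1+\varepsilon]$'' simply does not follow from $d_H(\spec(X),\spec(G))\le\varepsilon/2$.

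The paper confronts exactly this obstruction and resolves it by working not with the plain dilation but with an \emph{augmented} symmetrization
\[
\augsym(SU,\lambda)=
\begin{bmatrix}
0&0&(SU)^\top&2\lambda I\\
0&0&0&0\\
SU&0&0&0\\
2\lambda I&0&0&0
\end{bmatrix},
\]
chosen so that the relevant block of its square is $(SU)^\top(SU)+4\lambda^2 I$, whose smallest eigenvalue is at least $4\lambda^2$. Taking $\lambda\ge C\epsilon(t)$ forces $\sqrt{s_{\min}((SU)^\top SU+4\lambda^2 I)}\ge 2\lambda$ to be separated from the interval $[-\lambda,\lambda]$ around $0$, so the Hausdorff comparison with the corresponding Gaussian augmented matrix can only match it to the nonzero part of $\spec(\augsym(\sqrt p G,\lambda))$. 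This is the content of the paper's Lemma~\ref{connectsing}, and it is the missing ingredient in your argument.

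Two smaller points. First, your decomposition of $S$ ``by columns'' for OSE-IID-ENT gives a poor $R$ bound: a full column of $S$ can have norm up to $\sqrt m$, so $\norm{Y_jU}\le\sqrt m\,\norm{u_j}$ and (after your normalization) $R=O(1/\sqrt p)$, not $\tilde O(1/\sqrt{pm})$. The paper instead decomposes by \emph{individual entries} $s_{ij}e_iu_j^\top$, giving $R\le 1$ unnormalized (i.e.\ $1/\sqrt{pm}$ in your scaling). Second, your worry about $\sigma_*(X)$ for general $U$ is unfounded: the paper's Lemma~\ref{mapa} computes $\sigma_*(\augsym(SU,\lambda))\le 2\sqrt p$ using only the variance/uncorrelatedness of the entries of $S$ and the facts $\norm{Uw}\le\norm{w}$, $\norm{Uv}\le\norm{v}$, with no leverage-score hypothesis and no $\log d$ factor.
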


The proof of Theorem \ref{osngeneral} follows by applying the universality result \ref{thm:brailovskayavanhandel} to an augmented and symmetrized version of $SU$. More precisely, following \cite[p.443]{bandeira2023matrix}, for a $m \times d$ matrix $Y$, we define the augmented and symmetrized version of $Y$ as
\begin{align*}
\augsym(Y,\lambda)=\left[ {\begin{array}{*{20}{c}}
  {{0_{d \times d}}}&{{0_{d \times m}}}&{{Y^T}}&\aug(Y,\lambda)^{1/2} \\ 
  {{0_{m \times d}}}&{{0_{m \times m}}}&{{0_{m \times m}}}&{{0_{m \times d}}} \\ 
  Y&{{0_{m \times m}}}&{{0_{m \times m}}}&{{0_{m \times d}}} \\ 
  \aug(Y,\lambda)^{1/2}&{{0_{d \times m}}}&{{0_{d \times m}}}&{{0_{d \times d}}} 
\end{array}} \right]
\end{align*}
where
\begin{align*}
\aug(Y,\lambda) = (\norm{\E Y^T Y} +4 \lambda^2) \cdot \Id -\E Y^T Y
\end{align*} 
Then we set $X_{\lambda}=\augsym(Y,\lambda)$.

There are two reasons for using the matrix $\augsym(SU,\lambda)$. First, using Lemma \ref{lem:entries} requires symmetric matrices, so we need to symmetrize the matrix $SU$. Second, after symmetrization, we obtain the matrix
\begin{align*}
\left[ {\begin{array}{*{20}{c}}
  {}&{{{(SU)}^T}} \\ 
  {SU}&{} 
\end{array}} \right]
\end{align*}
and the spectrum of this matrix is the union of $\spec(SU)$ and $\{0\}$. By universality results, we can only claim that $\spec(SU) \cup \{0\}$ is close to $\spec(\sqrt{p}G) \cup \{0\}$, which does not directly imply the desired result that $s_{\min}(SU)$ is close to $s_{\min}(\sqrt{p}G)$. Therefore, we need to introduce the perturbations $\aug(SU,\lambda)$ to show that $s_{\min}(SU)$ is not close to zero.

To use Lemma \ref{thm:brailovskayavanhandel} we need to find out the corresponding guassian model and bound the parameters $\sigma(X_{\lambda}), \sigma_*(X_{\lambda})$ and $R(X_{\lambda})$ defined in Lemma \ref{thm:brailovskayavanhandel}.

By Lemma \ref{lem:entries}, we know that, in all of OSE-IID-ENT, OSNAP-IND-COL, and OSE-IND-DIAG distributions, each entry of $SU$ has variance $p$ and different entries have zero covariances. Therefore, we know that the corresponding gaussian model for $X_{\lambda}$ is the following.
\begin{align*}
\augsym(\sqrt{p}G,\lambda)=\left[ {\begin{array}{*{20}{c}}
  0&0&\sqrt{p}G^T&{{\aug(\sqrt{p}G,\lambda) }^{1/2}} \\ 
  0&0&0&0 \\ 
  {\sqrt{p}G}&0&0&0 \\ 
  {{\aug(\sqrt{p}G,\lambda) }^{1/2}}&0&0&0 
\end{array}} \right]
\end{align*}
where $G$ is an $m$ by $d$ i.i.d. standard gaussian random matrix.

Also, by the covariance structure of $SU$, we know that $\E[U^TS^TSU]=pm \cdot \Id$, and therefore
\begin{align*}
\aug(SU,\lambda) =& (\norm{\E (SU)^T (SU)} +4 \lambda^2) \cdot \Id -\E (SU)^T (SU)
\\=& (pm +4 \lambda^2) \cdot \Id - pm \cdot \Id
\\=& 4 \lambda^2 \cdot \Id
\end{align*}

Similarly, we also have 
\begin{align*}
\aug(\sqrt{p}G,\lambda) =& 4 \lambda^2 \cdot \Id
\end{align*}

We observe that $\sigma(X_{\lambda})$ and $ \sigma_*(X_{\lambda})$ do not depend on the decomposition of $X_{\lambda}$ as a sum of independent random matrices and can be calculated explicitly using the covariance structure of $X_{\lambda}$. Using this idea, we derive the following lemma that bounds $\sigma(X_{\lambda})$ and $ \sigma_*(X_{\lambda})$.
\begin{lemma}[Covariance Parameters]\label{mapa}
Let $S=\{s_{ij}\}_{i \in [m], j \in [n]}$ be a $m \times n$ random matrix such that $\E(s_{ij})=0$ and $\operatorname{Var}(s_{ij})=p$ for all $i \in [m], j \in [n]$, and $\cov(s_{ij},s_{kl})=0$ for any $\{i,k\} \subset [m], \{j,l\} \subset [n]$ and $ (i,j)\neq (k,l) $. Let $\sigma^*: L_{\infty}(\R) \otimes M_{2(m+d)}(\R) \to \R$ and $\sigma : L_{\infty}(\R) \otimes M_{2(m+d)}(\R) \to \R$ be the functions defined in Lemma \ref{thm:brailovskayavanhandel}. Then for any $\lambda>0$, we have
\begin{align*}
\sigma_*(\augsym(SU,\lambda)) \leq 2\sqrt{p} \text{ and } \sigma(\augsym(SU,\lambda)) \leq \sqrt{pm}
\end{align*}
\end{lemma}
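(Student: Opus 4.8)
The plan is to compute $\sigma_*$ and $\sigma$ directly from the covariance structure of $X_\lambda = \augsym(SU,\lambda)$, since both quantities are invariant under the choice of decomposition of $X_\lambda$ into a sum of independent mean-zero matrices and depend only on $\E X_\lambda$ and $\cov(X_\lambda)$. First I would record that $\E[SU]=0$ (so $X_\lambda$ is already centered apart from the deterministic $\aug$ blocks, which do not contribute to the fluctuation), and that the only random block in $X_\lambda$ is the copy of $SU$ appearing (with its transpose) in the off-diagonal corners. Thus $X_\lambda - \E X_\lambda$ is the block matrix with $SU$ and $(SU)^\top$ in two symmetric positions and zeros elsewhere, and all the relevant second-moment computations reduce to second moments of the entries of $SU$.

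For $\sigma(X_\lambda)=\norm{\E[(X_\lambda-\E X_\lambda)^2]}_{op}^{1/2}$: squaring the block matrix above, the diagonal blocks become $SU(SU)^\top$ and $(SU)^\top SU$ (placed in the appropriate $m\times m$ and $d\times d$ diagonal positions), and the off-diagonal blocks vanish. Taking expectations, $\E[(SU)^\top SU] = pm\cdot\Id_d$ by Lemma \ref{lem:entries} (each column of $U$ has unit norm and the entries of $S$ are uncorrelated with variance $p$), and $\E[SU(SU)^\top]$ has operator norm at most $pm$ since its trace equals $pmd$ spread over $m\geq d$ coordinates — more carefully, $\norm{\E[SU(SU)^\top]}_{op} \le \E[\norm{SU}_{op}^2]$ is not quite what we want, so instead I would note $\E[SU(SU)^\top]_{ii} = \sum_{j,k} \E[s_{ij}s_{ik}](U U^\top)_{jk}$ — wait, this needs the covariance structure applied across the two factors. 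Concretely, $\E[(SU(SU)^\top)_{ii'}] = \sum_{j,k,l} \E[s_{ij}s_{i'k}] U_{jl}U_{kl}$, which by uncorrelatedness is $\delta_{ii'} p \sum_{j,l} U_{jl}^2 = \delta_{ii'} p \cdot d$. Hmm, that gives $\E[SU(SU)^\top] = pd \cdot \Id_m$, so its operator norm is $pd \le pm$. Therefore $\norm{\E[(X_\lambda - \E X_\lambda)^2]}_{op} \le \max(pm, pd) = pm$, giving $\sigma(X_\lambda)\le\sqrt{pm}$.

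For $\sigma_*(X_\lambda) = \sup_{\norm v = \norm w = 1} \E[|\langle v, (X_\lambda-\E X_\lambda)w\rangle|^2]^{1/2}$: decompose $v,w$ according to the four blocks. Only the components of $v$ in the "$m$-block" paired with components of $w$ in the "$d$-block" (and vice versa) interact with $SU$, so $\langle v, (X_\lambda-\E X_\lambda)w\rangle = \langle v_m, SU w_d\rangle + \langle v_d, (SU)^\top w_m\rangle$ for the relevant sub-vectors, with $\norm{v_m},\norm{v_d},\norm{w_m},\norm{w_d}\le 1$. Expanding the square and using $\E[s_{ij}s_{kl}] = p\,\delta_{ik}\delta_{jl}$, each of the two terms has second moment $\le p\norm{v_m}^2\norm{U w_d}^2 \le p$ (since $\norm{U w_d} = \norm{w_d}\le 1$), and by Cauchy–Schwarz the cross term is bounded likewise, so the whole expression is at most $(\sqrt p + \sqrt p)^2 = 4p$, giving $\sigma_*(X_\lambda)\le 2\sqrt p$.

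The main obstacle I anticipate is bookkeeping: tracking which block-indices of the $2(m+d)$-dimensional ambient space carry the random entries and being careful that cross-block covariances in $X_\lambda - \E X_\lambda$ genuinely vanish (in particular that the $\aug$ blocks, being deterministic, drop out of every second-moment computation), and making sure the "$SU$ appears twice, as $SU$ and $(SU)^\top$" structure is handled without double-counting when squaring. The actual entry-level covariance identities are immediate from Lemma \ref{lem:entries}; the only real care is in the linear algebra of the augmented block form and in getting the constant $2$ (rather than a worse constant) in the $\sigma_*$ bound from the two symmetric off-diagonal copies.
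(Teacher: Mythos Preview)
Your proposal is correct and follows essentially the same approach as the paper: you identify that $X_\lambda-\E X_\lambda$ has only the $SU$ and $(SU)^\top$ blocks, compute $\E[(SU)^\top SU]=pm\,\Id_d$ and $\E[SU(SU)^\top]=pd\,\Id_m$ for the $\sigma$ bound, and for $\sigma_*$ decompose $v,w$ by blocks and use the uncorrelatedness of the $s_{ij}$ together with Cauchy--Schwarz to reach $4p$. The paper carries out exactly these computations, differing only in that it expands $\langle v_m, S\eta\rangle + \langle w_m, S\xi\rangle$ (with $\eta=Uw_d$, $\xi=Uv_d$) entry-by-entry rather than invoking Cauchy--Schwarz at the level of expectations for the cross term.
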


\begin{proof}

Let $X_{\lambda}=\augsym(SU)$. Recall that $\sigma_*(X_{\lambda}) = \sup \limits _{\|v\|=\|w\|=1} \E[ |\langle v,(X_{\lambda}-\E X_{\lambda})w\rangle|^2]^{\frac{1}{2}}$.

We observe that
\begin{align*}
\E X_{\lambda}=&\E \left[ {\begin{array}{*{20}{c}}
  0&0&(SU)^T&{{(4 \lambda^2 \cdot \Id)}^{1/2}} \\ 
  0&0&0&0 \\ 
  {(SU)}&0&0&0 \\ 
  {{(4 \lambda^2 \cdot \Id)}^{1/2}}&0&0&0 
\end{array}} \right]
\\=&\left[ {\begin{array}{*{20}{c}}
  0&0&0&{{(4 \lambda^2 \cdot \Id)}^{1/2}} \\ 
  0&0&0&0 \\ 
  {0}&0&0&0 \\ 
  {{(4 \lambda^2 \cdot \Id)}^{1/2}}&0&0&0 
\end{array}} \right]
\end{align*}

Therefore, we have
\begin{align*}
X_{\lambda}-\E X_{\lambda}=\left[ {\begin{array}{*{20}{c}}
  0&0&(SU)^T&{0} \\ 
  0&0&0&0 \\ 
  {(SU)}&0&0&0 \\ 
  {0}&0&0&0 
\end{array}} \right]
\end{align*}

Fixing vectors $v,w$, we have
\begin{align*}
\langle v,(X_{\lambda}-\E X_{\lambda})w\rangle =&\sum\limits_{1 \le i \le 2(d + m)} {\sum\limits_{1 \le j \le 2(d + m)} {(X_{\lambda}-\E X_{\lambda})_{i,j}} \cdot {w_j}{v_i}}  \\
=&\sum\limits_{1 \le i \le m } \sum\limits_{1 \le j \le  d}(X_{\lambda}-\E X_{\lambda})_{m+d+i,j}{w_j}{v_{m+d+i}}\\&+\sum\limits_{1 \le i \le d } \sum\limits_{1 \le j \le  m}(X_{\lambda}-\E X_{\lambda})_{i,m+d+j}{w_{m+d+j}}{v_{i}}
\\=&\sum\limits_{1 \le i \le m } \sum\limits_{1 \le j \le  d}(SU)_{i,j}{w_j}{v_{m+d+i}}+\sum\limits_{1 \le i \le d } \sum\limits_{1 \le j \le  m}(SU)_{j,i}{w_{m+d+j}}{v_{i}}
\\=&\langle v_{[m+d+1:m+d+m]}, SUw_{[1:d]} \rangle + \langle w_{[m+d+1:m+d+m]}, SUv_{[1:d]} \rangle
\end{align*}   
Let $\eta=Uw_{[1:d]}$, $\xi=Uv_{[1:d]}$, then $\norm{\eta}_2 \le 1$ and $\norm{\xi}_2 \le 1$. Also, we have 
\begin{align*}
    \langle v,(X_{\lambda}-\E X_{\lambda})w\rangle &= \langle v_{[m+d+1:m+d+m]}, S\eta \rangle + \langle w_{[m+d+1:m+d+m]}, S\xi \rangle   \\
    &= \sum\limits_{1 \le i \le m } \sum\limits_{1 \le j \le  n}s_{i,j}(v_{m+d+i}\eta_j + w_{m+d+i}\xi_j)
\end{align*}

Using the condition that $\E(s_{ij})=0$ and $\operatorname{Var}(s_{ij})=p$ for all $i \in [m], j \in [n]$, and $\cov(s_{ij},s_{kl})=0$ for any $i \in [m], j \in [n]$, we conclude

\begin{align*}
    \E[ |\langle v,(X_{\lambda}-\E X_{\lambda})w\rangle|^2] \leq & p(\sum\limits_{1 \le i \le m } \sum\limits_{1 \le j \le  n}(v_{m+d+i}\eta_j + w_{m+d+i}\xi_j)^2) \\
    \leq& p(\sum\limits_{1 \le i \le m } \sum\limits_{1 \le j \le  n} v_{m+d+i}^2\eta_j^2 + w_{m+d+i}^2\xi_j^2 + 2v_{m+d+i}\eta_jw_{m+d+i}\xi_j)\\
    \leq& p(\sum\limits_{1 \le i \le m } v_{m+d+i}^2\sum\limits_{1 \le j \le  n} \eta_j^2 + \sum\limits_{1 \le i \le m }w_{m+d+i}^2 \sum\limits_{1 \le j \le  n}\xi_j^2 \\& + 2\sum\limits_{1 \le i \le m } \sum\limits_{1 \le j \le  n}v_{m+d+i}\eta_jw_{m+d+i}\xi_j)        
        \\
    = & p(\norm{v_{[m+d+1:m+d+m]}}_2^2\norm{\eta}_2^2 + \norm{w_{[m+d+1:m+d+m]}}_2^2\norm{\xi}_2^2 \\&+ 2\langle v_{[m+d+1:m+d+m]}, w_{[m+d+1:m+d+m]} \rangle \langle \eta, \xi \rangle ) \\ \leq & 4p
\end{align*}
where we use Schwarz inequality to bound the term $\langle v_{[m+d+1:m+d+m]}, w_{[m+d+1:m+d+m]} \rangle \langle \eta, \xi \rangle$.

Therefore, we conclude that $\sigma_*(X_{\lambda}) \leq 2\sqrt{p}$.

To calculate $\sigma(X_{\lambda}) $, we recall the definition
\begin{align*}
\sigma(X_{\lambda}) = \norm{\E [(X_{\lambda}-\E X_{\lambda})^2]}_{op}^{1/2}
\end{align*}

By our previous calculation, we have
\begin{align*}
X_{\lambda}-\E X_{\lambda}=\left[ {\begin{array}{*{20}{c}}
  0&0&(SU)^T&{0} \\ 
  0&0&0&0 \\ 
  {(SU)}&0&0&0 \\ 
  {0}&0&0&0 
\end{array}} \right]
\end{align*}
and therefore we have
\begin{align*}
(X_{\lambda}-\E X_{\lambda})^2=&\left[ {\begin{array}{*{20}{c}}
  (SU)^T{(SU)}&0&0&{0} \\ 
  0&0&0&0 \\ 
  0&0&{(SU)}(SU)^T&0 \\ 
  {0}&0&0&0 
\end{array}} \right]
\\=&\left[ {\begin{array}{*{20}{c}}
  U^TS^T{(SU)}&0&0&{0} \\ 
  0&0&0&0 \\ 
  0&0&{(SU)}U^TS^T&0 \\ 
  {0}&0&0&0 
\end{array}} \right]
\end{align*}

We can easily calculate $\E[SUU^TS^T] = pd\cdot I$ and $\E[U^TS^TSU]=pm \cdot I$ from the covariance structure of $SU$. Therefore, we conclude that $\sigma(X_{\lambda}) \leq \sqrt{pm}$.

\end{proof}

\begin{proof}[Proof of Theorem \ref{osngeneral}]

Using Lemma \ref{mapa} and Lemma \ref{lem:entries}, we have \begin{align*}\sigma_*(\augsym(SU,\lambda)) \leq 2\sqrt{p}\end{align*} and \begin{align*}\sigma(\augsym(SU,\lambda)) \leq \sqrt{pm}\end{align*} for all the three distributions.

$R(X_{\lambda})$ depends on the decomposition of $X_{\lambda}$ as a sum of independent random matrices, so we write the matrix $SU$ as a sum of independent random matrices in each of the three distributions as follows.

For the OSE-IID-ENT distribution, we observe that $S=\sum\limits_{i,j} {{s_{i,j}}({e_i}{e_j}^T)} $, so we have $SU = \sum\limits_{i,j} {{s_{i,j}}({e_i}{e_j}^T)U}  = \sum\limits_{i,j} {{s_{i,j}}({e_i}{u_j}^T)} $, where ${u_j}$ is the $j$th row vector of the matrix $U$. Therefore, we have
    \begin{align*}
    \augsym(SU,\lambda)=& \left[ {\begin{array}{*{20}{c}}
      0&0&(SU)^T&{{(4 \lambda^2 \cdot \Id)}^{1/2}} \\ 
      0&0&0&0 \\ 
      {(SU)}&0&0&0 \\ 
      {{(4 \lambda^2 \cdot \Id)}^{1/2}}&0&0&0 
    \end{array}} \right]
    \\=&\left[ {\begin{array}{*{20}{c}}
      0&0&0&{{(4 \lambda^2 \cdot \Id)}^{1/2}} \\ 
      0&0&0&0 \\ 
      {0}&0&0&0 \\ 
      {{(4 \lambda^2 \cdot \Id)}^{1/2}}&0&0&0 
    \end{array}} \right] \\&+\sum\limits_{i,j} {{s_{i,j}}\left[ {\begin{array}{*{20}{c}}
          0&0&({e_i}{u_j}^T)^T&0 \\ 
          0&0&0&0 \\ 
          {({e_i}{u_j}^T)}&0&0&0 \\ 
          0&0&0&0 
        \end{array}} \right]} 
    \end{align*}

Since $s_{i,j}$ is bounded by $1$, we have
\begin{align*}
        R(\augsym(SU,\lambda)) \le \max \limits _{i,j} \left\| {\left[ {\begin{array}{*{20}{c}}
          0&0&{{{({e_i}{u_j}^T)}^T}}&0 \\ 
          0&0&0&0 \\ 
          {({e_i}{u_j}^T)}&0&0&0 \\ 
          0&0&0&0 
        \end{array}} \right]} \right\|_{op} \le 1
\end{align*}

For the OSE-IND-DIAG distribution, we have $SU = \sum_{l=1}^{pn} Y_lU$, where $Y_l=F_{\gamma_l}(W_l)$ as in Definition \ref{oseinddiag}. So we have
    \begin{align*}
        \augsym(SU,\lambda)=& \left[ {\begin{array}{*{20}{c}}
          0&0&(SU)^T&{{(4 \lambda^2 \cdot \Id)}^{1/2}} \\ 
          0&0&0&0 \\ 
          {(SU)}&0&0&0 \\ 
          {{(4 \lambda^2 \cdot \Id)}^{1/2}}&0&0&0 
        \end{array}} \right]
        \\=&\left[ {\begin{array}{*{20}{c}}
          0&0&0&{{2 \lambda \cdot \Id}} \\ 
          0&0&0&0 \\ 
          {0}&0&0&0 \\ 
          {{2 \lambda \cdot \Id}}&0&0&0 
        \end{array}} \right] \\&+\sum_{l=1}^{pn}\left[ {\begin{array}{*{20}{c}}
              0&0&(Y_lU)^T&0 \\ 
              0&0&0&0 \\ 
              {(Y_lU)}&0&0&0 \\ 
              0&0&0&0 
            \end{array}} \right]
        \end{align*}
Since $\norm{Y_lU}_{op} \leq \norm{Y_l}_{op}\norm{U}_{op} \leq 1$, we have
\begin{align*}
        R(\augsym(SU,\lambda)) \le \max \limits _{l} \left\| {\begin{array}{*{20}{c}}
                      0&0&(Y_lU)^T&0 \\ 
                      0&0&0&0 \\ 
                      {(Y_lU)}&0&0&0 \\ 
                      0&0&0&0 
                    \end{array}} \right\|_{op} \le 1
\end{align*}

For the OSNAP-IND-COL distribution, the sum is similar to the OSE-IND-DIAG case. More precisely, for $k\in [s], l \in [n]$, we define $Y_{k,l}$ to be the $m \times n$ matrix such that $(Y_{k,l})_{i,j}=\mathbbm{1}_{[(m/s)(i-1)+1:(m/s)i]\times\{ j\}}((i,j))S_{i,j}$. By the construction of OSNAP-IND-COL distribution, the matrix $Y_{k,l}$ has at most one nonzero entry, so we still have $\norm{Y_{k,l}U}_{op} \leq \norm{Y_{k,l}}_{op}\norm{U}_{op} \leq 1$. And then we have
\begin{align*}
        R(\augsym(SU,\lambda)) \le \max \limits _{k,l} \left\| {\begin{array}{*{20}{c}}
                      0&0&(Y_{k,l}U)^T&0 \\ 
                      0&0&0&0 \\ 
                      {(Y_{k,l}U)}&0&0&0 \\ 
                      0&0&0&0 
                    \end{array}} \right\|_{op} \le 1
\end{align*}

In conclusion, we have $R(\augsym(SU,\lambda)) \le 1$ for all the three distributions.

Using lemma \ref{thm:brailovskayavanhandel} with $t=\log(2d/\delta)$, we have
\begin{align*}
\Pb(d_H(\spec(\augsym(SU)),\spec(\augsym(\sqrt{p}G)))>c_1\zeta(\log(2d/\delta)))\leq\delta/2
\end{align*}
where
\begin{align*}
\zeta(t) =
	\sigma_*(X_{\lambda})  t^{\frac{1}{2}} +
	R(X_{\lambda})^{\frac{1}{3}}\sigma(X_{\lambda})^{\frac{2}{3}} t^{\frac{2}{3}} +
	R(X_{\lambda}) t
\end{align*}
for some constant $c_1$. Without loss of generality, we can assume that $c_1>1$.

Using lemma \ref{lem:Gaussianspectrum}, we have
\begin{multline*}
     \Pb \bigg( \sqrt{pm} - \sqrt{pd} - \sqrt{2p\log(4/\delta)} \leq s_{\min}(\sqrt{p}G) \\  \leq s_{\max}(\sqrt{p}G)    \sqrt{pm} + \sqrt{pd} + \sqrt{2p\log(4/\delta)} \bigg) \geq 1 - \delta/2
\end{multline*}

Let $\mathcal{E}$ be the event
\begin{align*}
\mathcal{E}= &\{ \sqrt{pm} - \sqrt{pd} - \sqrt{2p\log(4/\delta)} \leq s_{\min}(\sqrt{p}G) \\  &\leq s_{\max}(\sqrt{p}G) \le   \sqrt{pm} + \sqrt{pd} + \sqrt{2p\log(4/\delta)} \bigg\} \\ &\cap \{d_H(\spec(\augsym(SU,\lambda)),\spec(\augsym(\sqrt{p}G,\lambda))) \le c_1\zeta(\log(2d/\delta))\}
\end{align*}

Therefore, we have
\begin{align*}
\Pb(\mathcal{E}) \ge 1- \delta
\end{align*}
by the union bound.

Assume that the event $\mathcal{E}$ happens, and we want to obtain desired bounds for the largest and smallest singular value for $SU$. To this end, we need the following lemma that connects the singular values of $\augsym(SU,\lambda)$ to the singular values of $SU$.

\begin{lemma}[Connections between Singular Values]\label{connectsing}
Assume that $\lambda \ge c_1\zeta(\log(2d/\delta))$. When the event
\begin{align*}
\mathcal{E}= &\{ \sqrt{pm} - \sqrt{pd} - \sqrt{2p\log(4/\delta)} \leq s_{\min}(\sqrt{p}G) \\  &\leq s_{\max} (\sqrt{p}G) \le   \sqrt{pm} + \sqrt{pd} + \sqrt{2p\log(4/\delta)} \bigg\} \\ &\cap \{d_H(\spec(\augsym(SU,\lambda)),\spec(\augsym(\sqrt{p}G,\lambda))) \le c_1\zeta(\log(2d/\delta))\}
\end{align*}
happens, we have
\begin{align*}
&\sqrt{pm} - \sqrt{pd} - \sqrt{2p\log(4/\delta)} - 5 \lambda \leq s_{\min}(SU) \\  &\leq s_{\max}(SU) \leq   \sqrt{pm} + \sqrt{pd} + \sqrt{2p\log(4/\delta)} + 5 \lambda
\end{align*}
\end{lemma}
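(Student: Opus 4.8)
The plan is to first diagonalize $\augsym(\cdot,\lambda)$ \emph{exactly} in terms of singular values, then transport the Gaussian singular-value bounds contained in the event $\mathcal E$ across the Hausdorff estimate, using the augmentation to turn $s_{\min}$ into (effectively) an extreme eigenvalue. First I would record that, since $\E(SU)^T(SU)=\E(\sqrt pG)^T(\sqrt pG)=pm\cdot\Id$ (computed above), we have $\aug(SU,\lambda)=\aug(\sqrt pG,\lambda)=4\lambda^2\cdot\Id_d$ and hence $\aug(\cdot,\lambda)^{1/2}=2\lambda\cdot\Id_d$ in both cases. For any $m\times d$ matrix $Y$ the second block row and column of $\augsym(Y,\lambda)$ are identically zero, contributing $m$ zero eigenvalues; restricting to the complementary $(2d+m)$-dimensional coordinate subspace and solving the block eigenvalue equations shows that any nonzero eigenvalue $\mu$ satisfies $(Y^TY+4\lambda^2\Id)x=\mu^2 x$. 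Consequently, viewing spectra as sets,
\begin{align*}
\spec\!\big(\augsym(Y,\lambda)\big)=\{0\}\cup\Big\{\pm\sqrt{s_i(Y)^2+4\lambda^2}\ :\ i=1,\dots,d\Big\}
\end{align*}
for both $Y=SU$ and $Y=\sqrt pG$; in particular the positive part of each spectrum equals $\{\sqrt{s_i(Y)^2+4\lambda^2}\}_{i=1}^d\subseteq[2\lambda,\infty)$, with smallest element $\sqrt{s_{\min}(Y)^2+4\lambda^2}$ and largest element $\sqrt{s_{\max}(Y)^2+4\lambda^2}$.

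Next I would transport the estimate. On $\mathcal E$ the Hausdorff distance between $\spec(\augsym(SU,\lambda))$ and $\spec(\augsym(\sqrt pG,\lambda))$ is at most $\rho:=c_1\zeta(\log(2d/\delta))$, and the hypothesis gives $\rho\le\lambda$. Since every positive eigenvalue of either matrix is $\ge 2\lambda$ while every nonpositive eigenvalue is $\le 0$, and $2\lambda>\rho$, every positive eigenvalue of one matrix lies within $\rho$ of a positive eigenvalue of the other; hence the \emph{positive parts} of the two spectra are within Hausdorff distance $\rho$, and comparing their minima and maxima yields
\begin{align*}
\big|\sqrt{s_{\max}(SU)^2+4\lambda^2}-\sqrt{s_{\max}(\sqrt pG)^2+4\lambda^2}\big| &\le\lambda, \\
\big|\sqrt{s_{\min}(SU)^2+4\lambda^2}-\sqrt{s_{\min}(\sqrt pG)^2+4\lambda^2}\big| &\le\lambda.
\end{align*}
I expect this to be the main obstacle: a Hausdorff bound controls only the extreme points of a set, but $s_{\min}(SU)$ corresponds to the \emph{smallest positive} eigenvalue of $\augsym(SU,\lambda)$, which is interior to the full spectrum; the purpose of the perturbation $\aug(\cdot,\lambda)$ is precisely to open a gap of width $2\lambda>\rho$ separating the positive eigenvalues from $\{0\}$ and the negative ones, which forces the matching to respect the sign split and lets the smallest positive eigenvalue be treated as extremal.

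Finally I would strip off the augmentation by elementary algebra. Squaring the first inequality and using $\sqrt{x^2+4\lambda^2}\le x+2\lambda$ for $x\ge 0$ gives $s_{\max}(SU)^2\le s_{\max}(\sqrt pG)^2+2\lambda\,s_{\max}(\sqrt pG)+5\lambda^2\le\big(s_{\max}(\sqrt pG)+5\lambda\big)^2$, hence $s_{\max}(SU)\le s_{\max}(\sqrt pG)+5\lambda$. For the lower bound, if $s_{\min}(\sqrt pG)\le 5\lambda$ there is nothing to prove since then $s_{\min}(\sqrt pG)-5\lambda\le 0\le s_{\min}(SU)$; otherwise the right side of the second inequality exceeds $\lambda>0$, so squaring it and again using $\sqrt{x^2+4\lambda^2}\le x+2\lambda$ gives $s_{\min}(SU)^2\ge s_{\min}(\sqrt pG)^2-2\lambda\,s_{\min}(\sqrt pG)-3\lambda^2\ge\big(s_{\min}(\sqrt pG)-5\lambda\big)^2$, where the last step uses $s_{\min}(\sqrt pG)\ge 5\lambda$, and therefore $s_{\min}(SU)\ge s_{\min}(\sqrt pG)-5\lambda$. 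Combining these two bounds with the Gaussian singular-value estimates built into $\mathcal E$, namely $\sqrt{pm}-\sqrt{pd}-\sqrt{2p\log(4/\delta)}\le s_{\min}(\sqrt pG)$ and $s_{\max}(\sqrt pG)\le\sqrt{pm}+\sqrt{pd}+\sqrt{2p\log(4/\delta)}$, together with the trivial $s_{\min}(SU)\le s_{\max}(SU)$, gives the two-sided estimate claimed in the lemma.
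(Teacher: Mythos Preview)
Your proof is correct and follows the same overall strategy as the paper: use the $\augsym$ construction so that the shifted singular values $\sqrt{s_i(Y)^2+4\lambda^2}$ appear as the positive eigenvalues, exploit the gap $[0,2\lambda)$ separating them from $0$ so that the Hausdorff control (of size $\rho\le\lambda$) cannot mix positive eigenvalues with the nonpositive ones, and then undo the $4\lambda^2$ shift by elementary inequalities to land at the $\pm 5\lambda$ bounds.

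The execution differs in two minor but pleasant ways. First, you compute the full spectrum $\spec(\augsym(Y,\lambda))=\{0\}\cup\{\pm\sqrt{s_i(Y)^2+4\lambda^2}\}$ directly from the block structure, whereas the paper instead invokes Remark~2.6 and the proof of Lemma~4.9 of \cite{bandeira2023matrix} to get the needed containments; your route is more self-contained. Second, for $s_{\min}$ the paper argues by excluding the alternative $\sqrt{s_{\min}((SU)^T(SU)+4\lambda^2 I)}\in[0,\lambda]$ via the deterministic lower bound $\ge 2\lambda$, and then bounds $s_{\min}(SU)$ through $s_{\min}((SU)^T(SU))\ge s_{\min}((SU)^T(SU)+4\lambda^2 I)-4\lambda^2$; you instead observe that the Hausdorff bound passes to the \emph{positive parts} of the two spectra and compare their minima directly, handling the edge case $s_{\min}(\sqrt pG)\le 5\lambda$ trivially. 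Both lines are equivalent in spirit; yours is a bit tighter in the intermediate algebra (you in fact obtain $s_{\max}(SU)\le s_{\max}(\sqrt pG)+\sqrt 5\,\lambda$ before relaxing to $5\lambda$).
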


\begin{proof} We use the following observations to connect the singular values of $SU$ with \begin{align*}
\spec(\augsym(SU,\lambda))
\end{align*} and connect singular values of $\sqrt{p}G$ with \begin{align*}
\spec(\augsym(\sqrt{p}G,\lambda))
\end{align*}.

It follows from SVD that $s_{\min}(SU)^2=s_{\min}((SU)^T(SU))$ and $s_{\max}(SU)^2=s_{\max}((SU)^T(SU))$.

By triangle inequality, for $\lambda>0$, we have
\begin{align*}
s_{\min}((SU)^T(SU)) \ge & s_{\min}((SU)^T(SU)+(2\lambda \Id)^2) -s_{\max}((2\lambda \Id)^2) \\=& s_{\min}((SU)^T(SU)+(2\lambda \Id)^2) -4 \lambda^2
\end{align*}
and
\begin{align*}
s_{\max}((SU)^T(SU)) \le & s_{\max}((SU)^T(SU)+(2\lambda \Id)^2) +s_{\max}((2\lambda \Id)^2) \\=& s_{\max}((SU)^T(SU)+(2\lambda \Id)^2) +4 \lambda^2
\end{align*}

By Remark 2.6 in \cite{bandeira2023matrix}, we know that 
\begin{align*}
\sqrt{s_{\min}((SU)^T(SU)+(2\lambda \Id)^2)} \in \spec(\augsym(SU,\lambda)) \cup \{0\}
\end{align*}
and
\begin{align*}
\sqrt{s_{\max}((SU)^T(SU)+(2\lambda \Id)^2)} \in \spec(\augsym(SU,\lambda)) \cup \{0\}
\end{align*}

Similarly, we have
\begin{align*}
s_{\min}((\sqrt{p}G)^T(\sqrt{p}G)+(2\lambda \Id)^2) \ge & s_{\min}((\sqrt{p}G)^T(\sqrt{p}G)) -s_{\max}((2\lambda \Id)^2) \\ =& s_{\min}((\sqrt{p}G)^T(\sqrt{p}G)) -4 \lambda^2
\\ =& (\sqrt{pm} - \sqrt{pd} - \sqrt{2p\log(4/\delta)})^2 -4 \lambda^2
\end{align*}
and
\begin{align*}
s_{\max}((\sqrt{p}G)^T(\sqrt{p}G)+(2\lambda \Id)^2) \le & s_{\max}((\sqrt{p}G)^T(\sqrt{p}G)) +s_{\max}((2\lambda \Id)^2) \\=& s_{\max}((\sqrt{p}G)^T(\sqrt{p}G)) +4 \lambda^2
\\=& (\sqrt{pm} + \sqrt{pd} + \sqrt{2p\log(4/\delta)})^2 +4 \lambda^2
\end{align*}

Therefore, we know that
\begin{align*}
&\spec((\sqrt{p}G)^T(\sqrt{p}G)+(2\lambda \Id)^2) \\ \subset & [(\sqrt{pm} - \sqrt{pd} - \sqrt{2p\log(4/\delta)})^2 -4 \lambda^2, (\sqrt{pm} + \sqrt{pd} + \sqrt{2p\log(4/\delta)})^2 +4 \lambda^2]
\end{align*}

Using this result with the proof of Lemma 4.9 in \cite{bandeira2023matrix}, we know that
\begin{align*}
&\spec(\augsym(\sqrt{p}G,\lambda)^2) \cup \{0\}
\\=& \spec(\left[ {\begin{array}{*{20}{c}}
  {{{(\sqrt p G)}^T}}&{2\lambda \Id} \\ 
  {{0_{m \times m}}}&{{0_{m \times d}}} 
\end{array}} \right] \left[ {\begin{array}{*{20}{c}}
  {{{(\sqrt p G)}^T}}&{2\lambda \Id} \\ 
  {{0_{m \times m}}}&{{0_{m \times d}}} 
\end{array}} \right]^T) \cup \{0\}
\\ \subset & [(\sqrt{pm} - \sqrt{pd} - \sqrt{2p\log(4/\delta)})^2 -4 \lambda^2, (\sqrt{pm} + \sqrt{pd} + \sqrt{2p\log(4/\delta)})^2 +4 \lambda^2] \cup \{0\}
\end{align*}

Therefore, we have
\begin{align*}
&\spec(\augsym(\sqrt{p}G,\lambda)) \cup \{0\}
\\ \subset & [\sqrt{(\sqrt{pm} - \sqrt{pd} - \sqrt{2p\log(4/\delta)})^2 -4 \lambda^2}, \sqrt{(\sqrt{pm} + \sqrt{pd} + \sqrt{2p\log(4/\delta)})^2 +4 \lambda^2}] \cup \{0\} 
\\ &  \cup -[\sqrt{(\sqrt{pm} - \sqrt{pd} - \sqrt{2p\log(4/\delta)})^2 -4 \lambda^2}, \sqrt{(\sqrt{pm} + \sqrt{pd} + \sqrt{2p\log(4/\delta)})^2 +4 \lambda^2}]
\end{align*}

Assume that $\lambda \ge c_1\zeta(\log(2d/\delta))$. Then using the fact that
\begin{align*}
\sqrt{s_{\max}((SU)^T(SU)+(2\lambda \Id)^2)} \in & \spec(\augsym(SU,\lambda)) \cup \{0\}
\end{align*}
and
\begin{align*}
\spec(\augsym(SU,\lambda)) \cup \{0\}
\subset (\spec(\augsym(\sqrt{p}G,\lambda))+\lambda) \cup \{0\}
\end{align*}
we have
\begin{align*}
\sqrt{s_{\max}((SU)^T(SU)+(2\lambda \Id)^2)} \le \sqrt{(\sqrt{pm} + \sqrt{pd} + \sqrt{2p\log(4/\delta)})^2 +4 \lambda^2}+\lambda
\end{align*}

Therefore, we have
\begin{align*}
s_{\max}((SU)^T(SU)) \le & s_{\max}((SU)^T(SU)+(2\lambda \Id)^2) +4 \lambda^2
\\ \le & (\sqrt{(\sqrt{pm} + \sqrt{pd} + \sqrt{2p\log(4/\delta)})^2 +4 \lambda^2}+\lambda)^2 + 4 \lambda^2
\\ \le & (\sqrt{pm} + \sqrt{pd} + \sqrt{2p\log(4/\delta)}+ 2 \lambda +\lambda +2 \lambda)^2 
\end{align*}
where we use $a^2+b^2 \le (a+b)^2$ for $a \ge 0$ and $b \ge 0$.

It follows directly that
\begin{align*}
s_{\max}(SU)=&\sqrt{s_{\max}((SU)^T(SU))}
\\ \le &  \sqrt{pm} + \sqrt{pd} + \sqrt{2p\log(4/\delta)}+ 5 \lambda
\end{align*}

Similarly, we have
\begin{align*}
\sqrt{s_{\min}((SU)^T(SU)+(2\lambda \Id)^2)} \in (\spec(\augsym(\sqrt{p}G,\lambda))+\lambda) \cup \{0\}
\end{align*}
and therefore we know that either 
\begin{align*}
\sqrt{s_{\min}((SU)^T(SU)+(2\lambda \Id)^2)} \in [0,\lambda]
\end{align*}
or
\begin{align*}
&\sqrt{s_{\min}((SU)^T(SU)+(2\lambda \Id)^2)} \\ \in& [\sqrt{(\sqrt{pm} - \sqrt{pd} - \sqrt{2p\log(4/\delta)})^2 -4 \lambda^2}-\lambda, \sqrt{(\sqrt{pm} + \sqrt{pd} + \sqrt{2p\log(4/\delta)})^2 +4 \lambda^2}+\lambda]
\end{align*}

We can exclude the first possibility by the argument below. Using the representation, e.g., in \cite[Example 7.5.1]{meyer2023matrix}
\begin{align*}
s_{\min}((SU)^T(SU)+(2\lambda \Id)^2)= \min_{\norm{x}_2=1} \ip{x,((SU)^T(SU)+(2\lambda \Id)^2)x}
\end{align*}
and the fact that $(SU)^T(SU)$ is positive semi-definite, we know that
\begin{align*}
s_{\min}((SU)^T(SU)+(2\lambda \Id)^2) \ge 4 \lambda ^2
\end{align*}
which implies
\begin{align*}
\sqrt{s_{\min}((SU)^T(SU)+(2\lambda \Id)^2)} \notin [0,\lambda]
\end{align*}

Therefore, we have
\begin{align*}
&\sqrt{s_{\min}((SU)^T(SU)+(2\lambda \Id)^2)} \\ \in& [\sqrt{(\sqrt{pm} - \sqrt{pd} - \sqrt{2p\log(4/\delta)})^2 -4 \lambda^2}-\lambda, \sqrt{(\sqrt{pm} + \sqrt{pd} + \sqrt{2p\log(4/\delta)})^2 +4 \lambda^2}+\lambda]
\end{align*}
which means
\begin{align*}
&\sqrt{s_{\min}((SU)^T(SU)+(2\lambda \Id)^2)} \\ \ge& \sqrt{(\sqrt{pm} - \sqrt{pd} - \sqrt{2p\log(4/\delta)})^2 -4 \lambda^2}-\lambda
\end{align*}

Therefore, we have
\begin{align*}
s_{\min}((SU)^T(SU)) \ge & s_{\min}((SU)^T(SU)+(2\lambda \Id)^2) -4 \lambda^2
\\ \ge & (\sqrt{(\sqrt{pm} - \sqrt{pd} - \sqrt{2p\log(4/\delta)})^2 -4 \lambda^2}-\lambda)^2 - 4 \lambda^2
\end{align*}

For $a>b>0$, by $(a-b)^2+b^2 \le (a-b+b)^2$, we have $a^2-b^2 \ge (a-b)^2$.

Therefore, we have
\begin{align*}
s_{\min}(SU) =& \sqrt{s_{\min}((SU)^T(SU))} \\ \ge & \sqrt{(\sqrt{(\sqrt{pm} - \sqrt{pd} - \sqrt{2p\log(4/\delta)})^2 -4 \lambda^2}-\lambda)^2 - 4 \lambda^2}
\\ \ge & \sqrt{pm} - \sqrt{pd} - \sqrt{2p\log(4/\delta)}-2 \lambda -\lambda -2 \lambda 
\\ = & \sqrt{pm} - \sqrt{pd} - \sqrt{2p\log(4/\delta)}-5 \lambda 
\end{align*}

In conclusion, if the event $\mathcal{E}$ happens, we always have
\begin{align*}
&\sqrt{pm} - \sqrt{pd} - \sqrt{2p\log(4/\delta)} - 5 \lambda \leq s_{\min}(SU) \\  &\leq s_{\max}(SU) \leq   \sqrt{pm} + \sqrt{pd} + \sqrt{2p\log(4/\delta)} + 5 \lambda
\end{align*}

\end{proof}

From Lemma \ref{connectsing}, we derive that
\begin{align*}
     \Pb \bigg( &\sqrt{pm} - \sqrt{pd} - \sqrt{2p\log(4/\delta)} - 5 \lambda \leq s_{\min}(SU) \\  &\leq s_{\max}(SU) \leq   \sqrt{pm} + \sqrt{pd} + \sqrt{2p\log(4/\delta)} + 5 \lambda \bigg) \geq 1 - \delta
\end{align*}

Now, we choose $\lambda = \frac{1}{10} \varepsilon \sqrt{pm}$. This is possible when $\frac{1}{10} \varepsilon \sqrt{pm} \ge c_1\zeta(\log(2d/\delta))$, and we will simplify this condition later.

Assuming that we can choose $\lambda = \frac{1}{10} \varepsilon \sqrt{pm}$, we have
\begin{align*}
     \Pb \bigg( &\sqrt{pm} - \sqrt{pd} - \sqrt{2p\log(4/\delta)} - \frac{1}{2} \varepsilon \sqrt{pm}  \leq s_{\min}(SU) \\  &\leq s_{\max}(SU) \leq   \sqrt{pm} + \sqrt{pd} + \sqrt{2p\log(4/\delta)} + \frac{1}{2} \varepsilon \sqrt{pm} \bigg) \geq 1 - \delta
\end{align*}

Assuming also that $m > \max \{ \frac{16d}{\varepsilon^2}, \frac{32\log(4/\delta)}{\varepsilon^2} \}$, we have $\frac{\sqrt{2p\log(4/\delta)}}{\sqrt{pm}}<\frac{\varepsilon}{4}$ and $\sqrt{\frac{d}{m}}<\frac{\varepsilon}{4}$, which lead to
\begin{align*}
\Pb \bigg(& 1 - \frac{\varepsilon}{4} - \frac{\varepsilon}{4} - \frac{\varepsilon}{2} \leq s_{\min}((1/\sqrt{pm})SU) \\   &\leq s_{\max}((1/\sqrt{pm})SU) \leq 1 + \frac{\varepsilon}{4} + \frac{\varepsilon}{4} + \frac{\varepsilon}{2} \bigg) \geq 1-\delta
\end{align*}

In other words, we have
\begin{align*}
\Pb \bigg( 1 - \varepsilon \leq s_{\min}((1/\sqrt{pm})SU)   \leq s_{\max}((1/\sqrt{pm})SU) \leq 1 + \varepsilon \bigg) \geq 1-\delta
\end{align*}
which is exactly what we want.

Then it suffices to translate the condition $\frac{1}{10} \varepsilon \sqrt{pm} \ge c_1\zeta(\log(2d/\delta))$ into the requirements for $p$, $m$, and $d$. To this end, we first calculate that
\begin{align*}
\zeta(t)=2\sqrt{p}t^{1/2}+(\sqrt{pm})^{2/3}t^{2/3}+t
\end{align*}
by the earlier bounds for $\sigma(X), \sigma_*(X)$ and $R(X)$.

We claim that it is enough to require that,
\begin{align*}
     (\sqrt{pm})^{2/3}(\log(2d/\delta))^{2/3} \leq \frac{\varepsilon}{40c_1}\sqrt{pm} 
\end{align*}

Equivalently, we just need
\begin{align*}
    \frac{({40c_1})^6(\log(2d/\delta))^4}{\varepsilon^6} \leq pm
\end{align*}

In fact, if $pm \geq c_2\frac{(\log(2d/\delta))^4}{\varepsilon^6}$ where $c_2=({40c_1})^6$, we will also have $\log(2d/\delta) \leq  \frac{\varepsilon}{40c_1}\sqrt{pm}$ and $2(\log(2d/\delta))^{1/2} \leq \frac{\varepsilon}{20c_1}\sqrt{pm}$, which gives us $c_1\zeta(\log(2d/\delta)) \leq \frac{\varepsilon}{10}\sqrt{pm}$, and therefore we have proved the first part of the theorem.

To prove the second part, we observe that, when $m\geq (1+\theta)d$, 
\begin{multline*}
     \Pb \bigg( \sqrt{pm} - \sqrt{pd} - \sqrt{2p\log(4/\delta)} - 5 \lambda \leq s_{\min}(SU) \\  \leq s_{\max}(SU) \leq   \sqrt{pm} + \sqrt{pd} + \sqrt{2p\log(4/\delta)} + 5 \lambda \bigg) \geq 1 - \delta
\end{multline*}
implies
\begin{multline*}
    \Pb \bigg( 1 -\frac{1}{\sqrt{1+\theta}} - \frac{\sqrt{2p\log(4/\delta)}}{\sqrt{pm}} - \frac{5 \lambda}{\sqrt{pm}} \leq s_{\min}((1/\sqrt{pm})SU) \\   \leq s_{\max}((1/\sqrt{pm})SU) \leq 1 + \frac{1}{\sqrt{1+\theta}} + \frac{\sqrt{2p\log(4/\delta)}}{\sqrt{pm}} + \frac{5 \lambda}{\sqrt{pm}} \bigg) \geq 1-\delta
\end{multline*}

Now for $\theta<3$, $\frac{1}{\sqrt{1+\theta}} < 1- \frac{\theta}{6}$, and for $m\geq \frac{36\cdot16\cdot2\cdot \log(4/\delta)}{\theta^2}$, we get,
\begin{multline*}
    \Pb \bigg( \frac{\theta}{6} - \frac{\theta}{24} - \frac{5 \lambda}{\sqrt{pm}} \leq s_{\min}((1/\sqrt{pm})SU)   \leq s_{\max}((1/\sqrt{pm})SU) \leq 3 + \frac{5 \lambda}{\sqrt{pm}} \bigg) \geq 1-\delta
\end{multline*}

Repeating the previous analysis now shows that $c_1\zeta(\log(2d/\delta)) \leq \frac{\theta}{24 \cdot 5}\sqrt{pm}$ when $pm \geq c_3\frac{(\log(2d/\delta))^4}{\theta^6}$ for some large enough constant $c_3$, so we can choose $\lambda = \frac{\theta}{24 \cdot 5}\sqrt{pm}$. In this case, we have
\begin{align*}
    \Pb \bigg( \frac{\theta}{12} \leq s_{\min}((1/\sqrt{pm})SU)    \leq s_{\max}((1/\sqrt{pm})SU) \leq 4 \bigg) \geq 1-\delta
\end{align*}
\end{proof}

\section{Analysis of Non-oblivious Sparse Embeddings}

In the non-oblivious setting, we are looking to embed a specific $d$-dimensional subspace of $\R^n$ represented by a matrix $U$, and we assume that we have access to $(\beta_1,\beta_2)$-approximate leverage scores of $U$ as defined in Section \ref{sec:prelim}. Given access to this information, we can modify the oblivious models of $S$ to give more weight to certain coordinates of the ambient space. We refer to this approach as Leverage Score Sparsitication (LESS).

\begin{figure}
    \centering
    \includegraphics[width=0.7\linewidth]{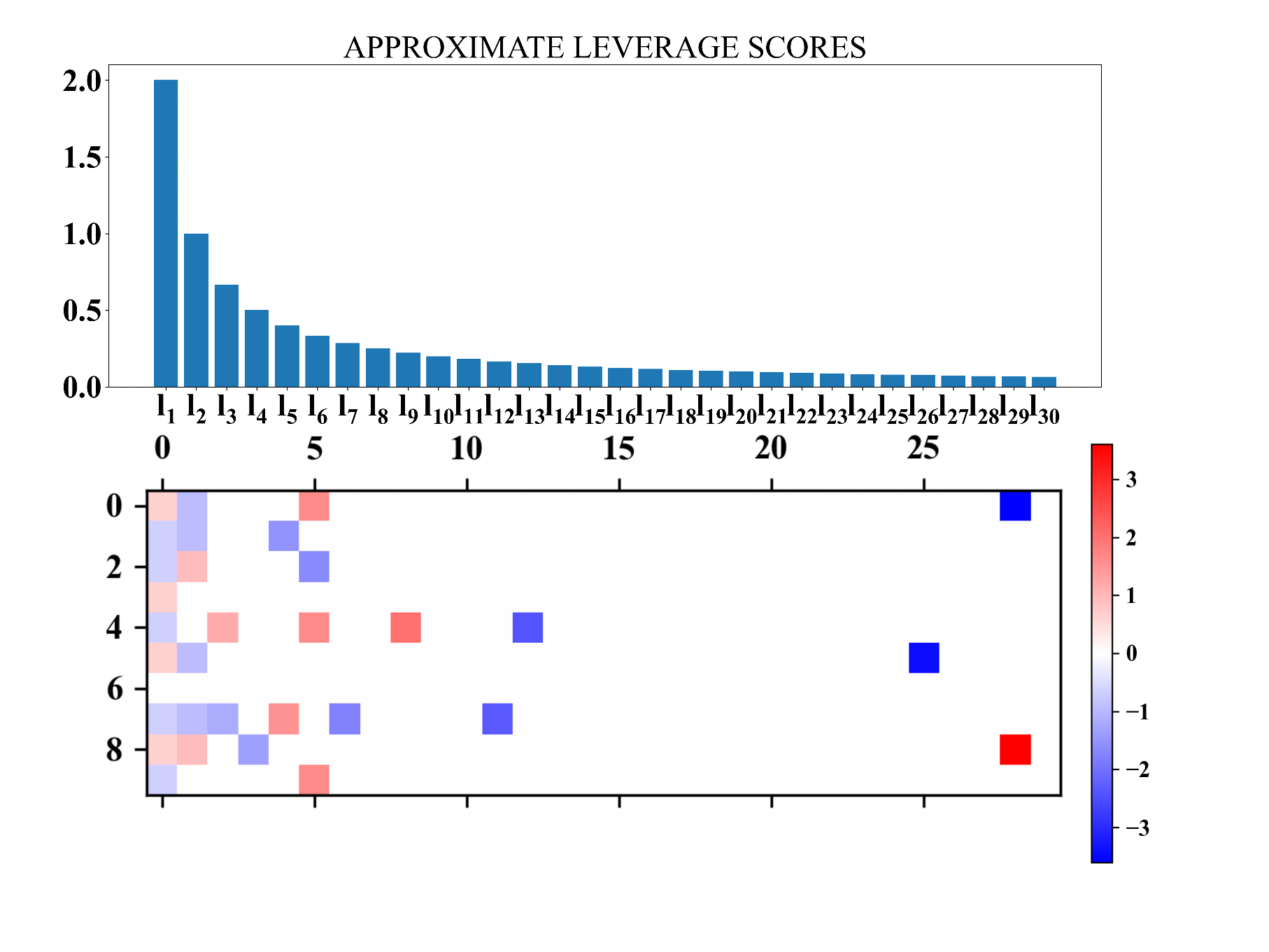}
    \caption{LESS-IND-ENT with decreasing leverage scores. Since the probability of an entry being non-zero is proportional to the corresponding leverage score, we see that the matrix becomes sparser as we move in the direction of decreasing leverage scores. Since the scaling of entries is inversely proportional to the square root of the corresponding leverage score, the magnitude of the non-zero entries becomes larger as we move to the right.}
    \label{fig:nseindent} 
\end{figure}

\subsection{Leverage Score Sparsification}

We propose two variants of a LESS embedding. First, we consider an extension of the OSE-IND-ENT model (with i.i.d. entries) studied in Section \ref{s:analysis-direct}, with the Bernoulli sparsifier for each entry of $S$ being non-zero with a probability proportional to the leverage score of the corresponding component. However, we still want the variance of each entry of $S$ to be $p$, so the entries are appropriately scaled copies of $\pm 1$ (See Figure \ref{fig:nseindent}). Similar to the previous section, the analysis proceeds by considering a symmetrized version of $SU$ where $S$ is viewed as a sum of independent random matrices $Z_i$. The scaling of entries of $S$ retains the bound on $\norm{Z_i U}$ as mentioned in Section \ref{s:analysis-direct}. A formal definition follows.

\begin{definition}[LESS-IND-ENT]\label{noseie}
An $m \times n$ random matrix $S$ is called a leverage score sparsified embedding with independent entries (LESS-IND-ENT) corresponding to $(\beta_1,\beta_2)$-approximate leverage scores $(l_1,...,l_n)$ with parameter $p$, if $S$ has entries $s_{i,j}=\frac{1}{\sqrt{\beta_1  l_j}} \delta_{i,j} \xi_{i,j}$ where $\delta_{i,j}$ are independent Bernoulli random variables taking value 1 with probability $p_{ij}= \beta_1 l_j p$ and $\xi_{i,j}$ are i.i.d. random variables with $\Pb(\xi_{i,j}=1)=\Pb(\xi_{i,j}=-1)=1/2$. 
\end{definition}

In the next variant of a LESS embedding, we are able to reduce the computational and random bit complexity for generating sparsity by only generating as many non-zero entries as required. Here, the $i\textsuperscript{th}$ row of $S$ is the sum of $np$ many i.i.d. random matrices $Z_{ij}$ where each $Z_{ij}$ is determined by choosing one entry from the $n$ possible entries of the $i\textsuperscript{th}$ row and setting the remaining entries to 0. Here, instead of choosing the positions uniformly at random, we choose them proportionally to the corresponding leverage score.

\begin{definition}[LESS-IND-ROWS]\label{def:lessindrows}
Assume that $(\beta_1  p  \sum l_i)$ is an integer. An $m \times n$ random matrix $S$ is called a leverage score sparsified embedding with independent rows (LESS-IND-ROWS) corresponding to  $(\beta_1,\beta_2)$-approximate leverage scores $(l_1,...,l_n)$ with parameter $p$, if the $i\textsuperscript{th}$ row of $S$ is a sum of $(\beta_1  p \sum l_j)$ i.i.d. copies $Z_{i1},Z_{i2},...,$ of a random variable $Z_i$, i.e.,
\begin{align*}S= \sum_{i=1}^m \sum \limits_{k=1}^{(\beta_1  p  \sum l_i)} Z_{ik},
\end{align*}
where $Z_{i}$ is defined as follows. Let $\gamma$ be a random variable taking values in $[n]$ with $\Pb(\gamma = j) =  l_j/(\sum \limits_{k=1}^{n}l_k)$. Let $\xi$ be a Rademacher random variable, $\Pb(\xi=-1)=\Pb(\xi=1)=\frac{1}{2}$. Then,
\begin{align*}Z_{i}=\xi\sum \limits_{j \in [n]} \mathbbm{1}_{\{j\}}(\gamma)\frac{1}{\sqrt{\beta_1  l_j}}E_{i,j}
\end{align*}
where $E_{i,j}$ is an $m \times n$ matrix with $1$ in the $(i,j)^{th}$ entry and $0$ everywhere else.

\end{definition}

\subsection{Proof of Theorem \ref{t:less}}

These modifications allow us to prove subspace embedding guarantees for sparser matrices than in the oblivious case, thereby showing Theorem \ref{t:less}. In particular, we show that with LESS it suffices to use $O(\log^4 (d/\delta ))$ nonzero entries per row of $S$, instead of per column of $S$, which is much sparser since $S$ is a wide matrix.

\begin{theorem}\label{nonose}
Let $U$ be an arbitrary $n \times d$ deterministic matrix such that $U^TU=I$ with $(\beta_1,\beta_2)$-approximate leverage scores $(l_1,...,l_n)$. There exist constants $c_{\ref{nonose}.1},c_{\ref{nonose}.2},c_{\ref{nonose}.3}$, such that for any $0<\varepsilon<1, 0<\delta < 1$, and any LESS-IND-ENT or LESS-IND-ROWS random matrix $S$ corresponding to  $(l_1,...,l_n)$ with embedding dimension $m \geq c_{\ref{nonose}.1}  \max (d, \log(4/\delta))/\varepsilon^2$ and parameter $p \geq c_{\ref{nonose}.2} {(\log (d/\delta ))^4}/{( m \varepsilon ^6)}$, we have

\begin{align*}
\Pb \left( 1 - \varepsilon  \leq s_{\min}((1/\sqrt{pm})SU)   \leq s_{\max}((1/\sqrt{pm})SU) \leq 1 + \varepsilon \right) \geq 1-\delta, 
\end{align*}
and if we choose $m = c_{\ref{nonose}.1}  \max (d, \log(4/\delta))/\varepsilon^2$, $p = c_{\ref{nonose}.2} {(\log (d/\delta ))^4}/{( m \varepsilon ^6)} $, we have the following high probability bound for the maximum number of nonzero entries per row
\begin{align*}
\Pb \left( \mathop {\max }\limits_{i \in [m]} (\card(\{j \in [n]:{s_{ij}} \ne 0\} )) \le c_{\ref{nonose}.3}\beta_1  \beta_2{(\log (d/\delta ))^4}/{ \varepsilon ^4} \right) \geq 1-\delta
\end{align*}
Alternatively, given a fixed $\theta<3$, there exist constants $c_{\ref{nonose}.4}$, $c_{\ref{nonose}.5}$, $c_{\ref{nonose}.6}$ and $c_{\ref{nonose}.7}$ such that for $m \geq \max \{ (1+\theta)d, c_{\ref{nonose}.4}\log(4/\delta))/\theta^2 \}$ and $p>c_{\ref{nonose}.5}{(\log (d/\delta ))^4}/{(m\theta^6)}$, 

\begin{align*}
    \Pb \left( \kappa(\frac{1}{\sqrt{pm}}SU) \leq \frac{c_{\ref{nonose}.6}}{\theta} \right) \geq 1-\delta
\end{align*}
and if we choose $m = \max \{ (1+\theta)d, c_{\ref{nonose}.4}\log(4/\delta))/\theta^2 \}$ and $p=c_{\ref{nonose}.5}{(\log (d/\delta ))^4}/{(m\theta^6)}$, we have the following high probability bound for the maximum number of nonzero entries per row
\begin{align*}
\Pb \left( \mathop {\max }\limits_{i \in [m]} (\card(\{j \in [n]:{s_{ij}} \ne 0\} )) \le c_{\ref{nonose}.7}\beta_1  \beta_2{(\log (d/\delta ))^4}/{ \theta ^6} \right) \geq 1-\delta
\end{align*}

\end{theorem}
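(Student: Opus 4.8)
The plan is to mirror the proof of Theorem~\ref{osngeneral} almost line for line; the only substantive difference is reconciling the non-uniform sparsity pattern and the entry rescaling with the covariance bookkeeping and with the bound on $R$. First I would establish the analogue of Lemma~\ref{lem:entries}: for both LESS-IND-ENT (Definition~\ref{noseie}) and LESS-IND-ROWS (Definition~\ref{def:lessindrows}), every entry satisfies $\E[s_{ij}]=0$, $\Var(s_{ij})=p$, and distinct entries are uncorrelated. For LESS-IND-ENT this is immediate since $\E[s_{ij}^2]=\tfrac{1}{\beta_1 l_j}\Pb(\delta_{ij}=1)=p$ and the entries are independent; for LESS-IND-ROWS one computes $\E[s_{ij}^2]=(\beta_1 p\sum_k l_k)\cdot\tfrac{l_j}{\sum_k l_k}\cdot\tfrac{1}{\beta_1 l_j}=p$, while the within-row cross-covariances vanish because each summand $Z_{ik}$ has one-hot support and distinct summands (and distinct rows) are independent with mean zero. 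The key consequence is that the moment structure of $SU$ matches the oblivious case exactly: $\E[(SU)^\top SU]=pm\,\Id$ and $\E[SU(SU)^\top]=pd\,\Id$, because the factor $1/\sqrt{\beta_1 l_j}$ cancels the inclusion probability $\beta_1 l_j p$. Thus the Gaussian comparison model for $\augsym(SU,\lambda)$ is again $\augsym(\sqrt p\,G,\lambda)$ with $G$ an $m\times d$ i.i.d.\ standard normal matrix, and Lemma~\ref{mapa} gives $\sigma_*(\augsym(SU,\lambda))\le 2\sqrt p$ and $\sigma(\augsym(SU,\lambda))\le\sqrt{pm}$ unchanged.

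The one genuinely new point is the bound $R(\augsym(SU,\lambda))\le 1$, and this is exactly where the leverage-score rescaling pays off. Decompose $SU$ into independent summands: $s_{ij}e_i u_j^\top$ for LESS-IND-ENT and $\xi\tfrac{1}{\sqrt{\beta_1 l_\gamma}}e_i u_\gamma^\top$ for LESS-IND-ROWS, with $u_j^\top=e_j^\top U$. Each nonzero summand has operator norm at most $\|e_j^\top U\|/\sqrt{\beta_1 l_j}\le 1$, since $\|e_j^\top U\|^2\le\beta_1 l_j$ is precisely the lower-bound half of Definition~\ref{apprls}. With the same $\sigma,\sigma_*,R$ and the same Gaussian model, the rest of the proof is verbatim that of Theorem~\ref{osngeneral}: apply Lemma~\ref{thm:brailovskayavanhandel} with $t=\log(2d/\delta)$, Lemma~\ref{lem:Gaussianspectrum} for the singular values of $\sqrt p\,G$, Lemma~\ref{connectsing} to pass from $\augsym(SU,\lambda)$ to the singular values of $SU$, and choose $\lambda=\tfrac{1}{10}\varepsilon\sqrt{pm}$ (resp.\ $\lambda=\tfrac{\theta}{120}\sqrt{pm}$), admissible once $pm\gtrsim(\log(d/\delta))^4/\varepsilon^6$ (resp.\ $/\theta^6$). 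This yields $(1-\varepsilon)\le s_{\min}((pm)^{-1/2}SU)\le s_{\max}((pm)^{-1/2}SU)\le(1+\varepsilon)$ and the $\kappa\le c/\theta$ variant.

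For the per-row sparsity: in LESS-IND-ROWS, row $i$ is a sum of $\beta_1 p\sum_k l_k$ one-hot terms, hence has at most $\beta_1 p\sum_k l_k\le\beta_1\beta_2 p d$ nonzeros deterministically, using $\sum_k l_k\le\beta_2 d$. In LESS-IND-ENT, the number of nonzeros in row $i$ is $\sum_j\delta_{ij}$, a sum of independent Bernoullis of mean $\beta_1 p\sum_j l_j\le\beta_1\beta_2 p d$; a Chernoff bound together with a union bound over the $m$ rows gives $\max_i\sum_j\delta_{ij}=O(\beta_1\beta_2 p d+\log(m/\delta))$ with probability $1-\delta$. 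Substituting $m\gtrsim d/\varepsilon^2$ (resp.\ $m\gtrsim(1+\theta)d$) and the stated choice of $p$ gives $pd\lesssim(\log(d/\delta))^4/\varepsilon^4$ (resp.\ $(\log(d/\delta))^4/\theta^6$), with $\log(m/\delta)$ of lower order, so the per-row count is $O(\beta_1\beta_2(\log(d/\delta))^4/\varepsilon^4)$ (resp.\ $O(\beta_1\beta_2(\log(d/\delta))^4/\theta^6)$), as claimed.

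No step here is a serious obstacle once Theorem~\ref{osngeneral} is in hand; the matters that need a little care are the two covariance identities above (they hold only because the rescaling exactly offsets the non-uniform inclusion probabilities) and the standard caveat that a coordinate with $\beta_1 l_j p>1$ should simply be placed in the support deterministically, which only strengthens the embedding and leaves every bound intact.
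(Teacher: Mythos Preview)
Your proposal is correct and follows essentially the same approach as the paper: establish the covariance identities $\E[s_{ij}]=0$, $\Var(s_{ij})=p$, $\cov(s_{ij},s_{kl})=0$ for both LESS constructions (the paper records this as Lemma~\ref{lem:entriesnonobliv}), use the leverage-score lower bound $\|e_j^\top U\|^2\le\beta_1 l_j$ to get $R\le 1$, and then invoke the proof of Theorem~\ref{osngeneral} unchanged; the per-row sparsity is handled deterministically for LESS-IND-ROWS and via Bernstein (your Chernoff is equivalent) plus a union bound for LESS-IND-ENT, with the same substitution $pd\asymp(\log(d/\delta))^4/\varepsilon^4$.
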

To prove Theorem \ref{nonose}, we need to first calculate the covariance profile of the three models.

\begin{lemma}[Variance and Uncorrelatedness] \label{lem:entriesnonobliv} 
Let $S=\{s_{ij}\}_{i \in [m], j \in [n]}$ be a $m \times n$ random matrix with LESS-IND-ENT or LESS-IND-ROWS distribution with parameter $p$. Then, $\E(s_{ij})=0$ and $\operatorname{Var}(s_{ij})=p$ for all $i \in [m], j \in [n]$, and $\cov(s_{i_1j_1},s_{i_2j_2})=0$ for any $\{i_1,i_2\} \subset [m], \{j_1,j_2\} \subset [n]$ and $ (i_1,j_1)\neq (i_2,j_2) $
\end{lemma}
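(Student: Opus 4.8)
The plan is to verify the three claimed statements --- $\E(s_{ij})=0$, $\Var(s_{ij})=p$, and $\cov(s_{i_1j_1},s_{i_2j_2})=0$ for distinct index pairs --- separately for the LESS-IND-ENT and LESS-IND-ROWS distributions, mirroring the structure of the proof of Lemma \ref{lem:entries}. For LESS-IND-ENT the computation is essentially a direct calculation: since $s_{ij}=\frac{1}{\sqrt{\beta_1 l_j}}\delta_{ij}\xi_{ij}$ with $\delta_{ij}\sim\mathrm{Bernoulli}(\beta_1 l_j p)$ independent across $(i,j)$ and $\xi_{ij}$ independent symmetric signs, we get $\E(s_{ij})=\frac{1}{\sqrt{\beta_1 l_j}}\E(\delta_{ij})\E(\xi_{ij})=0$, and $\E(s_{ij}^2)=\frac{1}{\beta_1 l_j}\E(\delta_{ij}^2)\E(\xi_{ij}^2)=\frac{1}{\beta_1 l_j}\cdot\beta_1 l_j p\cdot 1=p$; the carefully chosen scaling factor is exactly what makes the variance equal to $p$ regardless of $j$. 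For the covariance with $(i_1,j_1)\neq(i_2,j_2)$, full independence of all the $\delta$'s and $\xi$'s gives $\E(s_{i_1j_1}s_{i_2j_2})=\E(s_{i_1j_1})\E(s_{i_2j_2})=0$ (if $i_1\neq i_2$ or $j_1\neq j_2$ the two entries involve disjoint sets of independent factors, so at least one factor, either a $\xi$ or a $\delta$, is independent of the rest and has mean zero or can be split off).

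For LESS-IND-ROWS the argument parallels the independent-diagonals/OSNAP analysis in Lemma \ref{lem:entries}: I would first analyze a single summand $Z_i$ defined by the leverage-score-weighted index $\gamma$ (with $\Pb(\gamma=j)=l_j/\sum_k l_k$) and the sign $\xi$, computing $(Z_i)_{ij}=\xi\,\mathbbm{1}_{\{j\}}(\gamma)\frac{1}{\sqrt{\beta_1 l_j}}$. Then $\E((Z_i)_{ij})=0$ by symmetry of $\xi$, and $\E((Z_i)_{ij}^2)=\frac{1}{\beta_1 l_j}\Pb(\gamma=j)=\frac{1}{\beta_1 l_j}\cdot\frac{l_j}{\sum_k l_k}=\frac{1}{\beta_1\sum_k l_k}$. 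Since the $i$th row of $S$ is a sum of $\beta_1 p\sum_k l_k$ i.i.d.\ copies of $Z_i$, additivity of variance for independent summands gives $\Var(s_{ij})=\beta_1 p\sum_k l_k\cdot\frac{1}{\beta_1\sum_k l_k}=p$, again with the scaling tuned so the $l_j$ cancels. For covariances: two entries in different rows involve independent copies (rows are built from independent families of $Z_{ik}$), so the covariance vanishes; two entries $(i,j_1)$ and $(i,j_2)$ in the same row with $j_1\neq j_2$ satisfy, within a single copy $Z_{ik}$, $(Z_{ik})_{ij_1}(Z_{ik})_{ij_2}=0$ pointwise because $\gamma$ takes only one value, hence $\mathbbm{1}_{\{j_1\}}(\gamma)\mathbbm{1}_{\{j_2\}}(\gamma)=0$; summing over the independent copies and expanding $\E(s_{ij_1}s_{ij_2})$ as a sum of $\E((Z_{ik})_{ij_1}(Z_{ik})_{ij_2})$ (cross terms between distinct copies factor and vanish by $\E Z=0$) yields $0$.

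There is no serious obstacle here; the lemma is a bookkeeping computation. The one point requiring minor care is the same-row case for LESS-IND-ROWS, where one must correctly organize the double expansion $s_{ij_1}s_{ij_2}=\big(\sum_k (Z_{ik})_{ij_1}\big)\big(\sum_l (Z_{il})_{ij_2}\big)$, separate the diagonal terms $k=l$ (which vanish pointwise) from the off-diagonal terms $k\neq l$ (which vanish by independence and mean zero), and confirm the leverage-score normalization cancels cleanly; I would also note explicitly that the required integrality assumption $\beta_1 p\sum l_i\in\mathbb{Z}$ from Definition \ref{def:lessindrows} is what makes the count of summands well-defined. The remainder of Theorem \ref{nonose} will then follow by feeding this covariance profile into Lemma \ref{mapa} and the universality machinery exactly as in the proof of Theorem \ref{osngeneral}, together with a separate concentration bound (e.g.\ a Chernoff/Bernstein estimate on the leverage-score-weighted Bernoulli or multinomial counts) to control the maximum number of nonzeros per row.
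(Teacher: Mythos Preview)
Your proposal is correct and follows essentially the same approach as the paper: a direct computation for LESS-IND-ENT using independence of the $\delta_{ij}$ and $\xi_{ij}$, and for LESS-IND-ROWS an analysis of a single summand $Z_{i}$ followed by summing over independent copies, with the same-row covariance vanishing because $\mathbbm{1}_{\{j_1\}}(\gamma)\mathbbm{1}_{\{j_2\}}(\gamma)=0$ pointwise. The paper's proof packages the case analysis slightly differently (using the joint indicator $\mathbbm{1}_{\{(i,j)\}}(i',\gamma)$ rather than separating into different-row and same-row cases), but the content is identical.
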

\begin{proof}
For the independent entries case, we have $\Var(s_{ij})=\frac{1}{\beta l_j}\beta l_j p=p$ and the entries are not correlated because they are independent.

For the independent rows case, since $S$ is a sum of i.i.d. copies of $Z_{i'}$, we analyze $Z_{i'}$ first for some fixed $i'$. By definition, we have 
\begin{align*}Z_{i'}=\xi(\sum \limits_{j \in [n]} \mathbbm{1}_{\{j\}}(\gamma)\frac{1}{\sqrt{\beta_1  l_j}}E_{i',j})
\end{align*}
Let $z_{ij}$ be the $(i,j)\textsuperscript{th}$ entry of $Z_{i'}$. Then, we have
\begin{align*}z_{ij}=\frac{1}{\sqrt{\beta_1  l_j}}\mathbbm{1}_{\{(i,j)\}}(i', \gamma)\xi\end{align*}
Since $\xi$ and $\gamma$ are independent, we know that $\E(z_{ij})=\frac{1}{\sqrt{\beta_1  l_j}}\E(\mathbbm{1}_{\{(i,j)\}}(i', \gamma))\E(\xi)=0$ because $\E(\xi)=0$. Also, we have \begin{align*}\E(z_{ij}^2)=(\frac{1}{\sqrt{\beta_1  l_j}})^2\E(\mathbbm{1}_{\{(i,j)\}}(i',\gamma)^2)\E(\xi^2)=\frac{\mathbbm{1}_{\{ i \}}(i')}{\beta_1  (\sum \limits_{j=1}^{n}l_j)}\end{align*} 
Now, for the covariance, since the random variables $z_{ij}$ are centered, we have
\begin{align*}
\cov(z_{i_1j_1},z_{i_2j_2})&=\E(z_{i_1j_1} z_{i_2j_2}) \\ 
&=\E(\frac{1}{\sqrt{\beta_1  l_{j_1}}}\mathbbm{1}_{\{(i_1,j_1)\}}(i', \gamma)\xi \frac{1}{\sqrt{\beta_1  l_{j_2}}} \mathbbm{1}_{\{(i_2,j_2)\}}(i',\gamma)\xi)\\
&=\frac{1}{\sqrt{\beta_1  l_{j_1}}}\frac{1}{\sqrt{\beta_1  l_{j_2}}}\E(\xi^2)\E(\mathbbm{1}_{\{(i_1,j_1)\}}(i',\gamma)\mathbbm{1}_{\{(i_2,j_2)\}}(i',\gamma))
\end{align*} 
Now, if $(i_1,j_1) \ne (i_2,j_2)$, and $i_1 \neq i_2$, then $\mathbbm{1}_{\{(i_1,j_1)\}}(i',\gamma)\mathbbm{1}_{\{(i_2,j_2)\}}(i',\gamma)=0$ since $i'$ can equal only one of $i_1$ or $i_2$. If $i_1=i_2$, then $j_1 \neq j_2$ and $\mathbbm{1}_{\{(i_1,j_1)\}}(i',\gamma)\mathbbm{1}_{\{(i_2,j_2)\}}(i',\gamma)=0$ because $\gamma$ can only take one value out of $j_1$ and $j_2$.

Now, since \begin{align*}S= \sum_{i'=1}^m \sum \limits_{k=1}^{(\beta_1  p  \sum l_i)} Z_{i'k} \text{,}\end{align*} we have, by independence, $\Var(s_{ij}) = \sum_{i'=1}^m (\beta_1  p  \sum l_i) \frac{\mathbbm{1}_{\{ i \}}(i')}{\beta_1  (\sum \limits_{j=1}^{n}l_j)}=p$ and, when $(i_1, j_1)\neq (i_2, j_2)$,
\begin{align*}
    \cov(s_{i_1 j_1}, s_{i_2 j_2}) &= \E(s_{i_1 j_1}s_{i_2 j_2}) = \sum_{i'=1}^m \sum_{k=1}^{(\beta_1  p  \sum l_i)} \E \left( (Z_{i'k})_{i_1 j_1}(Z_{i'k})_{i_2 j_2} \right) = 0
\end{align*}

\end{proof}

Now, we can prove Theorem \ref{nonose}.

\begin{proof}[Proof of Theorem \ref{nonose}]
By Lemma \ref{lem:entriesnonobliv}, we see that both distributions satisfy the condition of Lemma \ref{mapa}, so we conclude that $\sigma_*(X) \leq 2\sqrt{p}$ and $\sigma(\augsym(SU)) \leq \sqrt{pm}$.

For the IND-ENT case, since $|s_{i,j}|$ is bounded by $\frac{1}{\sqrt{\beta_1 l_j}}$, we have
\begin{align*}R(\augsym(SU)) \le \max \limits _{i,j} \frac{1}{\sqrt{\beta_1 l_j}}  \left\| {\left[ {\begin{array}{*{20}{c}}
          0&0&{{{({e_i}{u_j}^T)}^T}}&0 \\ 
          0&0&0&0 \\ 
          {({e_i}{u_j}^T)}&0&0&0 \\ 
          0&0&0&0 
        \end{array}} \right]} \right\|_{op}  \le 1
\end{align*}

And for the IND-ROWS case, we have $Z_kU=\sum \limits _{i \in [m], j \in [n]}(Z_k)_{i,j}({e_i}{u_j}^T)$. Since this sum has only one nonzero term, we have $\norm{Z_kU}_{op} = \max \limits _{i \in [m], j \in [n]} |(Z_k)_{i,j}| \norm{{e_i}{u_j}^T}_{op}$. Using this decomposition to $\augsym(SU)$ and the fact that  $|(Z_k)_{i,j}| \le \frac{1}{\sqrt{\beta_1 l_j}}$, we conclude
\begin{align*}R(\augsym(SU)) \le \max \limits _{i \in [m], j \in [n], k \in [\beta_1  p m \sum l_i]} (Z_k)_{i,j} \left\| {\left[ {\begin{array}{*{20}{c}}
          0&0&{{{({e_i}{u_j}^T)}^T}}&0 \\ 
          0&0&0&0 \\ 
          {({e_i}{u_j}^T)}&0&0&0 \\ 
          0&0&0&0 
        \end{array}} \right]} \right\|_{op} \le 1
\end{align*}

For both cases, since their matrix parameters $\sigma_*(X), \sigma(X)$ and $R(X)$ are the same as in Theorem \ref{osngeneral}, following the proof for Theorem \ref{osngeneral}, we conclude that
there exist constants $c_{1},c_{2},$, such that for any $\varepsilon, \delta > 0$, we have
\begin{align*}
\Pb \left( 1 - \varepsilon  \leq s_{\min}((1/\sqrt{pm})SU)   \leq s_{\max}((1/\sqrt{pm})SU) \leq 1 + \varepsilon \right) \geq 1-\delta
\end{align*}
when $m \geq c_{1}  \max (d, \log(4/\delta))/\varepsilon^2$ and $pm>c_{2}{(\log (d/\delta ))^4}/{\varepsilon ^6}$. The proof of the claim when $m \geq (1+\theta)d$ also follows in the same manner as Theorem \ref{osngeneral}.

Since $p$ is just a parameter in the LESS models, we want to know what the requirement $pm>c_{2}{(\log (d/\delta ))^4}/{\varepsilon ^6}$ (similarly $pm>c_{2}{(\log (d/\delta ))^4}/{\theta ^6}$) means for the average number of nonzero entries in each row. First, the average number of nonzero entries in each row will be  $\beta_1 p \sum_{j=1}^n l_j \leq \beta_1 \beta_2 pd$ for both cases, just by the construction of these matrices. Then we observe that the condition $pm>c_{2}{(\log (d/\delta ))^4}/{\varepsilon ^6}$ is equivalent to
\begin{align*}pd>c_{2}({(\log (d/\delta ))^4}/{\varepsilon ^6})(\frac{d}{m})\end{align*}

Since $m \ge d/\varepsilon^2$, we have $\frac{d}{m} \le \varepsilon^2$. Therefore, to meet the requirement \begin{align*}
    pd>c_{2}({(\log (d/\delta ))^4}/{\varepsilon ^6})(\frac{d}{m})
\end{align*}
it suffices to have $pd>c_{2}{(\log (d/\delta ))^4}/{\varepsilon ^4}$. So the optimal choice of $p$ leads to \begin{align*}
    \beta_1 \beta_2 c_{2}{(\log (d/\delta ))^4}/{\varepsilon ^4}
\end{align*} many nonzero entries in each row on average. 

In the case when $m \geq (1+\theta)d$ and $pm>c_{2}{(\log (d/\delta ))^4}/{\theta ^6}$, since we can only claim $d/m < 1$ in general, we need $pd>c_{2}{(\log (d/\delta ))^4}/{\theta ^6}$ so the average number of nonzero entries in each row would be $\beta_1\beta_2c_{2}{(\log (d/\delta ))^4}/{\theta ^6}$. 

We now obtain the high probability bound for the number of nonzero entries in each row.

For the independent entries distribution, the number of nonzero entries is
\begin{align*}
    \sum \limits_{j=1}^n \mathbbm{1}_{\R \backslash \{0\}} (s_{ij})=\sum \limits_{j=1}^n \delta_{ij}
\end{align*}
Therefore, the total variance of the sum is $v_1=\sum \limits_{j=1}^n \beta_1 l_jp(1-\beta_1l_jp) \le \beta_1  \beta_2  pd$.

By Bernstein's Inequality and using the fact that $\E(\sum \limits_{j=1}^n \delta_{ij})=\beta_1 p \sum_{j=1}^n l_j \le \beta_1 \beta_2 pd$, we have
\begin{align*}
    &\Pb(\sum \limits_{j=1}^n \delta_{ij} > (t+1) \beta_1  \beta_2  pd)
    \\ \le & \Pb(\sum \limits_{j=1}^n \delta_{ij} > \beta_1 p \sum_{j=1}^n l_j + t \beta_1  \beta_2  pd)
    \\\le& \exp(-\frac{(t \beta_1  \beta_2  pd)^2}{2(v_1+(t \beta_1  \beta_2  pd)/3)}) \\\le& \exp(-\frac{t \beta_1  \beta_2  pd}{4})
\end{align*}
for $t \ge 3$.

By union bound, we have
\begin{align*}
&\Pb \left( \mathop {\max }\limits_{i \in [m]} (\card(\{j \in [n]:{s_{ij}} \ne 0\} )) \ge (t+1) \beta_1  \beta_2  pd \right) \\\le& m\exp(-\frac{t \beta_1  \beta_2  pd}{4})
\\ =& \exp(-\frac{t \beta_1  \beta_2  pd}{4}+\log(c_1\frac{d}{\varepsilon^2}))
\end{align*}

Since $pd>c_{2}{(\log (d/\delta ))^4}/{\varepsilon ^4}$, we have
\begin{align*}
&\exp(-\frac{t \beta_1  \beta_2  pd}{4}+\log(c_1\frac{d}{\varepsilon^2})) \\\le& \exp(-\frac{t \beta_1  \beta_2  (c_{2}{(\log (d/\delta ))^4}/{\varepsilon ^4})}{4}+\log(c_1\frac{d}{\varepsilon^2}))
\end{align*}

When $t$ is large enough, we have $(1/2)\frac{t \beta_1  \beta_2  (c_{2}{(\log (d/\delta ))^4}/{\varepsilon ^4})}{4}>\log(c_1\frac{d}{\varepsilon^2})$, because $\frac{1}{\varepsilon^4}$ dominates $\log(\frac{1}{\varepsilon^2})$ and $(\log (d/\delta ))^4$ dominates $\log(d)$.

In that case, we have
\begin{align*}
    &\Pb \left( \mathop {\max }\limits_{i \in [m]} (\card(\{j \in [n]:{s_{ij}} \ne 0\} )) \ge (t+1) \beta_1  \beta_2  pd \right) \\\le&  \exp(-(1/2)\frac{t \beta_1  \beta_2  (c_{2}{(\log (d/\delta ))^4}/{\varepsilon ^4})}{4})
    \\\le&  \exp(-(1/2)\frac{t  (c_{2}{(\log (d/\delta ))^4})}{4})
\end{align*}
because $\beta_1 \beta_2 >1$ and $\varepsilon<1$. We can argue similarly for the $m=(1+\theta)d$ case.

We observed that $(\log (d/\delta ))^4> \log(1/\delta)$ since $d>1$. Therefore, as long as we choose $t$ such that $\frac{c_2t}{8}>1$, we will have
\begin{align*}
    \exp(-t \frac{c_2}{8}(\log (d/\delta ))^4) \le \exp(-(\log (d/\delta ))^4) \le \exp(-(\log (1/\delta ))) =\delta
\end{align*}

For the independent rows distribution, the number of nonzero entries in row $i$ is less than or equal to $\beta_1p\sum l_i \le \beta_1\beta_2 pd$, by construction.

\end{proof}

As an immediate corollary of Theorem \ref{nonose} we give a new subspace embedding guarantee for the Fast Johnson-Lindenstrauss Transform (FJLT). Recall that an FJLT preconditions the matrix $U$ with the Randomized Hadamard Transform (see Definition \ref{def:RHT}) to transform it into another matrix $V$ whose row norms can be well controlled. In this way, we obtain  approximate leverage scores for $V$ by construction rather than by estimation. Then, we can apply LESS-IND-ENT or LESS-IND-ROWS random matrices to the preconditioned matrix $V$ according to these approximate leverage scores.

\begin{theorem}[Analysis of Preconditioned Sparse OSE]\label{usrht}
Let $U$ be an arbitrary $n \times d$ deterministic matrix such that $U^TU=I$. There exist constants $c_{\ref{usrht}.1},c_{\ref{usrht}.2},c_{\ref{usrht}.3}$, such that for any $0<\varepsilon<1, \frac{2n}{e^d}<\delta < 1$, the following holds. Let $S=\Phi (\frac{1}{\sqrt{n}}) HD$ where $H$ and $D$ are as in definition \ref{hada} and \ref{def:RHT}, and $\Phi$ has LESS-IND-ENT or LESS-IND-ROWS distribution corresponding to uniform leverage scores $(d/n,...,d/n)$ with embedding dimension $m \geq c_{\ref{usrht}.1}  \max (d, \log(8/\delta))/\varepsilon^2$ and average number of nonzero entries per row $\ge c_{\ref{usrht}.2} {(\log (2d/\delta ))^4}/{ \varepsilon ^4} $.  Then we have 
\begin{align*}
\Pb \left( 1 - \varepsilon  \leq s_{\min}((1/\sqrt{pm})SU)   \leq s_{\max}((1/\sqrt{pm})SU) \leq 1 + \varepsilon  \right) \geq 1-\delta
\end{align*}
\end{theorem}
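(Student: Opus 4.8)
The plan is to reduce this statement to Theorem \ref{nonose} applied to the preconditioned matrix $V:=\frac{1}{\sqrt n}HDU$, using the randomized Hadamard transform to certify that the uniform vector $(d/n,\dots,d/n)$ is a valid tuple of approximate leverage scores for $V$, and then exploiting the independence of $\Phi$ from $D$.

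First I would apply Lemma \ref{rhtrownorms} with failure probability $\delta/2$: it guarantees that $V$ has orthonormal columns ($V^\top V=I$, using $H^\top H=nI$ and $D^2=I$), and that with probability at least $1-\delta/2$ over $D$ the event
\[
\mathcal E_{\mathrm{RHT}}:=\Big\{\,\max_{j\in[n]}\|e_j^\top V\|^2 \le \big(\sqrt{d/n}+\sqrt{8\log(2n/\delta)/n}\big)^2\,\Big\}
\]
holds. On $\mathcal E_{\mathrm{RHT}}$ one has $\max_j\|e_j^\top V\|^2 \le 2d/n + 16\log(2n/\delta)/n$, and since the hypothesis $\delta>2n/e^d$ is equivalent to $\log(2n/\delta)<d$, this is at most $18\,d/n$. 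Hence, taking $l_j=d/n$, we have $\|e_j^\top V\|^2/\beta_1\le l_j$ with $\beta_1=18$, while $\sum_j l_j=d=\|V\|_F^2$, so $\beta_2=1$; thus on $\mathcal E_{\mathrm{RHT}}$ the uniform scores are $(\beta_1,\beta_2)$-approximate leverage scores for $V$ with $\beta_1=O(1)$, $\beta_2=1$. I expect this to be the only genuinely delicate point: it is exactly here that the assumption $\delta>2n/e^d$ is used, to keep $\beta_1$ an absolute constant rather than a quantity growing with $n/d$.

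Next, observe that $SU=\Phi\cdot\frac{1}{\sqrt n}HDU=\Phi V$ and that $\Phi$ is independent of $D$. Conditioning on any realization of $D$ lying in $\mathcal E_{\mathrm{RHT}}$, the matrix $V$ is a fixed $n\times d$ matrix with $V^\top V=I$ whose uniform scores are $(\beta_1,\beta_2)$-approximate leverage scores, and $\Phi$ is precisely a LESS-IND-ENT (resp.\ LESS-IND-ROWS) matrix built from those scores. I would then invoke Theorem \ref{nonose} with $U\leftarrow V$, distortion $\varepsilon$, and failure probability $\delta/2$: provided $m\ge c_{\ref{nonose}.1}\max(d,\log(8/\delta))/\varepsilon^2$ and $p\ge c_{\ref{nonose}.2}(\log(2d/\delta))^4/(m\varepsilon^6)$, we obtain
\[
\Pb\Big(1-\varepsilon\le s_{\min}(\tfrac{1}{\sqrt{pm}}\Phi V)\le s_{\max}(\tfrac{1}{\sqrt{pm}}\Phi V)\le 1+\varepsilon \;\Big|\; D\Big)\ \ge\ 1-\tfrac{\delta}{2}
\]
for every such $D$. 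To match the hypotheses as stated in Theorem \ref{usrht}, I would translate the assumed lower bound on the average number of nonzeros per row of $\Phi$ --- which equals $\beta_1 p\sum_j l_j=\beta_1 pd$ --- into the required bound on $p$: since $m\ge c_{\ref{usrht}.1}d/\varepsilon^2$ forces $m/d\ge c_{\ref{usrht}.1}/\varepsilon^2$, we get $pm = pd\cdot(m/d)\ge \beta_1^{-1}c_{\ref{usrht}.1}\,(\text{avg nnz per row})\,\varepsilon^{-2}\ge \beta_1^{-1}c_{\ref{usrht}.1}c_{\ref{usrht}.2}(\log(2d/\delta))^4\varepsilon^{-6}$, so choosing $c_{\ref{usrht}.2}$ a sufficiently large multiple of $c_{\ref{nonose}.2}$ (and of the absolute constant $\beta_1$) forces $pm\ge c_{\ref{nonose}.2}(\log(2d/\delta))^4/\varepsilon^6$; the bound on $m$ is inherited by taking $c_{\ref{usrht}.1}\ge c_{\ref{nonose}.1}$.

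Finally I would undo the conditioning with a union bound. Writing $\mathcal A$ for the event $\{1-\varepsilon\le s_{\min}(\tfrac{1}{\sqrt{pm}}SU)\le s_{\max}(\tfrac{1}{\sqrt{pm}}SU)\le 1+\varepsilon\}$, we have
\[
\Pb(\mathcal A)\ \ge\ \Pb(\mathcal A\cap\mathcal E_{\mathrm{RHT}})\ =\ \E\big[\ind_{\mathcal E_{\mathrm{RHT}}}\,\Pb(\mathcal A\mid D)\big]\ \ge\ (1-\tfrac{\delta}{2})\,\Pb(\mathcal E_{\mathrm{RHT}})\ \ge\ (1-\tfrac{\delta}{2})^2\ \ge\ 1-\delta,
\]
which is the claim. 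Apart from the elementary row-norm computation defining $\mathcal E_{\mathrm{RHT}}$ and the parameter bookkeeping above, the argument is a black-box combination of Lemma \ref{rhtrownorms}, Theorem \ref{nonose}, and the independence of $\Phi$ and $D$.
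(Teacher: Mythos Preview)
Your proposal is correct and follows essentially the same approach as the paper: apply Lemma \ref{rhtrownorms} with failure probability $\delta/2$ and use the hypothesis $\delta>2n/e^d$ to conclude that on a high-probability event the uniform scores $(d/n,\dots,d/n)$ are $(\beta_1,1)$-approximate leverage scores for $V=\frac{1}{\sqrt n}HDU$ with an absolute constant $\beta_1$ (the paper gets $\beta_1=16$ via $(1+2\sqrt 2)^2<16$, you get $18$), then condition on $D$, invoke Theorem \ref{nonose} for $\Phi V$ with failure probability $\delta/2$, and combine using independence and $(1-\delta/2)^2\ge 1-\delta$. Your explicit bookkeeping converting the ``average nonzeros per row'' hypothesis to the $pm$ lower bound required by Theorem \ref{nonose} is a bit more detailed than the paper's, but the argument is the same.
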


\begin{proof}[Proof of Theorem \ref{usrht}]
First, note that $\frac{1}{\sqrt{n}}HDU$ is an $n \times d$ matrix with orthonormal columns. Let $\mathcal{E}$ denote the event that the tuple $(l_1=d/n, l_2=d/n, \ldots, l_n=d/n)$ of numbers are $(16,1)$-approximate leverage scores for the matrix $\frac{1}{\sqrt{n}}HDU$. 
Clearly, $\sum_{i=1}^n l_i \leq d$. Since $2n \leq \delta e^d$, we have $\log(2n / \delta) \leq d$, and the claim from Lemma \ref{rhtrownorms} reads,
    \begin{align*}
        \Pb \left( \max \limits_{j=1,...,n} \norm{e_j^T(\frac{1}{\sqrt{n}}HDU)}_{op} \geq \sqrt{\frac{d}{n}}+\sqrt{\frac{8d}{n}} \right) \leq \frac{\delta}{2}
    \end{align*}
    Thus, with probability greater than $1-\delta/2$, we have, for all $j \in [n]$, 
    \begin{align*}
        \norm{e_j^T(\frac{1}{\sqrt{n}}HDU)}^2 < (1+2\sqrt{2})^2\left(\frac{d}{n}\right) < \frac{16d}{n} = 16l_j
    \end{align*}
    Thus, the conditions for $(d/n, d/n, \ldots, d/n)$ to be $(16,1)$-approximate leverage scores for the matrix $\frac{1}{\sqrt{n}}HDU$ are satisfied with probability greater than $1-\delta/2$, i.e.,
    \begin{align*}
        \Pb(\mathcal{E}) \geq 1 - \delta/2
    \end{align*}

Let $V=\frac{1}{\sqrt{n}}HDU$. Conditioning on the random matrix $V$, we can apply Theorem \ref{nonose} and obtain
\begin{align*}
\Pb \left( 1 - \varepsilon  \leq s_{\min}((1/\sqrt{pm})\Phi V)   \leq s_{\max}((1/\sqrt{pm})\Phi V) \leq 1 + \varepsilon  \right|V)(\omega) \geq 1-\delta/2
\end{align*}
on the event $\mathcal{E}$.

Therefore, by independence between $V$ and $\Phi$, we have
\begin{align*}
\Pb \left( 1 - \varepsilon  \leq s_{\min}((1/\sqrt{pm})SU)   \leq s_{\max}((1/\sqrt{pm})SU) \leq 1 + \varepsilon  \right) \geq (1-\delta/2)^2 \ge 1- \delta
\end{align*}
as long as $m \geq c_{\ref{nonose}.1}  \max (d, \log(8/\delta))/\varepsilon^2$ and average number of nonzero entries per row $\ge c_{\ref{nonose}.2} {(\log (2d/\delta ))^4}/{ \varepsilon ^4} $.

\end{proof}

\section{Analysis of Fast Subspace Embeddings}
\label{s:fast-analysis}
In this section, we give proofs for the fast subspace embedding results.
\subsection{Proof of Theorem \ref{t:fast-ose}}
  Recall that our goal is to construct an $m\times n$ OSE matrix $S$ with
  $m=O(d)$ that can be applied to an $A\in\R^{n\times d}$ in time
  $$O(\gamma^{-1}\nnz(A) + d^{2+\gamma}\polylog(d)).$$
    We will present two separate constructions for achieving this: 
    \begin{enumerate}
        \item Our first construction optimizes the time complexity of computing $SA$, including the dependence on polylogarithmic factors.
        \item Our second construction optimizes the number of uniform random bits required to generate $S$, trading off the dependence on polylogarithmic factors and on $\gamma$.
    \end{enumerate}
    \paragraph{\emph{1. Optimizing time complexity}} We combine several different transformations to construct the embedding $S$. Namely, we compute $SA$ by chaining four transformations, $SA=S_3(HD(S_2(S_1A)))$, defined as follows:
    \begin{enumerate}
        \item $S_1$ is an $m_1\times n$ OSNAP matrix with $m_1=O(d^{1+\gamma}\log d)$ rows and $O(1/\gamma)$ non-zeros per column. The cost of computing $S_1A$ is $O(\gamma^{-1}\nnz(A))$.
        \item $S_2$ is an $m_2\times m_1$ OSNAP matrix with $m_2=O(d\log d)$ rows and $O(\log d)$ non-zeros per column. The cost of computing $S_2(S_1A)$ is $O(m_1 d\log d)=O(d^{1+\gamma}\log^2d)$.
        \item $HD$ is an $m_2\times m_2$ randomized Hadamard transform (padded with zeros, if necessary). The cost of computing $HD(S_2S_1A)$ is $O(m_2d\log d)=O(d^2\log^2d)$. 
        \item $S_3$ is our Leverage Score Sparsified embedding from Theorem~\ref{t:less} having $m_3=O(d)$ rows, implemented with all $l_i=m_2/d$ and $O(\log^4d)$ non-zeros per row. The cost of computing $S_3(HDS_2S_1A)$ is $O(m_3d\log^4d)=O(d^2\log^4 d)$.
    \end{enumerate}
    Note that, via Lemma \ref{l:existing}, both $S_1$ and $S_2$ are constant-distortion subspace embeddings. Moreover, randomized Hadamard transform $HD$ ensures that with high probability the leverage scores of $HDS_2S_1A$ are nearly uniform. This allows us to use the leverage score sparsified embedding from Theorem \ref{t:less} in an oblivious way. Overall, the cost of computing $SA$ becomes $O(\gamma^{-1}\nnz(A)+d^{2+\gamma}\log^2 d + d^2\log^4 d)$. We note that, using standard choices for $\gamma$, this complexity can be simplified in the following ways: 
    \begin{itemize}
    \item  $O(\nnz(A) +
      d^{2+\theta})$ for any absolute constant $\theta > 0$, obtained by choosing $\gamma = \theta/2$;
    \item $O(\nnz(A)\log d +
      d^2\log^4 d)$, obtained by choosing $\gamma = 1/\log d$.
    \end{itemize}

    \paragraph{\emph{2. Optimizing random bits}}
    We next present a slightly modified construction that addresses the claim discussed in Remark \ref{r:random-bits}. Namely, we construct a constant-distortion oblivious subspace embedding $S$ that can be generated using only 
$\polylog(n)$ uniform random bits. To do that, we combine three embeddings, $S=S_3S_2S_1$, where:
  \begin{enumerate}
    \item $S_1$ is an OSNAP matrix with $m_1=O(d^{1+\gamma}\log^8(d))$ rows and $O(1/\gamma^3)$
      non-zero entries per column, which requires $O(\log d\log n)$ random bits;
    \item $S_2$ is an OSNAP matrix with $m_2=O(d\,\log^8(d))$ rows and
      $O(\log^3(d))$ non-zero entries per column, which also requires $O(\log d \log n)$ random bits;
    \item $S_3$ is the diagonal construction from Theorem \ref{t:direct} with
      $m_3=O(d)$ rows and $O(\log^4(d))$ non-zeros per column. This requires $O(m_2/m_3\cdot \log^4 d\log n  ) = O(\log^{12} d\log n)$ many random bits.
    \end{enumerate}
    \vspace{3mm}
    The overall cost of computing $SA$ is thus $O(\gamma^{-3}\nnz(A) + d^{2+\gamma}\polylog (d))$, which for any constant $\gamma>0$ matches our optimal construction up to the $\polylog(d)$ term, while requiring only $O(\log^{12} d \log n)$ uniform random bits. 
    
    \subsection{Proof of Theorem \ref{t:fast-epsilon}} Here, we provide the analysis for our fast low-distortion subspace embedding, i.e., with $\epsilon=o(1)$.
   First, we assume that $\nnz(A)\geq n$, by ignoring the rows of $A$ that are
   all zeros. Next, note that we can also assume without loss of generality that
    $d\geq  n^{c}$ for $c=0.1$. To see this, suppose otherwise, i.e., that
    $d<n^{c}$. Then, we can show that:
    \begin{align}\label{eq:wlog}
      O(\nnz(A) + d^4/\epsilon^2) = O(\gamma^{-1}\nnz(A) + d^\omega + \epsilon^{-6}d^{2+\gamma}),
    \end{align}
 where the left-hand side is the cost of applying a CountSketch matrix
$S$ (i.e., OSNAP with one non-zero per column) with embedding dimension $m=O(d^2/\epsilon^2)$, and then
performing the thin eigendecomposition $SA=UDV^\top$ to get
a size $d\times d$ sketch $DV^\top$ that is the desired subspace
embedding. To show \eqref{eq:wlog}, 
we consider two cases: 
\begin{enumerate}
\item Suppose that $\epsilon\geq n^{-1/4}$. Then, we have
  $d^4/\epsilon^2\leq n\leq\nnz(A)$, which implies \eqref{eq:wlog}.
\item Now, suppose otherwise, i.e., $\epsilon<n^{-1/4}$. Then, we get $\epsilon^{-6}d^2\geq
  \epsilon^{-2}nd^2\geq \epsilon^{-2}d^4$, which also implies \eqref{eq:wlog}.
\end{enumerate}

For the rest of the proof, we assume that $d\geq n^c$. The basic idea of our low-distortion embedding is to first use the constant-distortion embedding (from Theorem~\ref{t:fast-ose}) to compute rough approximations of the leverage scores, and then use those leverage scores estimates as input to our Leverage Score Sparsified embedding (from Theorem \ref{t:less}). We start by quoting a result by \cite{chepurko2022near} who show how one can quickly approximate the row norms of matrix $AR$, given $A\in\R^{n\times d}$ and $R\in\R^{d\times d}$, without computing the matrix product $AR$ itself. 
\begin{lemma}[Modified Lemma 7.2 from \cite{chepurko2022near}]\label{l:lev}
Given $A\in\R^{n\times d}$ and $R\in\R^{d\times d}$, we can
compute in $O(\gamma^{-1}(\nnz(A)+d^2))$ time estimates $\tilde l_1,...,\tilde l_n$ such that with probability $\geq 0.9$:
\begin{align*}
n^{-\gamma}\|e_i^\top AR\|_2^2\leq \tilde l_i\leq O(\log(n))\|e_i^\top
AR\|_2^2\quad\forall i\qquad\text{and}\qquad \sum_i\tilde l_i \leq O(1)\cdot \|AR\|_F^2.
\end{align*}
\end{lemma}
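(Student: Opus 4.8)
\emph{Proof plan.} The plan is to estimate the squared row norms of $M:=AR$ by applying a Gaussian Johnson--Lindenstrauss map \emph{on the right}, without ever forming $M$ itself. I would take $G\in\R^{d\times k}$ with i.i.d.\ $\cN(0,1/k)$ entries, where $k=\lceil C/\gamma\rceil$ for a large absolute constant $C$; here one may assume $\gamma\geq 1/\log n$, which holds in all of our applications (where $\gamma=\Omega(1)$, or $\gamma=\Theta(1/\log d)$ together with $d\geq n^{c}$) and is the regime in which the claimed running time is meaningful. First compute $B=RG\in\R^{d\times k}$ in $O(d^2k)$ time, then $Y=AB=ARG\in\R^{n\times k}$ in $O(\nnz(A)\,k)$ time (processing only the at most $\nnz(A)$ nonzero rows of $A$), and output $\tilde l_i=\|e_i^\top Y\|_2^2=\|e_i^\top AR\,G\|_2^2$, setting $\tilde l_i=0$ on the zero rows of $A$. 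Since $k=O(1/\gamma)$, the total cost is $O\big(k(d^2+\nnz(A))\big)=O\big(\gamma^{-1}(\nnz(A)+d^2)\big)$, as required.

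Next I would establish the three bounds. Writing $v_i:=e_i^\top M$, the vector $v_i^\top G\in\R^{k}$ is $\cN\big(0,\tfrac{\|v_i\|_2^2}{k}I_k\big)$, so $\tilde l_i$ has the law $\|v_i\|_2^2\cdot\chi^2_k/k$. For the upper bound, the chi-square upper tail $\Pb[\chi^2_k/k>t]\leq e^{-k(t-1-\log t)/2}$ ($t>1$) with $t=\Theta(\log n)$ is at most $0.03/n$ for every $k\geq 1$, so a union bound over $i\in[n]$ gives $\tilde l_i\leq O(\log n)\,\|v_i\|_2^2$ for all $i$ with probability at least $1-0.03$. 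For the lower bound, the Chernoff lower tail $\Pb[\chi^2_k/k<s]\leq e^{-k(s-1-\log s)/2}$ ($0<s<1$) with $s=n^{-\gamma}$, combined with the elementary estimate $e^{-u}-1+u\geq\min(u/2,u^2/4)$ at $u=\gamma\log n\geq 1$, shows that for $C$ large enough the bound is at most $0.03/n$, so a second union bound gives $\tilde l_i\geq n^{-\gamma}\|v_i\|_2^2$ for all $i$ with probability at least $1-0.03$. Finally $\sum_i\tilde l_i=\|ARG\|_F^2$ is a nonnegative quadratic form in the entries of $G$ with $\E\|ARG\|_F^2=\tr\big(R^\top A^\top AR\,\E[GG^\top]\big)=\|AR\|_F^2$, so Markov's inequality gives $\sum_i\tilde l_i\leq 25\|AR\|_F^2=O(1)\cdot\|AR\|_F^2$ with probability at least $1-0.04$; a union bound over these three events yields the lemma with probability at least $0.9$.

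The hard part is the lower-tail step --- verifying that $O(1/\gamma)$ Gaussian columns really do suffice for the weak lower bound $\tilde l_i\geq n^{-\gamma}\|v_i\|_2^2$ to survive a union bound over all $n$ rows. This hinges on the behaviour of $s-1-\log s$ both near $s=0$ (the regime of large $\gamma\log n$, where $n^{-\gamma}$ is tiny) and near $s=1$ (the regime of small $\gamma\log n$, where $n^{-\gamma}\to 1$ and the tail bound degrades); the latter is precisely where the implicit hypothesis $\gamma=\Omega(1/\log n)$ is needed, since for smaller $\gamma$ the sketch dimension required to certify a $1-o(1)$ lower bound would exceed $1/\gamma$ and break the stated runtime. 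The upper-bound and Frobenius-norm estimates, by contrast, are routine.
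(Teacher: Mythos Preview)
Your proposal is correct and matches the paper's approach essentially line for line: the paper also takes a $d\times k$ Gaussian matrix $G$ (scaled by $1/\sqrt{k}$) with $k=O(\gamma^{-1})$, sets $\tilde l_i=\|e_i^\top ARG\|_2^2$, observes the $O(\gamma^{-1}(\nnz(A)+d^2))$ cost of computing $RG$ and then $A(RG)$, defers the per-coordinate bounds to the cited reference, and argues the aggregate bound $\sum_i\tilde l_i=\|ARG\|_F^2\leq O(1)\cdot\|AR\|_F^2$ via Markov's inequality exactly as you do. Your explicit chi-square tail computations and the observation that the lower-tail step implicitly requires $\gamma=\Omega(1/\log n)$ fill in details the paper leaves to the reader.
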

The idea of the construction of \cite{chepurko2022near} essentially follows the usual JL-type row-norm approximation strategy, e.g., from \cite{drineas2012fast}. But, while the standard approach is to compute constant-factor approximations in $O((\nnz(A)+d^2)\log n)$ time, they observe that we can obtain coarser $O(n^\gamma)$ factor approximations in true input sparsity time. The procedure is as follows. Let $G$ be a $d\times k$ Gaussian matrix, scaled by $1/\sqrt k$, where $k=O(\gamma^{-1})$. Then, we compute:
$$\tilde l_i=\|e_i^\top ARG\|_2^2.$$
Note that these can be computed in
time: $O(\gamma^{-1}d^2)$ for computing $RG$, plus $O(\gamma^{-1}\nnz(A))$
for computing $A(RG)$ and its row norms. These estimates will then with high probability satisfy the guarantee from lemma \ref{l:lev}. Finally, while this was not part of Lemma 7.2 in \cite{chepurko2022near}, we also observe that this construction satisfies an $O(1)$ upper bound, sharper than $O(\log n)$, if we consider the aggregate of all estimates, instead of the individual $\tilde l_i$'s. This holds since $\sum_i\tilde l_i = \|ARG\|_F^2$, which is an unbiased estimate of $\|AR\|_F^2$ so a constant factor approximation holds with probability $0.9$ via Markov's inequality (which can easily be improved to a high-probability bound, e.g., via Bernstein's inequality). We note that in all of our applications, an element-wise lower-bound and an aggregate upper-bound are sufficient.

Now, we are ready to describe our own procedure for constructing a low-distortion subspace embedding which is described in Algorithm \ref{alg:fast-epsilon}.

\begin{algorithm}
\begin{algorithmic}[1]
    \STATE Compute $S_1A$ with $S_1$ from Theorem~\ref{t:fast-ose}
    \STATE Compute $R$ such that  $S_1A =QR^{-1}$ for orthogonal $Q$
    \STATE Compute leverage score estimates
    $\tilde l_i\approx\|e_i^\top AR\|_2^2$ using Lemma  \ref{l:lev}
    \STATE Compute $S$, the Leverage Score
    Sparsified embedding from Theorem~\ref{t:less} using $\tilde
    l_1,...,\tilde l_n$
    \RETURN $S$
  \end{algorithmic}
  \caption{Fast low-distortion subspace embedding.}\label{alg:fast-epsilon}
\end{algorithm}
Note that thanks to the subspace
embedding property of $S_1$, we have:
\begin{align*}
 \|e_i^\top
  AR\|_2^2\approx_{O(1)}\|e_i^\top U\|_2^2=\ell_i(A),
\end{align*}
where $U$ is the
orthonormal basis matrix for $A$'s column span.
Instead of computing $\|e_i^\top
  AR\|_2^2$ directly, we approximate them using Lemma~\ref{l:lev}, to
  obtain $O(n^\gamma)$-approximations $\tilde l_i$. Having computed the $\tilde l_i$'s, we compute the leverage score
sparsified embedding $S$ from Theorem \ref{t:less}, initialized with $\tilde l_i$'s.   Setting $m=O(d/\epsilon^2)$ and using
  $O(n^\gamma\log^4(d)/\epsilon^4)$ non-zeros per row, we
  obtain a subspace embedding with distortion $\epsilon$. The cost
  of computing $SA$ is
  $$O(mdn^\gamma\log^4(d)/\epsilon^4)=O(n^\gamma
  d^2\log^4(d)/\epsilon^6).$$
  Since we assumed that $d\geq n^{c}$, by adjusting $\gamma$ we can
  replace $n^\gamma$ with $d^\gamma$. Also, note that computing $R$ takes $O(d^\omega)$ time and the cost of computing $S_1A$ and $\tilde l_i$'s is bounded by $O(\gamma^{-1}\nnz(A)+d^{2+\gamma}\log^4 d)$. Thus, the overall time complexity of Algorithm \ref{alg:fast-epsilon} is:
  \begin{align*}
      &O(\gamma^{-1}\nnz(A) + d^\omega + \epsilon^{-6}d^{2+\gamma}\log^4(d)).
  \end{align*}

\section{Analysis of Fast Linear Regression}
\label{s:regression-analysis}
In this section, we give proofs for the linear regression results.

\subsection{Proof of Theorem \ref{t:reduction}}
  This result follows immediately from our fast low-distortion embedding result (Theorem \ref{t:fast-epsilon}). Compute $[\tilde A\mid\tilde b]=S[A\mid b]$ where $S$ is our low-distortion
  embedding for $[A\mid b]$. Now, suppose that
  we find $\tilde x$ such that:
  \begin{align*}
    \tilde f(\tilde x)\leq (1+\epsilon)\min_{x\in \mathcal C} \tilde
    f(x),\quad\text{where}\quad \tilde f(x)=\|\tilde Ax-\tilde b\|_2^2+g(x).    
  \end{align*}
  Note that, the subspace embedding property implies that:
  \begin{align*}
    (1+\epsilon)^{-2}f(x)\leq \tilde f(x) \leq (1+\epsilon)^2 f(x),\quad
    \forall x\in \mathcal C.
  \end{align*}
  Then, defining $x^*=\argmin_{x\in\mathcal C} f(x)$, our estimate $\tilde x$ satisfies:
  \begin{align*}
    f(\tilde x)
    &\leq (1+\epsilon)^2 \tilde f(\tilde x)
    \leq (1+\epsilon)^4 \min_{x\in\mathcal C} \tilde f(x)
    \\
    &\leq (1+\epsilon)^4 \tilde f(x^*)\leq (1+\epsilon)^6f(x^*)\leq (1+10\epsilon)f(x^*)
  \end{align*}
  Adjusting the $\epsilon$ by a constant factor concludes the reduction.

\subsection{Proof of Theorem \ref{t:least-squares}}

Part 1 of Theorem \ref{t:least-squares} follows immediately from our fast oblivious subspace embedding result, Theorem \ref{t:fast-ose}, combined with the above reduction result, Theorem \ref{t:reduction}. Crucially, the algorithm relies on the fact that our embedding is oblivious, which allows us to construct the sketches  $SA$ and $Sb$ in a single pass over the data, and using the optimal $O(d^2\log(nd))$ bits of space.

Next, we move on to part 2 of Theorem \ref{t:least-squares}, i.e., computing a $1+\epsilon$ approximation of the least squares
solution, $x^* = \argmin_x\|Ax-b\|_2$ (not necessarily in a single pass).
We propose the following algorithm, which is a variant of 
preconditioned mini-batch stochastic gradient descent with $O(d^{1+\gamma})$ mini-batch size, using our
constant-distortion OSE (Theorem \ref{t:fast-ose}) for initialization and preconditioning. We note that this is a different approach than the one proposed by \cite{cherapanamjeri2023optimal}, who run preconditioned gradient descent on an $O(d\polylog(d)/\epsilon)$ sketch, resulting in an additional $O(\epsilon^{-1}\log(1/\epsilon)d^2\polylog(d))$ cost.
\begin{algorithm}
  \begin{algorithmic}[1]
    \STATE Compute $SA$ and $Sb$ with $S$ from Theorem~\ref{t:fast-ose}
    \STATE Compute $R$ such that  $SA =QR^{-1}$ for orthogonal $Q$
    \STATE Compute leverage score estimates
    $\tilde l_i$ using Lemma  \ref{l:lev}
    \STATE Compute $x_0 = \argmin_x\|SAx-Sb\|_2$
    \FOR{$t=0$ to $T-1$}
    \STATE Construct a sub-sampling matrix $S_t$ with $k$ rows using distribution
    $(p_1,...,p_n)$  where $p_i = \tilde l_i/\sum_i\tilde l_i$
    \STATE Compute $g_t \leftarrow 2(S_tA)^\top (S_tA x_t- S_tb)$
    \STATE Compute $x_{t+1}\leftarrow x_t - \eta_t RR^\top g_t$ 
    \ENDFOR
    \RETURN{$x_T$}
  \end{algorithmic}
  \caption{Fast least squares via preconditioned mini-batch SGD}
  \label{alg:least-squares} 
\end{algorithm}

To establish the result, we show the following convergence
guarantee for Algorithm \ref{alg:least-squares}. The result is potentially of independent interest, so we state it here in a general form using any mini-batch size $k$, and then later specify the right choice for our setting.
  \begin{lemma}\label{l:sgd}
    Given $A\in \R^{n\times d}$ and $b\in\R^n$, suppose that  the sub-sampling probabilities satisfy:
    \begin{align*}
      p_i\geq \ell_i(A)/(\alpha d)\qquad\forall_i.
    \end{align*}
    Then, there is an absolute constant $c>0$ such that, letting $\eta_t:= \frac{\beta}{1+\beta t/8}$ for $\beta =
    \frac{k/8}{k+4\alpha d}$, Algorithm \ref{alg:least-squares}
    satisfies:
    \begin{align*}
      \E[f(x_t) - f(x^*)] \leq \frac{f(x_0)}{1+kt/(c\alpha
      d)}\qquad\forall_{t\geq 1}. 
    \end{align*}
  \end{lemma}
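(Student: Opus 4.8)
The plan is to analyze Algorithm \ref{alg:least-squares} as preconditioned stochastic gradient descent on the strongly convex quadratic $f(x) = \|Ax - b\|_2^2$, where the preconditioner $P := RR^\top$ satisfies $R^{-\top}(A^\top A) R^{-1} = Q^\top Q = I$ by construction of $R$ (since $SA = QR^{-1}$ with $Q$ orthogonal), so that $P^{1/2}(A^\top A)P^{1/2}$ has all eigenvalues within a constant factor of $1$ by the subspace embedding property of $S$ from Theorem~\ref{t:fast-ose}. The natural coordinates in which to work are $y = R^{-1}x$; in these coordinates $f$ becomes a quadratic with Hessian $2 R^\top A^\top A R$ whose eigenvalues lie in $[\tfrac12, 2]$ say (after rescaling constants), i.e. $f$ is $\Omega(1)$-strongly convex and $O(1)$-smooth in $y$. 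First I would compute the stochastic gradient: $g_t = 2(S_tA)^\top(S_tA x_t - S_t b)$ is an unbiased estimate of $\nabla f(x_t) = 2A^\top(Ax_t - b)$, since $\E[(S_tA)^\top S_t(\cdot)] = A^\top(\cdot)$ when $S_t$ samples row $i$ with probability $p_i$ and rescales by $1/\sqrt{kp_i}$ (the standard importance-sampling sketch identity). The update $x_{t+1} = x_t - \eta_t P g_t$ is then SGD with preconditioner $P$, equivalently plain SGD in the $y$-coordinates with step $\eta_t$.

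The second step is the key variance bound: in the $y$-coordinates I would show that the second moment of the stochastic gradient satisfies an inequality of the form
\begin{align*}
\E\big[\|\tilde g_t\|_2^2 \;\big|\; x_t\big] \;\leq\; \Big(1 + \tfrac{c'\alpha d}{k}\Big)\,\|\nabla_y f(y_t)\|_2^2 \;+\; \tfrac{c'\alpha d}{k}\,\big(f(x_t) - f(x^*)\big)
\end{align*}
or, more cleanly, the "expected smoothness" / strong-growth-type bound $\E\|\tilde g_t\|^2 \le 4L_{\mathrm{batch}}\,(f(x_t)-f(x^*))$ with effective smoothness $L_{\mathrm{batch}} = O(1 + \alpha d/k)$. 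This is where the hypothesis $p_i \ge \ell_i(A)/(\alpha d)$ enters: for a single sampled row $i$, the rescaled gradient contribution has squared norm $\tfrac{1}{k p_i}\|a_i\|^2 (a_i^\top x_t - b_i)^2 \le \tfrac{\alpha d}{k}\cdot\tfrac{\|a_i\|^2}{\ell_i(A)}\cdot(\text{residual}_i)^2$, and $\|a_i\|^2/\ell_i(A) = \|a_i\|^2/\|e_i^\top U\|^2$ is controlled because in $y$-coordinates the rows of $AR$ have squared norms equal (up to the embedding constant) to the leverage scores; averaging over the $k$ independent samples in the mini-batch gives the $1 + O(\alpha d/k)$ factor, using that $\sum_i(\text{residual}_i)^2 = f(x_t) = \|Ax_t-b\|^2$ and that the minimum-residual $f(x^*)$ part is orthogonal to the column span and hence contributes to the "additive" term while the rest is bounded by $\|\nabla f\|^2$ via smoothness. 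I expect assembling this bound cleanly — in particular correctly separating the $f(x_t)-f(x^*)$ term from the gradient-norm term, since at $x^*$ the residual is nonzero — to be the main obstacle.

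Given the variance bound, the third step is a standard SGD recursion for strongly convex, smooth objectives with a decaying step size. Writing $\Delta_t := \E[f(x_t) - f(x^*)]$, the descent lemma plus the variance bound yields a recursion of the shape
\begin{align*}
\Delta_{t+1} \;\leq\; (1 - c_1\eta_t)\,\Delta_t \;+\; c_2\,\eta_t^2\,L_{\mathrm{batch}}\,\Delta_t \;=\; \big(1 - c_1\eta_t + c_2 L_{\mathrm{batch}}\eta_t^2\big)\,\Delta_t,
\end{align*}
and choosing $\eta_t$ of order $\beta/(1+\beta t/8)$ with $\beta = \Theta(k/(k+\alpha d)) = \Theta(1/L_{\mathrm{batch}})$ makes the per-step contraction factor roughly $1 - \Theta(\eta_t)$, so that unrolling gives $\Delta_t \le \Delta_0 \cdot \prod_{s<t}(1 - \Theta(\eta_s)) = O\big(\Delta_0/(1 + kt/(c\alpha d))\big)$ by the telescoping estimate $\prod_{s<t}(1-a/(1+as/8)) = O(1/(1+at/8))$. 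I would verify the constants in $\eta_t$ and $\beta$ match exactly what is needed for the stated bound $\Delta_t \le f(x_0)/(1 + kt/(c\alpha d))$, absorbing $f(x_0) = \Delta_0 + (f(x^*)$-adjustment$)$ into the constant $c$; note that since $f(x^*) \ge 0$ and $f(x_0) \ge f(x^*)$, we have $\Delta_0 \le f(x_0)$, so the final bound can be stated in terms of $f(x_0)$ directly. An induction on $t$ using the recursion is the cleanest way to close this out.
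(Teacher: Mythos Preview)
Your overall strategy matches the paper's, but there is a genuine gap in the variance step. Both forms you propose,
\[
\E\big[\|\tilde g_t\|^2 \mid x_t\big] \leq \Big(1 + \tfrac{c'\alpha d}{k}\Big)\|\nabla_y f(y_t)\|^2 + \tfrac{c'\alpha d}{k}\big(f(x_t) - f(x^*)\big)
\quad\text{and}\quad
\E\|\tilde g_t\|^2 \le 4L_{\mathrm{batch}}\big(f(x_t)-f(x^*)\big),
\]
vanish at $x_t = x^*$. But the stochastic gradient $g_t = 2(S_tA)^\top S_t(Ax^*-b)$ does \emph{not} have zero variance at the optimum: the residual $r^* = Ax^*-b$ is nonzero and $S_t$ subsamples it, so $(S_tA)^\top S_tr^*$ fluctuates. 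The strong-growth/interpolation bound you invoke holds only when $f(x^*)=0$. The bound that actually holds (and that the paper proves) is
\[
\E\big[\|M(g_t - \nabla f(x_t))\|_\Sigma^2 \,\big|\, x_t\big] \;\leq\; \frac{4\alpha d}{k}\,f(x_t),
\]
with $f(x_t)$ rather than $f(x_t)-f(x^*)$ on the right. Consequently the one-step recursion is not purely multiplicative as you wrote, but carries an \emph{additive} noise floor:
\[
\E\big[f(x_{t+1})-f(x^*)\big] \;\leq\; \Big(1-\tfrac{\eta_t}{4}\Big)\big(f(x_t)-f(x^*)\big) \;+\; \eta_t^2\,\tfrac{2\alpha d}{k}\,f(x^*).
\]
This is exactly why a decaying step size is needed; with your multiplicative recursion a \emph{constant} step $\eta = \Theta(1/L_{\mathrm{batch}})$ would already give geometric convergence, which is too strong for non-interpolating least squares. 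The paper's induction controls the additive term via $f(x^*) \leq f(x_0)$, and this is what forces the numerator of the final bound to be $f(x_0)$ rather than $f(x_0)-f(x^*)$ --- not a cosmetic ``absorbing'' as you suggest, but an essential feature of the additive-noise analysis. You correctly flagged the separation of the residual as the main obstacle; the resolution is not to attempt that separation in the variance bound at all, and instead to carry the resulting $f(x^*)$ term through the induction.
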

  \begin{proof}
      Thanks to the subspace
  embedding property of $S$, we know that with high probability the
  condition number of $AR$ satisfies
  $\kappa(AR)^2\leq 2$. Consider the quadratic objective function $f(x) =
  \|Ax-b\|_2^2$. We start by showing a bound on   
  the variance of the stochastic gradient $g_t$ (which is an unbiased estimator of $\nabla f(x_t)$), as measured via $\E[\|M(g_t - \nabla f(x_t))\|_{\Sigma}^2]$. Here, we use
  the notation $\|x\|_\Sigma=\sqrt{x^\top \Sigma x}$, letting
  $\Sigma = A^\top A$ and $M = RR^\top$.

First, we use the fact that $R$ is a good preconditioner for
$A$:
\begin{align*}
  \E\big[\|M(g_t - \nabla f(x_t))\|_{\Sigma}^2\big]
  &=
  \E\big[\|\Sigma^{1/2} M\Sigma^{1/2}\Sigma^{-1/2}(g_t - \nabla
    f(x_t))\|_2^2
  \\
  &\leq 4 \,\E\big[\|\Sigma^{-1/2}(g_t-\nabla f(x_t))\|_2^2\big].
\end{align*}
Next, denoting $r_t = 2(Ax_t-b)$ and $U = A \Sigma^{-1/2}$, using the
approximation guarantee for the leverage score estimate, we obtain:
\begin{align*}
  \E\big[\|\Sigma^{-1/2}(g_t-\nabla f(x_t))\|_2^2\big]
  &=\E\big[\|U^\top S_t^\top S_t r_t - U^\top r_t\|_2^2\big]
  \\
  &=\frac1k \E\Big[\frac1{p_I^2}\|u_I(a_I^\top x_t-b_I)-U^\top
    r_t\big\|_2^2\Big]
  \\
  &\leq \frac1k \E\Big[\frac1{p_I^2}\|u_I\|_2^2(a_I^\top
    x_t-b_I)^2\Big]
  \\
  &\leq \frac1k\E\Big[\frac{\alpha d}{p_I}(a_I^\top x_t-b_I)^2\Big]
= \frac{\alpha d} k \|Ax_t-b\|_2^2.
\end{align*}
We next incorporate this bound on the stochastic gradient noise into
the SGD analysis to establish a one-step expected convergence
bound. Below, we use that $\E[g_t]=\nabla f(x_t) = 2A^\top(Ax_t-b) =
2A^\top A(x_t-x^*)$.
\begin{align*}
    \E\big[\|x_{t+1}-x^*\|_{\Sigma}^2\big]
  &= \|x_t-x^*\|_\Sigma^2- \eta_t(x_t-x^*)^\top\Sigma M\nabla f(x_t) +
    \frac{\eta_t^2}{2}\E\big[\|Mg_t\|_\Sigma^2\big]
  \\
  &= \|x_t-x^*\|_\Sigma^2- \eta_t\|x_t-x^*\|_{\Sigma M\Sigma}^2+
    \frac{\eta_t^2}{2}\E\big[\|Mg_t\|_\Sigma^2\big]
  \\
  &\leq \Big(1-\frac{\eta_t}{2}\Big) \|x_t-x^*\|_\Sigma^2 +
    \frac{\eta_t^2}{2}\E\big[\|Mg_t\|_\Sigma^2\big] .
\end{align*}
Next, we decompose the last term as follows:
\begin{align*}
  \E\big[\| Mg_t\|_\Sigma^2\big]
  &= \|M\nabla f(x_t)\|_\Sigma^2 +
    \E\big[\|M(g_t-\nabla f(x_t))\|_\Sigma^2\big]
  \\
  &\leq \|x_t-x^*\|_{\Sigma M\Sigma M\Sigma}^2 + \frac{4\alpha
    d}{k}f(x_t)
  \\
  &\leq 4\|x_t-x^*\|_\Sigma^2 + \frac{4\alpha d}{k}f(x_t).
\end{align*}
Putting this together, for $\eta_t\leq \min\{\frac14,\frac{k}{8\alpha
  d}\}$ we get:
\begin{align*}
  \E\big[f(x_{t+1})-f(x^*)\big] &\leq
  \Big(1-\eta_t+\frac{4\eta_t^2}{2}\Big)\big[f(x_t)-f(x^*)\big]+
  \frac{\eta_t^2}{2}\frac{4\alpha d}{k}f(x_t)
  \\
  &\leq \Big(1-\frac{\eta_t}{2}+\eta_t^2\frac{2\alpha
    d}{k}\Big)\big[f(x_t)-f(x^*)\big]+\eta_t^2\frac{2\alpha d}{k}f(x^*)
  \\
  &\leq \Big(1-\frac{\eta_t}{4}\Big) \big[f(x_t)-f(x^*)\big]+\eta_t^2\frac{2\alpha d}{k}f(x^*).
\end{align*}
Now, we proceed by induction. Let $\eta_t = \frac{\beta}{u_t}$, where
$u_t=1+\beta t/8$ and $\beta = \frac{k/4}{k+4\alpha d}$. Also, define
$G = 2\alpha d/k$. Our inductive hypothesis is:
\begin{align*}
  \E\big[f(x_t) - f(x^*)\big]\leq \frac{ f(x_0)}{u_t} =
  \frac{f(x_0)}{1+ \frac{k/32}{k+4\alpha d}\,t}.
\end{align*}
 The base case, $t=0$, is obvious since $u_0=1$. Now, suppose that the
 hypothesis holds for some $t$. Then, using that $4\beta G\leq 1/2$, we have:
 \begin{align*}
   \E\big[f(x_{t+1}) - f(x^*)\big]
   &\leq
   \Big(1-\frac{\beta}{4u_t}\Big)\E\big[f(x_t)-f(x^*)\big]
   +\frac{\beta^2}{u_t^2}G f(x^*)
   \\
   &\leq
     \Big(1-\frac{\beta}{4u_t}\Big)\frac{f(x_0)}{u_t}+\frac{\beta^2}{u_t^2}G
     f(x_0)
   \\
   &= \Big(1-\frac{\beta}{4u_t}(1-4\beta G)\Big) \frac{f(x_0)}{u_t}
   \\
   &\leq \Big(1-\frac{\beta}{8u_t}\Big)  \frac{f(x_0)}{u_t} =
     \frac{8u_t-\beta}{8u_t}\,\frac1{u_t}\,f(x_0)
   \\
   &\leq \frac{8}{8u_t+\beta} f(x_0) = \frac{f(x_0)}{u_t+\beta/8}=\frac{f(x_0)}{u_{t+1}}.
 \end{align*}
 This concludes the proof of Lemma \ref{l:sgd}.
  \end{proof}

 Now, to finalize the proof of Theorem \ref{t:least-squares} part 2, we set $\alpha
 = O(n^\gamma)$ and $k=\alpha d$, so that $\E[f(x_t)-f(x^*)]\leq
 f(x_0)/t$. Note that we initialized $x_0$ so that $f(x_0)=O(1)\cdot f(x^*)$. Thus, the complexity of the mini-batch SGD (excluding the preprocessing) is $O(kd/\epsilon) = O(n^\gamma d^{2}/\epsilon)$. Now, including the $O(\gamma^{-1}\nnz(A) + d^\omega)$ time for computing $R$ and the leverage score estimates, we obtain the overall complexity of:
 \begin{align*}
     O(\gamma^{-1}\nnz(A) + d^\omega + n^\gamma d^2/\epsilon).
 \end{align*}
 Here, again, we can assume without loss of generality that
$\nnz(A)\geq n$ and that $d\geq n^c$ for $c=0.1$, which gives the final claim.

\printbibliography
\end{document}